\title{
\vspace*{-2.5cm}
Faster Algorithms for Multivariate Interpolation with Multiplicities and Simultaneous Polynomial Approximations
\footnote{The material in this paper was presented in part at the 10th Asian Symposium on Computer Mathematics
(ASCM), Beijing, China, October 2012 and at SIAM Conference on Applied Algebraic Geometry, Fort Collins, Colorado, USA, August 2013.}}
\author{
Muhammad F. I. Chowdhury\footnote{Computer Science Department, University of Western Ontario, London ON, Canada.}\,, 
Claude-Pierre Jeannerod\footnote{Inria, Laboratoire LIP (CNRS, ENS de Lyon, Inria, UCBL), Universit\'e de Lyon, France.}\,, \\ 
Vincent Neiger\footnote{ENS de Lyon, Laboratoire LIP (CNRS, ENS de Lyon, Inria, UCBL), Universit\'e de Lyon, France; 
						Computer Science Department, University of Western Ontario, London ON, Canada.}\,, 
\'Eric Schost\footnote{Computer Science Department, University of Western Ontario, London ON, Canada.}\,, 
Gilles Villard\footnote{CNRS, Laboratoire LIP (CNRS, ENS de Lyon, Inria, UCBL), Universit\'e de Lyon, France.}
}
\date{\small\today}
\newcommand{\M}{\ensuremath{\mathsf{M}} }
\newcommand{\K}{\mathbb{K}}
\newcommand{\extK}{\mathbb{L}}
\newcommand{\Z}{\mathbb{Z}}
\newcommand{\Zp}{\mathbb{Z}_{>0}}
\newcommand{\Znn}{\mathbb{Z}_{\geqslant 0}}
\newcommand{\A}{\mathbb{A}}
\newcommand{\ZZ}{\mathcal{Z}}
\newcommand{\OO}{\mathcal{O}}
\newcommand{\PP}{\mathcal{P}}
\DeclareBoldMathCommand{\bi}{i}
\DeclareBoldMathCommand{\bii}{\overline{\imath}}
\DeclareBoldMathCommand{\bj}{j}
\DeclareBoldMathCommand{\bjj}{\overline{\jmath}}
\DeclareBoldMathCommand{\bJ}{J}
\DeclareBoldMathCommand{\bk}{k}
\DeclareBoldMathCommand{\bm}{m}
\DeclareBoldMathCommand{\ba}{a}
\DeclareBoldMathCommand{\bY}{Y}
\DeclareBoldMathCommand{\by}{y}
\DeclareBoldMathCommand{\bL}{L}
\DeclareBoldMathCommand{\bR}{R}
\DeclareBoldMathCommand{\bF}{F}
\DeclareBoldMathCommand{\bN}{N}
\DeclareBoldMathCommand{\bM}{M}
\DeclareBoldMathCommand{\bQ}{Q}
\DeclareBoldMathCommand{\bzero}{0}
\newcommand{\wdeg}{\mathrm{wdeg}}
\newcommand{\sbi}{\ensuremath{{|\bi|}} }
\newcommand{\sbj}{\ensuremath{{|\bj|}} }
\newcommand{\sbJ}{\ensuremath{{|\bJ|}} }
\renewcommand{\le}{\leqslant}
\renewcommand{\ge}{\geqslant}
\newcommand{\softO}[1]{\mathcal{O}\tilde{~}(#1)}
\newcommand{\hyp}[1]{\textbf{H\textsubscript{#1}}}
\theoremstyle{definition}
\newtheorem{dfn}{Definition}
\newtheorem{pbm}{Problem}
\newtheorem{algo}{Algorithm}
\theoremstyle{plain}
\newtheorem{thm}[dfn]{Theorem}
\newtheorem{cor}[dfn]{Corollary}
\newtheorem{prop}[dfn]{Proposition}
\newtheorem{lem}[dfn]{Lemma}
\begin{document}

\maketitle
\vspace*{-1cm}
\begin{abstract}
  The interpolation step in the Guruswami-Sudan algorithm 
  is a bivariate interpolation problem with multiplicities
  commonly solved in the literature using either structured linear 
  algebra or basis reduction of polynomial lattices. 
  This problem has been extended to three or more variables;
  for this generalization, all fast algorithms proposed so far 
  rely on the lattice approach.
  In this paper, we reduce this multivariate interpolation problem 
  to a problem of simultaneous polynomial approximations, which we 
  solve using fast structured linear algebra.
  This improves the best
  known complexity bounds for the interpolation step of the
  list-decoding of Reed-Solomon codes, Parvaresh-Vardy codes, and folded
  Reed-Solomon codes. In particular, for Reed-Solomon list-decoding
  with re-encoding, our approach has complexity 
  $\mathcal{O}\tilde{~}(\ell^{\omega-1}m^2(n-k))$,
  where $\ell,m,n,k$ are the list size, the multiplicity, the number
  of sample points and the dimension of the code, and $\omega$ is the 
  exponent of linear algebra; this accelerates the previously fastest
  known algorithm by a factor of $\ell / m$.
\end{abstract}

\section{Introduction}
\label{sec:introduction}

\paragraph{Problems.}
In this paper, we consider a multivariate interpolation problem 
with multiplicities and degree constraints 
(Problem~\ref{pbm:multivariate_interpolation} below) which
originates from coding theory. In what follows, $\K$ is our base field
and, in the coding theory context, $s, \ell, n, b$ are
respectively known as the \emph{number of variables}, \emph{list
size}, \emph{code length}, and as an {\em agreement parameter}. 
The parameters $m_1,\ldots,m_n$ are known as \emph{multiplicities}
associated with each of the $n$ points; furthermore, the $s$ 
variables are associated with some \emph{weights} $k_1,\ldots,k_s$.
In the application to list-decoding of Reed-Solomon codes, we have $s=1$,
all the multiplicities are equal to a same value $m$, $n-b/m$ is an 
upper bound on the number of errors allowed on a received word,
and the weight $k := k_1$ is such that $k+1$ is 
the \emph{dimension} of the code.
Further details concerning the applications of our results 
to list-decoding and soft-decoding of Reed-Solomon codes 
are given in Section~\ref{sec:rs_codes}.

We stress that here we do not address the issue of choosing the
parameters $s,\ell,m_1,\ldots,m_n$ with respect to $n,b,k_1,\ldots,k_s$, 
as is often done:
in our context, these are all input parameters. Similarly, although we
will mention them, we do not make some usual assumptions on these
parameters; in particular, we do not make any assumption that 
ensures that our problem admits a solution: the algorithm will
detect whether no solution exists.

Here and hereafter, $\Z$ is the set of integers, $\Znn$ the set
of nonnegative integers, and $\Zp$ the set of positive integers.
Besides, $\deg_{Y_1,\ldots,Y_s}$ denotes the total degree with respect 
to the variables $Y_1,\ldots,Y_s$, and $\wdeg_{k_1,\ldots,k_s}$ denotes 
the weighted-degree with respect to weights $k_1,\ldots,k_s \in \Z$ 
on variables $Y_1,\ldots,Y_s$, respectively; 
that is, for a polynomial $Q =
\sum_{(j_1,\ldots,j_s)} Q_{j_1,\ldots,j_s}(X) Y_1^{j_1} \cdots Y_s^{j_s}$,
\[\wdeg_{k_1,\ldots,k_s}(Q) \;=\; \max_{j_1,\ldots,j_s} \big(\deg(Q_{j_1,\ldots,j_s}) 
\:+\: j_1k_1 + \cdots + j_sk_s\big).\]

\begin{center}
\fbox{\begin{minipage}{14.8cm}
\begin{pbm} \textsc{MultivariateInterpolation}
  \label{pbm:multivariate_interpolation}

\medskip
  \noindent
  \emph{Input:} 
  $s,\ell,n,m_1,\ldots,m_n$ in $\Zp$, 
  $b,k_1,\ldots,k_s$ in $\Z$ and  points \\
  \hspace*{1.15cm} $\{(x_i,y_{i,1},\ldots,y_{i,s})\}_{1\le i\le n}$ 
  in $\K^{s+1}$ with the $x_i$ pairwise distinct.

  \medskip
  \noindent
  \emph{Output:}
  a polynomial $Q$ in $\K[X,Y_1,\ldots,Y_s]$ such that
  \begin{enumerate}[{\em (i)}]
  \item\label{cond:i} $Q$ is nonzero,
  \item\label{cond:ii} $\deg_{Y_1,\ldots,Y_s}(Q) \,\le  \, \ell$,
  \item\label{cond:iii} $\wdeg_{k_1,\ldots,k_s}(Q) \,<\, b$,
  \item\label{cond:iv} for $1 \le i \le n$,
    $Q(x_i,y_{i,1},\ldots,y_{i,s}) = 0$ with multiplicity at least $m_i$.
  \end{enumerate}
\end{pbm}
\end{minipage}}
\end{center}

\smallskip
We call conditions~\eqref{cond:ii},~\eqref{cond:iii},
and~\eqref{cond:iv} the \emph{list-size} condition, the
\emph{weighted-degree} condition, and the \emph{vanishing} condition,
respectively.  Note that a point
$(x,y_1,\ldots,y_s)$ is a zero of $Q$ of {\em multiplicity at
  least} $m$ if 
the shifted polynomial $Q(X+x, Y_1+y_1, \ldots, Y_s+y_s)$ 
has no monomial of total degree less than $m$;
in characteristic zero or larger than $m$, this is equivalent to 
requiring that all the derivatives of $Q$ of order up to $m-1$ 
vanish at $(x,y_1,\ldots,y_s)$.

By linearizing condition~\eqref{cond:iv} under the assumption that 
conditions~\eqref{cond:ii} and~\eqref{cond:iii} are satisfied, 
it is easily seen that solving Problem~\ref{pbm:multivariate_interpolation}
amounts to computing a nonzero solution to an $M\times N$ 
homogeneous linear system over $\K$.
Here, the number $M$ of equations derives from condition~\eqref{cond:iv} and
thus depends on $s$, $n$, $m_1,\ldots,m_n$, while the number $N$ of unknowns derives from
conditions~\eqref{cond:ii} and~\eqref{cond:iii} and thus depends on 
$s$, $\ell$, $b$, $k_1,\ldots,k_s$.
It is customary to assume $M < N$
in order to guarantee the existence of a nonzero solution; 
however, as said above, we do not make this assumption, 
since our algorithms do not require it.

Problem~\ref{pbm:multivariate_interpolation} is a generalization
of the interpolation step of the Guruswami-Sudan 
algorithm~\cite{Sudan97,GurSud99}
to $s$ variables $Y_1,\dots,Y_s$, distinct multiplicities, and distinct 
weights. The multivariate case $s > 1$ occurs for instance in Parvaresh-Vardy 
codes~\cite{PaVa05} or folded Reed-Solomon codes~\cite{GuRu08}.
Distinct multiplicities occur for instance in the interpolation step 
in soft-decoding of Reed-Solomon codes~\cite{KoeVar03a}.
We note that this last problem is different from our context since 
the $x_i$ are not necessarily pairwise distinct; we briefly explain 
in Section~\ref{subsec:softdec} how to deal with this case.

Our solution to Problem~\ref{pbm:multivariate_interpolation} relies on
a reduction to a simultaneous approximation problem (Problem~\ref{pbm:poly_approx} below)
which generalizes Pad\'e and Hermite-Pad\'e approximation.

\begin{center}
\fbox{\begin{minipage}{14.8cm}
\begin{pbm} \textsc{SimultaneousPolynomialApproximations} 
\label{pbm:poly_approx}

\medskip
\noindent
\emph{Input:} 
 $\mu,\ \nu$, $M'_0,\dots,M'_{\mu-1}$,
 $N'_0,\dots,N'_{\nu-1}$ in $\Zp$ and polynomial tuples \\
 \hspace*{1.15cm} $\{(P_i,F_{i,0},\ldots,F_{i,\nu-1})\}_{0 \le i < \mu}$ in $\K[X]^{\nu+1}$
 such that for all $i$, $P_i$ is monic of \\
 \hspace*{1.15cm} degree $M'_i$ and $\deg(F_{i,j}) < M'_i$ for all $j$. 

  \medskip
  \noindent

\medskip\noindent
\emph{Output:}
polynomials $Q_0,\ldots,Q_{\nu-1}$ in $\K[X]$ satisfying the following conditions:

\begin{enumerate}[$(a)$]
\item\label{cond:a} the $Q_j$ are not all zero,
\item\label{cond:b} for $0 \le j<\nu$, $\deg(Q_j) < N'_j,$
\item\label{cond:c} for $0 \le i<\mu$, $\sum_{0 \le j <\nu} F_{i,j} Q_j =  0 \bmod{P_i}$.
\end{enumerate}
\end{pbm}
\end{minipage}}
\end{center}
\smallskip

\paragraph{Main complexity results and applications.}
We first show in Section~\ref{sec:red} how to reduce 
Problem~\ref{pbm:multivariate_interpolation} 
to Problem~\ref{pbm:poly_approx} 
efficiently via a generalization of the techniques introduced by Zeh, Gentner, and Augot~\cite{ZeGeAu11} 
and Zeh~\cite[Section 5.1.1]{Zeh13} for, respectively, the list-decoding and soft-decoding of Reed-Solomon codes.

Then, in Section~\ref{sec:solutions} we present two algorithms for solving Problem~\ref{pbm:poly_approx}. 
Each of them involves a linearization of the univariate equations~\eqref{cond:iiip} into a
specific homogeneous linear system over $\K$; if we define
$$M' = \sum_{0 \le i < \mu} M'_i \quad\text{and}\quad N' = \sum_{0 \le
  j < \nu} N'_j,$$ then both systems have $M'$ equations in $N'$
unknowns. (As for our first problem, we need not assume that $M' < N'$.)
Furthermore, the structure of these systems allows us to solve them 
efficiently using the algorithm of Bostan, Jeannerod, and Schost in~\cite{BoJeSc08}. 

Our first algorithm,
detailed in Section~\ref{subsec:key_equations_version}, 
solves Problem~\ref{pbm:poly_approx} by following the
derivation of so-called extended key equations (EKE), initially
introduced for the particular case of Problem~\ref{pbm:multivariate_interpolation}
by Roth and Ruckenstein~\cite{RotRuc00} when $s=m=1$ and then by Zeh, 
Gentner, and Augot~\cite{ZeGeAu11} when $s=1$ and $m\ge 1$;
the matrix of
the system is mosaic-Hankel. In our second algorithm, detailed in
Section~\ref{subsec:direct}, the linear system is more directly
obtained from condition~\eqref{cond:iiip}, without resorting to EKEs,
and has Toeplitz-like structure.

Both points of view lead to the same complexity result, stated in
Theorem~\ref{thm:poly_approx_complexity} below, which says that
Problem~\ref{pbm:poly_approx} can be solved in time quasi-linear in
$M'$, multiplied by a subquadratic term in $\rho=\max(\mu,\nu)$. In
the following theorems, and the rest of this paper, the soft-O
notation $\mathcal{O}\tilde{~}(\ )$ indicates that we omit
polylogarithmic terms. The exponent $\omega$ is so that we can
multiply $n \times n$ matrices in $\mathcal{O} ( n^{\omega})$ ring
operations on any ring, the best known bound being $\omega
< 2.38$~\cite{CoWi90,St10,Williams12,LeGall14}. Finally, the function $\M$
is a {\em multiplication time} function for~$\K[X]$: $\M$ is such that
polynomials of degree at most $d$ in $\K[X]$ can be multiplied in
$\M(d)$ operations in $\K$, and satisfies the
super-linearity properties of~\cite[Ch.~8]{vzGathen03}. 
It follows from the algorithm of Cantor and Kaltofen~\cite{CaKa91}
that $\M(d)$ can be taken in $\mathcal{O}(d \log(d)\log\log(d)) \subseteq \mathcal{O}\tilde{~}(d)$.

Combining Theorem~\ref{thm:poly_approx_complexity} below with
the above-mentioned reduction from 
Problem~\ref{pbm:multivariate_interpolation} to 
Problem~\ref{pbm:poly_approx},
we immediately deduce the following cost bound for
Problem~\ref{pbm:multivariate_interpolation}.
\begin{thm}
  \label{thm:complexity}
  Let 
  \[ \Gamma = \big\{(j_1,\ldots,j_s)\in\Znn^s \,\,\mid\,\, j_1+\cdots+j_s\le\ell 
  \quad\text{and}\quad j_1k_1+\cdots +j_sk_s<b \big\}, \]
  and let $m = \max_{1\le i \le n} m_i$, $\varrho=\max\big(|\Gamma|,\binom{s+m-1}{s}\big)$ 
  and $M = \sum_{1\le i\le n} \binom{s+m_i}{s+1}$.
  There exists a probabilistic algorithm that either computes a
  solution to Problem~\ref{pbm:multivariate_interpolation}, or
  determines that none exists, using
  \[ \mathcal{O}\big(\varrho^{\omega-1}\M(M)\log(M)^2\big) \subseteq
  \mathcal{O}\tilde{~}(\varrho^{\omega-1} M)\] operations in $\K$.
  This can be achieved using Algorithm~\ref{algo:reduction} 
  in Section~\ref{sec:red} followed by 
  Algorithm~\ref{algo:interpolation_eke} 
  or~\ref{algo:interpolation_direct} in Section~\ref{sec:solutions}.
  These algorithms choose $\mathcal{O}(M)$ elements in $\K$; if these
  elements are chosen uniformly at random in a set $S \subseteq \K$ of
  cardinality at least $6(M+1)^2$, then the probability of~success is at
  least $1/2$.
\end{thm}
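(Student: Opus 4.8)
The plan is to prove Theorem~\ref{thm:complexity} by composing the two reductions already announced in the excerpt: first reduce Problem~\ref{pbm:multivariate_interpolation} to Problem~\ref{pbm:poly_approx} via Algorithm~\ref{algo:reduction}, then invoke the cost bound of Theorem~\ref{thm:poly_approx_complexity} for the latter, and finally translate the parameters $\mu,\nu,M',N'$ back into the quantities $\varrho$ and $M$ that appear in the statement. Concretely, I would proceed as follows.

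\textbf{Step 1: set up the reduction parameters.} I would recall from Section~\ref{sec:red} that Algorithm~\ref{algo:reduction} produces an instance of Problem~\ref{pbm:poly_approx} in which $\nu = |\Gamma|$ (one unknown polynomial $Q_j$ per exponent tuple $(j_1,\dots,j_s)\in\Gamma$ surviving conditions~\eqref{cond:ii}--\eqref{cond:iii}) and $\mu$ is the number of "monomial-support" blocks coming from the multiplicity conditions~\eqref{cond:iv}; the total number of linear equations is $M' = \sum_i M'_i = M = \sum_{1\le i\le n}\binom{s+m_i}{s+1}$, since expanding vanishing to order $m_i$ at a point in $s+1$ variables contributes $\binom{s+m_i}{s+1}$ scalar conditions. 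I would likewise note that the moduli $P_i$ have degrees bounded so that $\mu \le \binom{s+m-1}{s}$ (the number of residues one tracks per point is the number of monomials in $s$ variables of degree $< m$), while $\nu = |\Gamma|$; hence $\rho = \max(\mu,\nu) \le \varrho$. The $N'_j$ bounds come directly from condition~\eqref{cond:iii} after fixing the $Y$-exponents, and $N' = \sum_j N'_j$ is $O(M)$ in the regime where a solution is sought. I would cite the lemma(s) in Section~\ref{sec:red} for the precise correspondence, and in particular for the fact that a solution to the approximation instance yields a solution to the interpolation instance and conversely (so that "no solution" is correctly detected).

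\textbf{Step 2: apply the approximation-solver cost bound and simplify.} With $\mu,\nu \le \varrho$ and $M' = M$, Theorem~\ref{thm:poly_approx_complexity} gives a probabilistic algorithm solving the approximation instance in $\mathcal{O}(\rho^{\omega-1}\M(M)\log(M)^2)$ operations, which is $\mathcal{O}(\varrho^{\omega-1}\M(M)\log(M)^2) \subseteq \softO{\varrho^{\omega-1}M}$ since $\M(M)\subseteq\softO{M}$. I would then check that the cost of Algorithm~\ref{algo:reduction} itself (forming the $P_i$, $F_{i,j}$ — essentially shifted evaluations / Taylor expansions at the $n$ points to precision $m_i$) is dominated by this bound; this is the standard fact that building the instance costs $\softO{M}$ or at worst $\softO{\varrho\, M}$, absorbed into $\softO{\varrho^{\omega-1}M}$ since $\omega\ge 2$. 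Finally I would note that recovering $Q$ from $Q_0,\dots,Q_{\nu-1}$ is just reading off coefficients and costs $O(M)$.

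\textbf{Step 3: probability bookkeeping.} The randomness comes entirely from the structured-system solver of~\cite{BoJeSc08} invoked inside Algorithm~\ref{algo:interpolation_eke}/\ref{algo:interpolation_direct}, which per Theorem~\ref{thm:poly_approx_complexity} draws $O(M')=O(M)$ field elements and succeeds with probability $\ge 1/2$ when they are sampled uniformly from a set of size $\ge 6(M'+1)^2 = 6(M+1)^2$; since Algorithm~\ref{algo:reduction} is deterministic, these are the only random choices, so the stated count and success probability carry over verbatim. I would conclude by assembling the three steps into the claimed bound $\mathcal{O}(\varrho^{\omega-1}\M(M)\log(M)^2)\subseteq\softO{\varrho^{\omega-1}M}$.

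The main obstacle I anticipate is purely bookkeeping rather than conceptual: matching the combinatorial parameters of the two problems, i.e.\ verifying that the reduction of Section~\ref{sec:red} really yields $\mu \le \binom{s+m-1}{s}$, $\nu=|\Gamma|$, and $M'=M$ with the required degree constraints on the $P_i$ and $F_{i,j}$, and that both "directions" of the correspondence (and the no-solution case) hold. Given Theorem~\ref{thm:poly_approx_complexity} as a black box, everything else is a routine composition, and the $\max$ in $\varrho$ and the $\omega\ge 2$ inequality are exactly what is needed to absorb the cost of building the instance and of the $|\Gamma|$-versus-$\binom{s+m-1}{s}$ asymmetry into a single clean bound.
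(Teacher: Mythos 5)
Your proposal follows essentially the same route as the paper: reduce via Algorithm~\ref{algo:reduction}, invoke Theorem~\ref{thm:poly_approx_complexity} with $\rho=\varrho$ and $M'=M$, observe that the deterministic reduction cost $\mathcal{O}(\varrho\,\M(M)\log(M))$ from Proposition~\ref{prop:reduction} is absorbed since $\omega\geqslant 2$, and carry over the randomness analysis verbatim since all random choices occur in the structured solver. The only minor imprecisions are cosmetic — the paper sets $\mu=\binom{s+m-1}{s}$ and $\rho=\varrho$ as equalities rather than the inequalities you state, and the handling of $N'$ versus $M'$ is done by the WLOG truncation inside the Problem~\ref{pbm:poly_approx} solver (see~\eqref{eqn:quasi-square_system}) rather than by any assumption on the regime — but none of this affects the argument.
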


We will often refer to the two following assumptions 
on the input parameters:
\begin{itemize}
	\item[] \hyp{1}: $m \;\le\; \ell$, 
	\item[] \hyp{2}: $b>0$ and $b > \ell \cdot \max_{1\le j\le s} k_j$.
\end{itemize}
Regarding \hyp{1}, we prove in Appendix~\ref{app:H1} that 
the case $m > \ell$ can be reduced to the case $m = \ell$,
so that this assumption can be made without loss of generality.
Besides, it is easily verified that \hyp{2} is equivalent to having
$\Gamma = \{(j_1,\ldots,j_s)\in\Znn^s \,\,\mid\,\, j_1+\cdots+j_s\le\ell\}$;
when $k_j>0$ for some $j$, \hyp{2} means that we do not take 
$\ell$ uselessly large. Then, assuming \hyp{1} and \hyp{2}, we have 
$\varrho = |\Gamma| = \binom{s+\ell}{s}$.

As we will show in Section~\ref{sec:rs_codes}, in the context of the
list-decoding of Reed-Solomon codes, applications of Theorem~\ref{thm:complexity} 
include the 
interpolation step of the Guruswami-Sudan algorithm~\cite{GurSud99} in 
$\softO{\ell^{\omega-1}m_{\mathrm{GS}}^2n}$ operations
and the interpolation step of the Wu algorithm~\cite{Wu08} in
$\softO{\ell^{\omega-1}m_{\mathrm{Wu}}^2n}$ operations,
where $m_{\mathrm{GS}}$ and $m_{\mathrm{Wu}}$ are the respective 
multiplicities used in those algorithms; our result can also be adapted
to the context of soft-decoding~\cite{KoeVar03a}.
Besides, the re-encoding technique of Koetter and Vardy~\cite{KoeVar03b} 
can be used in conjunction with our algorithm in order to 
reduce the cost of the interpolation step of the Guruswami-Sudan 
algorithm to
$\softO{\ell^{\omega-1}m_{\mathrm{GS}}^2(n-k)}$ operations.

In Theorem~\ref{thm:complexity}, the probability analysis is a standard 
consequence of the Zippel-Schwartz lemma; as usual,
the probability of success can be made arbitrarily close 
to one by increasing the size of $S$. 
If the field $\K$ has fewer than $6(M+1)^2$ elements,
then a probability of success at least $1/2$ can still be 
achieved by using a field extension $\extK$ of degree
$d \in \mathcal{O}(\log_{|\K|}(M))$, up to a cost increase 
by a factor in $\mathcal{O}(\M(d) \log(d))$. 

Specifically, one can proceed in three steps.
First, we take $\extK = \K[X]/\langle f \rangle$ with 
$f\in\K[X]$ irreducible of degree $d$;
such an $f$ can be set up using an expected number of 
$\mathcal{O}\tilde{~}(d^2) \subseteq \mathcal{O}(M)$ 
operations in $\K$~\cite[\S14.9]{vzGathen03}.
Then we solve Problem~\ref{pbm:multivariate_interpolation} over $\extK$ 
by means of the algorithm of Theorem~\ref{thm:complexity},
thus using $\mathcal{O}\big(\varrho^{\omega-1} \M(M)\log(M)^2 
\cdot \M(d) \log(d)\big)$ operations in $\K$.
Finally, from this solution over $\extK$ one can deduce 
a solution over $\K$ using $\mathcal{O}(Md)$ operations in $\K$. 
This last point comes from the fact that, as we shall see later 
in the paper, Problem~\ref{pbm:multivariate_interpolation} amounts 
to finding a nonzero vector $u$ over $\K$ such that $Au = 0$ for some 
$M\times (M+1)$ matrix~$A$ over~$\K$: once we have obtained a solution 
$\overline{u}$ over $\extK$, it thus suffices to rewrite it as 
$\overline{u} = \sum_{0\le i < d} u_i X^i \ne 0$ and, 
noting that $Au_i = 0$ for all $i$, to find a nonzero 
$u_i$ in $\mathcal{O}(Md)$ and return it as a solution over $\K$.

Furthermore, since the $x_i$ in Problem~\ref{pbm:multivariate_interpolation} 
are assumed to be pairwise distinct, we have already $|\K| \ge n$ and 
thus we can take $d = \mathcal{O}(\log_n (M))$.
In all the applications to error-correcting codes we consider in this
paper, $M$ is polynomial in $n$ 
so that we can take $d = \mathcal{O}(1)$, and in those cases the cost 
bound in Theorem~\ref{thm:complexity} holds for any field.

As said before, Theorem~\ref{thm:complexity} relies on an efficient
solution to Problem~\ref{thm:poly_approx_complexity}, which we summarize
in the following theorem.
\begin{thm}
  \label{thm:poly_approx_complexity}
  Let $\rho=\max(\mu,\nu)$.
  There exists a probabilistic algorithm that either computes a
  solution to Problem~\ref{pbm:poly_approx}, or determines that none
  exists, using \[ \mathcal{O}\big(\rho^{\omega-1} \M(M')\log(M')^2\big) \subseteq
  \mathcal{O}\tilde{~}(\rho^{\omega-1} M' )\] operations in $\K$.
  Algorithm~\ref{algo:interpolation_eke} and~\ref{algo:interpolation_direct} in 
  Section~\ref{sec:solutions} achieve this result.
  These algorithms both choose $\mathcal{O}(M')$ elements in $\K$; if these
  elements are chosen uniformly at random in a set $S \subseteq \K$ of
  cardinality at least $6(M'+1)^2$, then the probability of success is at
  least $1/2$.
\end{thm}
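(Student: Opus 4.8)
The plan is to reduce Problem~\ref{pbm:poly_approx} to the resolution of a single homogeneous linear system over~$\K$ whose coefficient matrix has small displacement rank, and then to solve that system with the structured solver of Bostan, Jeannerod, and Schost~\cite{BoJeSc08}. First I would normalize the degree bounds so that every relevant dimension is $\mathcal{O}(M')$: if $N' := \sum_{0\le j<\nu} N'_j > M'$, I replace the $N'_j$ by capped values $\widetilde{N}'_j \le N'_j$ with $\sum_j \widetilde{N}'_j = M'+1$ (dropping, and eventually setting to zero, the $Q_j$ whose cap is~$0$). Any solution of the capped instance is a solution of the original one after zero-padding, and the capped instance is always solvable because --- as will be clear after linearizing --- its system then has strictly more unknowns than equations. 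So I may assume $N'\le M'+1$; since each $M'_i$ and each $N'_j$ is at least~$1$, this also gives $\mu \le M'$ and $\nu\le M'+1$, hence $\rho=\max(\mu,\nu)=\mathcal{O}(M')$.

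Next I would linearize condition~\eqref{cond:c}. Writing $Q_j=\sum_{0\le t<N'_j} q_{j,t}X^t$, the degree bounds~\eqref{cond:b} are built into this ansatz, and for each $i$ the congruence $\sum_j F_{i,j}Q_j\equiv 0\bmod P_i$ amounts to the vanishing of the $M'_i$ coefficients of the residue $\sum_{0\le j<\nu}(F_{i,j}Q_j\bmod P_i)$, a $\K$-linear condition on the $q_{j,t}$. Stacking these over $0\le i<\mu$ yields a matrix $A\in\K^{M'\times N'}$ whose nonzero kernel vectors are exactly the coefficient vectors of the tuples solving Problem~\ref{pbm:poly_approx}; in particular the instance is unsolvable iff $A$ has trivial kernel, and translating between a kernel vector and a polynomial tuple costs $\mathcal{O}(N')\subseteq\mathcal{O}(M')$ operations. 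The crucial point is that $A$ is highly structured. Following the extended-key-equation route it comes out mosaic-Hankel with $\mu$ block rows and $\nu$ block columns; derived directly it is Toeplitz-like, its block $(i,j)$ having columns $C_{P_i}^t f_{i,j}$ for $t=0,\dots,N'_j-1$, where $f_{i,j}$ is the coefficient vector of $F_{i,j}$ and $C_{P_i}$ is the companion matrix of $P_i$ --- a Krylov, hence Toeplitz-like, structure. In both cases, using shift-type displacement operators on the whole row and column spaces, the displacement of $A$ is supported on $\mathcal{O}(\mu)$ rows and $\mathcal{O}(\nu)$ columns (only the block interfaces contribute), so $A$ has displacement rank $\alpha=\mathcal{O}(\mu+\nu)=\mathcal{O}(\rho)$, and a displacement generator of $A$ can be assembled with $\mathcal{O}(\rho)$ polynomial multiplications and divisions in degree $\mathcal{O}(M')$, i.e.\ in $\mathcal{O}(\rho\,\M(M'))$ operations, which (as $\omega\ge 2$) fits within the announced bound.

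Finally I would feed this generator to the algorithm of~\cite{BoJeSc08}, which from a length-$\mathcal{O}(M')$ displacement generator of an $M'\times N'$ matrix of displacement rank $\alpha$ returns a nonzero kernel vector, or a certificate that the kernel is trivial, in $\mathcal{O}(\alpha^{\omega-1}\M(M')\log(M')^2)$ operations in~$\K$ after drawing $\mathcal{O}(M')$ field elements; a Schwartz--Zippel estimate on the $\mathcal{O}(M')$-degree polynomials that encode the finitely many genericity conditions of that algorithm then shows that, when these elements are picked uniformly at random in a set $S\subseteq\K$ with $|S|\ge 6(M'+1)^2$, the overall procedure succeeds with probability at least~$1/2$. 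Reconstructing the $Q_j$ and re-inserting the discarded zero components adds only $\mathcal{O}(M')$ operations, and summing the three contributions with $\alpha=\mathcal{O}(\rho)$ gives the claimed cost $\mathcal{O}(\rho^{\omega-1}\M(M')\log(M')^2)\subseteq\softO{\rho^{\omega-1}M'}$. I expect the genuine difficulty to lie precisely in the middle step: exhibiting displacement operators under which $A$ has rank only $\mathcal{O}(\rho)$ \emph{and} to which the solver of~\cite{BoJeSc08} actually applies, and carrying out the cost accounting for the generators; once this is settled, the normalization, the kernel/tuple dictionary, and the probabilistic analysis are routine.
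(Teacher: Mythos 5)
Your plan is essentially the same as the paper's. You normalize to $N'\le M'+1$ exactly as in equation~\eqref{eqn:quasi-square_system}, linearize condition~\eqref{cond:c} into an $M'\times N'$ matrix over~$\K$, observe that both the extended-key-equation route (mosaic-Hankel, Section~\ref{subsec:key_equations_version}) and the direct route (Toeplitz-like / companion-matrix-Krylov, Section~\ref{subsec:direct}) yield displacement rank $\mathcal{O}(\mu+\nu)$, and then invoke the structured solver of~\cite{BoJeSc08} with a Zippel--Schwartz probability analysis. This is Lemmas~\ref{multivariate-zeh-et-al}--\ref{lem:fast_generator_computation} and Proposition~\ref{prop:struct} in outline form, and you correctly flag that the substance lies in the displacement-rank and generator-construction step.

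One concrete inaccuracy in the part you did spell out: the generator is not assembled from ``$\mathcal{O}(\rho)$ polynomial multiplications and divisions in degree $\mathcal{O}(M')$.'' In both routes there are $\mu\nu$ polynomial operations, in degrees roughly $M'_i+N'_j$; what makes the total cost $\mathcal{O}\bigl((\mu+\nu)\,\M(M')\bigr)$ is the super-linearity of $\M$ together with the assumption $N'\le M'+1$, not a count of $\mathcal{O}(\rho)$ operations. More importantly, in the direct Toeplitz-like version the generator $Z$ in~\eqref{eqn:generatorsYZ} contains the block $W^T\!A'$, whose naive computation costs $\Theta(M'N')$ — i.e., quadratic in $M'$ — and does \emph{not} reduce to ordinary polynomial multiplication or division. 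The paper needs the separate Lemma~\ref{lem:fast_generator_computation}, which exploits that for fixed $(i,j)$ the needed entries (the degree-$(M'_i-1)$ coefficients of $X^h F_{i,j}\bmod P_i$ for $h<N'_j$) satisfy a linear recurrence governed by $P_i$, and extends that sequence cheaply using~\cite[Theorem~3.1]{Shoup91} followed by a Toeplitz matrix-vector product. Without this observation (or something equivalent) the generator step of the direct route would overshoot the announced bound, so this is the genuine gap in what you sketched; the EKE route avoids it because there the generator entries all come from the truncated series $S^\star_{i,j}$ of~\eqref{eqn:def_Sstar}, which \emph{are} produced by power-series division.
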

If $\K$ has fewer than $6(M'+1)^2$ elements, the remarks made after
Theorem~\ref{thm:complexity} still apply here.

\paragraph{Comparison with previous work.}
In the context of coding theory, most previous results 
regarding Problem~\ref{pbm:multivariate_interpolation}
focus on the list-decoding of
Reed-Solomon codes via the Guruswami-Sudan algorithm, in which 
$s=1$ and the assumptions \hyp{1} and \hyp{2} are satisfied
as well as
\begin{enumerate}
	\item[] \hyp{3}: $0\le k<n$ where $k := k_1$,
	\item[] \hyp{4}: $m_1 = \cdots = m_n \;= m$.
\end{enumerate}
The assumption~\hyp{3} corresponds to the coding theory context, 
where $k+1$ is the dimension of the code; then $k+1$ must be 
positive and at most $n$ (the length of the received word).
To support this assumption independently from any 
application context, we show in Appendix~\ref{app:H5} that
if $k\ge n$, then Problem~\ref{pbm:multivariate_interpolation}
has either a trivial solution or no solution at all.

Previous results focus mostly on the Guruswami-Sudan case 
(${s=1,m\ge 1}$) and some of them more specifically on the 
Sudan case (${s=m=1}$); we summarize these results in 
Table~\ref{table:prev}. In some 
cases~\cite{Reinhard03,Alekhnovich05,Bernstein11,CohHen11a}, 
the complexity was not stated quite exactly in our terms 
but the translation is straightforward.

In the second column of that table, we give the cost with respect to 
the interpolation parameters $\ell,m,n$, assuming further
$m = n^{\mathcal{O}(1)}$ and $\ell = n^{\mathcal{O}(1)}$.
The most significant factor in the running time is its
dependency with respect to $n$, with results being either cubic,
quadratic, or quasi-linear. Then, under the assumption $\hyp{1}$,
the second most important parameter is $\ell$, followed by $m$.
In particular, our result in Section~\ref{sec:rs_codes}, 
Corollary~\ref{cor:complexity_gursud} compares favorably to the cost 
$\mathcal{O}\tilde{~}(\ell^\omega mn)$ obtained by Cohn and 
Heninger~\cite{CohHen11a} which was, to our knowledge, 
the best previous bound for this problem.

In the third column, we give the cost with respect to the Reed-Solomon
code parameters $n$ and $k$, using worst-case parameter choices that
are made to ensure the existence of a solution: $m=\mathcal{O}(nk)$ and 
$\ell=\mathcal{O}(n^{3/2}k^{1/2})$ in the Guruswami-Sudan 
case~\cite{GurSud99} and 
$\ell = \mathcal{O}(n^{1/2}k^{-1/2})$ in the Sudan 
case~\cite{Sudan97}. 
With these parameter choices, our algorithms present a
speedup $(n/k)^{1/2}$ over the algorithm in~\cite{CohHen11a}.

\begin{table}[h]
  \centering
  \caption{Comparison of our costs with previous ones for $s=1$}
  \label{table:prev} \bigskip
  \begin{tabular}{|| p{5cm} | c | c ||}
	\hline
	\multicolumn{3}{||c||}{Sudan case ($m=1$)}\\
	\hline\hline
    Sudan~\cite{Sudan97} & $\mathcal{O}(n^3)$ & $\mathcal{O}(n^3)$\\
    Roth-Ruckenstein~\cite{RotRuc00} & $\mathcal{O}(\ell n^2)$ & $\mathcal{O}(n^{2+1/2}k^{-1/2})$ \\
    Olshevsky-Shokrollahi~\cite{OlsSho99} & $\mathcal{O}(\ell n^2)$ & $\mathcal{O}(n^{2+1/2}k^{-1/2})$ \\
	\emph{This paper} & $\mathcal{O}(\ell^{\omega-1} \M(n) \log(n)^2)$ & $\mathcal{O}\tilde{~}(n^{\omega/2+1/2} k^{1/2-\omega/2})$\\
	\hline\hline
	\multicolumn{3}{||c||}{Guruswami-Sudan case ($m\geqslant 1$)} \\
	\hline\hline
    Guruswami-Sudan~\cite{GurSud99} & $\mathcal{O}(m^6 n^3)$ & $\mathcal{O}(n^9k^6)$\\
    Olshevsky-Shokrollahi~\cite{OlsSho99} & $\mathcal{O}(\ell m^4 n^2)$ & $\mathcal{O}(n^{7+1/2}k^{4+1/2})$\\
    Zeh-Gentner-Augot~\cite{ZeGeAu11} & $\mathcal{O}(\ell m^4 n^2)$ & $\mathcal{O}(n^{7+1/2}k^{4+1/2})$\\
    K\"otter / McEliece~\cite{Koetter96,McEliece03} & $\mathcal{O}(\ell m^4 n^2)$ & $\mathcal{O}(n^{7+1/2}k^{4+1/2})$\\
    Reinhard~\cite{Reinhard03} & $\mathcal{O}(\ell^3 m^2 n^2)$ & $\mathcal{O}(n^{8+1/2}k^{3+1/2})$\\
    Lee-O'Sullivan~\cite{LeeOSul08}  & $\mathcal{O}(\ell^4 m n^2)$ & $\mathcal{O}(n^9k^3)$\\
    Trifonov~\cite{Trifonov10} (heuristic)  & $\mathcal{O}(m^3 n^2)$ & $\mathcal{O}(n^5k^3)$\\
    Alekhnovich~\cite{Alekhnovich05} & $\mathcal{O}(\ell^4 m^4 \M(n)\log(n))$ & $\mathcal{O}\tilde{~}(n^{11} k^6)$\\
    Beelen-Brander~\cite{BeeBra10} & $\mathcal{O}(\ell^3 \M(\ell m n)\log(n))$ & $\mathcal{O}\tilde{~}(n^8k^3)$\\
    Bernstein~\cite{Bernstein11} & $\mathcal{O}(\ell^\omega\M(\ell n) \log(n))$ & $\mathcal{O}\tilde{~}(n^{3\omega/2+5/2}k^{\omega/2+1/2})$\\
    Cohn-Heninger~\cite{CohHen11a} & $\mathcal{O}(\ell^\omega \M(m n) \log(n))$ & $\mathcal{O}\tilde{~}(n^{3\omega/2+2}k^{\omega/2+1})$\\ 
	\emph{This paper} & $\mathcal{O}\big(\ell^{\omega-1} \M(m^2n) \log(n)^2 \big)$ & $\mathcal{O}\tilde{~}(n^{3\omega/2+3/2}k^{\omega/2+3/2})$
\\
	\hline
  \end{tabular}
\end{table}

\medskip
Most previous algorithms rely on linear algebra, either over $\K$ or
over $\K[X]$. When working over $\K$, a natural idea is to rely on
cubic-time general linear system solvers, as in Sudan's and
Guruswami-Sudan's original papers. Several papers also cast the
problem in terms of Gr{\"o}bner basis computation in $\K[X,Y]$,
implicitly or explicitly: the incremental algorithms
of~\cite{Koetter96,NiHo98,McEliece03} are particular cases of the
Buchberger-M{\"o}ller algorithm~\cite{MoBu82}, while Alekhnovich's
algorithm~\cite{Alekhnovich05} is a divide-and-conquer change of
term order for bivariate ideals.

Yet another line of work~\cite{RotRuc00,ZeGeAu11} uses Feng and Tzeng's
linear system solver~\cite{FeTz91}, combined with a reformulation in
terms of syndromes and key equations. We will use (and generalize to
the case $s>1$) some of these results in
Section~\ref{subsec:key_equations_version}, but we will rely on the
structured linear system solver of~\cite{BoJeSc08} in order to prove
our main results. Prior to our work, Olshevsky and Shokrollahi also
used structured linear algebra techniques~\cite{OlsSho99}, but it is
unclear to us whether their encoding of the problem could lead to
similar results as ours.

As said above, another approach rephrases the problem of computing $Q$
in terms of polynomial matrix computations, that is, as linear algebra
over $\K[X]$. Starting from known generators of the finitely generated
$\K[X]$-module (or polynomial lattice) formed by solutions to 
Problem~\ref{pbm:multivariate_interpolation},
the algorithms in~\cite{Reinhard03,LeeOSul08,PB08,BeeBra10,KB10,Bernstein11,CohHen11a} 
compute a Gr{\"o}bner basis of this module (or a reduced lattice basis), 
in order to find a short vector therein.
To achieve quasi-linear time in $n$, the algorithms
in~\cite{BeeBra10,KB10} use a basis reduction subroutine due to
Alekhnovich~\cite{Alekhnovich05}, while those
in~\cite{Bernstein11,CohHen11a} rely on a faster, randomized
algorithm due to Giorgi, Jeannerod, and Villard~\cite{GiJeVi03}.  

This approach based on the computation of a reduced lattice basis
was in particular the basis of the extensions to the multivariate 
case $s > 1$ in~\cite{PB08,KB10,CohHen12}.
In the multivariate case as well, the result in Theorem~\ref{thm:complexity} 
improves on the best previously known bounds~\cite{PB08,KB10,CohHen12};
we detail those bounds and we prove this claim in 
Appendix~\ref{app:lattice_comparison}.
In~\cite{GabRua06b}, the authors solve a problem similar to 
Problem~\ref{pbm:multivariate_interpolation} except that they do 
not assume that the $x_i$ are distinct. 
For simple roots and under some genericity
assumption on the points
$\{(x_i,y_{i,1},\ldots,y_{i,s})\}_{1\leqslant i\leqslant n}$,
this algorithm uses $O(n^{2+1/s})$ operations to compute a polynomial $Q$
which satisfies \eqref{cond:i}, \eqref{cond:iii}, \eqref{cond:iv}
with $m=1$.
However, the complexity analysis is not clear to us in the general 
case with multiple roots ($m> 1$).

\medskip
Regarding Problem~\ref{pbm:poly_approx}, several particular cases of 
it are well-known. When all $P_i$ are of the form $X^{M_i'}$, this 
problem becomes known as a simultaneous Hermite-Pad\'e approximation
problem or vector Hermite-Pad\'e approximation
problem~\cite{BecLab94,Storjohann06}. The case $\mu=1$, with $P_1$
being given through its roots (and their multiplicities) is known as
the M-Pad\'e problem~\cite{Beckermann92}.
To our knowledge, the only previous work on 
Problem~\ref{pbm:poly_approx} in its full generality is by Nielsen 
in~\cite[Chapter 2]{Nielsen13}. Nielsen solves the problem
by building an ad-hoc polynomial lattice, which has dimension 
$\mu+\nu$ and degree $\max_{i<\mu} M'_i$, and finding a short vector
therein. Using the algorithm in~\cite{GiJeVi03}, the overall cost 
bound for this approach is 
$\mathcal{O}\tilde{~}((\mu+\nu)^\omega (\max_{i<\mu} M'_i))$,
to which our cost bound 
$\mathcal{O}\tilde{~}(\max(\mu,\nu)^{\omega-1} (\sum_{i<\mu} M'_i))$ 
from Theorem~\ref{thm:poly_approx_complexity} compares favorably.

\paragraph{Outline of the paper.} 
First, we show in Section~\ref{sec:red} how to reduce
Problem~\ref{pbm:multivariate_interpolation} to
Problem~\ref{pbm:poly_approx}; this reduction is essentially based on 
Lemma~\ref{lem:univariate_reformulation}, which extends to the multivariate
case $s> 1$ the results in~\cite{ZeGeAu11,Zeh13}.
Then, after a reminder on algorithms for structured linear systems in 
Section~\ref{subsec:structured}, we give two
algorithms that both prove Theorem~\ref{thm:poly_approx_complexity}, in
Sections~\ref{subsec:key_equations_version} and~\ref{subsec:direct},
respectively.
The linearization in the first algorithm extends
the derivation of extended key equations presented in~\cite{ZeGeAu11} 
to the more general context of Problem~\ref{pbm:poly_approx}, 
 ending up with a mosaic-Hankel system.
The second algorithm gives an alternative approach,
in which the linearization is more straightforward and the structure 
of the matrix of the system is Toeplitz-like.
We conclude in Section~\ref{sec:rs_codes} by presenting several
applications to the list-decoding of Reed-Solomon codes, namely the
Guruswami-Sudan algorithm, the re-encoding technique and the
Wu algorithm, and by sketching how to adapt our approach to 
the soft-decoding of Reed-Solomon codes.
Readers who are mainly interested in those applications may
skip Section~\ref{sec:solutions}, which contains the proofs of
Theorems~\ref{thm:complexity} and~\ref{thm:poly_approx_complexity},
and go directly to Section~\ref{sec:rs_codes}.


\section{Reducing Problem~\ref{pbm:multivariate_interpolation} to Problem~\ref{pbm:poly_approx}}
\label{sec:red}

In this section, we show how instances of
Problem~\ref{pbm:multivariate_interpolation} can be reduced to
instances of Problem~\ref{pbm:poly_approx}; Algorithm~\ref{algo:reduction}
below gives an overview of this reduction.
The main technical ingredient, stated in 
Lemma~\ref{lem:univariate_reformulation} below,
generalizes to any $s\ge 1$ and (possibly) distinct multiplicities 
the result given for $s=1$ by Zeh, Gentner, and Augot 
in~\cite[Proposition 3]{ZeGeAu11}. To prove it, we use the
same steps as in~\cite{ZeGeAu11}; we rely on the notion of Hasse
derivatives, which allows us to write Taylor expansions in positive
characteristic (see Hasse~\cite{Hasse1936} or
Roth~\cite[pp.~87, 276]{Roth07}).

For simplicity, in the rest of this paper 
we will use boldface letters to denote $s$-tuples of objects:
$\bY = (Y_1,\ldots,Y_s)$, $\bk = (k_1,\ldots,k_s)$, etc.
In the special case of $s$-tuples of integers, we also write $|\bk| = k_1+\cdots+k_s$,
and comparison and addition of multi-indices in $\Znn^s$ are defined componentwise.  
For example, writing $\bi \le \bj$ is equivalent to ($i_1 \le j_1$ and $\ldots$ 
and $i_s \leqslant j_s)$, and $\bi - \bj$ denotes $(i_1-j_1,\ldots,i_s-j_s)$.  
If $\by = (y_1,\ldots,y_s)$ is in $\K[X]^s$ and $\bi = (i_1,\ldots,i_s)$ is in $\Znn^s$,
then $\bY - \by = Y_1-y_1,\ldots,Y_s-y_s$ and $\bY^\bi = Y_1^{i_1}\cdots Y_s^{i_s}$.
Finally, for products of binomial coefficients, we shall write
\[
{\bj \choose \bi} = {j_1 \choose i_1} \cdots {j_s \choose i_s}.
\]
Note that this integer 
is zero when $\bi \not\le \bj$.

If $\A$ is any commutative ring with unity and $\A[\bY]$ denotes the
ring of polynomials in $Y_1,\ldots,Y_s$ over $\A$, then for a
polynomial $P(\bY) = \sum_\bj P_\bj \bY^\bj$ in $\A[\bY]$ and a
multi-index $\bi$ in~$\Znn^s$, the {\em order-$\bi$ Hasse derivative} of
$P$ is the polynomial $P^{[\bi]}$ in $\A[\bY]$ defined by
\[
P^{[\bi]} = \sum_{\bj \ge \bi} {\bj\choose\bi} P_\bj \bY^{\bj-\bi}.
\]
The Hasse derivative satisfies the following property (Taylor expansion):
for all $\ba$ in $\A^s$, 
\[
P(\bY) = \sum_\bi P^{[\bi]}(\ba) (\bY-\ba)^\bi.
\]
The next lemma shows how Hasse derivatives help rephrase the
vanishing condition \eqref{cond:iv} of
Problem~\ref{pbm:multivariate_interpolation} for one of the points
$\{(x_r,\by_r)\}_{1\le r\le n}$. 
\begin{lem} \label{lem:univariate_reformulation_point}
	Let $(x,y_1,\ldots,y_s)$ be a point in $\K^{s+1}$ and
	$\bR = (R_1,\ldots,R_s)$ in $\K[X]^s$ be such that $R_j(x) = y_j$
	for $1\le j\le s$.
	Then, for any polynomial $Q$ in $\K[X,\bY]$, $Q(x,\by) = 0$
	with multiplicity at least $m$ if and only if
	for all $\bi$ in $\Znn^s$ such that $\sbi < m$, 
  \[
  Q^{[\bi]}(X,\bR) = 0 \bmod (X-x)^{m-\sbi}.
  \]
\end{lem}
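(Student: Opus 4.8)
The plan is to reduce the multivariate multiplicity condition at the single point $(x,\by)$ to a family of univariate divisibility conditions by combining the translation-invariance of the notion of zero of multiplicity $m$ with the Taylor-expansion property of Hasse derivatives. First I would recall that $Q(x,\by)=0$ with multiplicity at least $m$ means, by definition, that the shifted polynomial $Q(X+x,\bY+\by)$ has no monomial of total degree less than $m$; equivalently, writing $Q(X+x,\bY+\by) = \sum_{\bi} Q^{[\bi]}(x,\by)(X)\,\bY^{\bi}$ after expanding in the $\bY$ variables, for each fixed $\bi$ with $\sbi < m$ the univariate polynomial coefficient (which is exactly a Hasse derivative of $Q$ in the $\bY$ variables, evaluated at $\by$, and then shifted in $X$) must have $X$-valuation at least $m - \sbi$. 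So the multiplicity condition at $(x,\by)$ is equivalent to: for all $\bi$ with $\sbi<m$, $Q^{[\bi]}(X,\by) = 0 \bmod (X-x)^{m-\sbi}$, where here $Q^{[\bi]}$ denotes the Hasse derivative with respect to $\bY$ only. This is the ``constant $\bR$'' version of the claim.

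The remaining work is to replace the constant substitution $\bY \mapsto \by$ by the substitution $\bY \mapsto \bR$, where $\bR(x) = \by$. The key point is that modulo $(X-x)^{m-\sbi}$ we only see $\bR$ up to that precision, and $\bR \equiv \by + (X-x)(\cdots)$ there. The natural tool is the $\bY$-Taylor expansion of $Q^{[\bi]}$ around $\by$: for any $\ba$,
\[
Q^{[\bi]}(X,\bY) = \sum_{\bj} \big(Q^{[\bi]}\big)^{[\bj]}(X,\ba)\,(\bY-\ba)^{\bj},
\]
together with the composition rule for iterated Hasse derivatives, $\big(Q^{[\bi]}\big)^{[\bj]} = \binom{\bi+\bj}{\bi} Q^{[\bi+\bj]}$. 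Substituting $\bY = \bR$ and $\ba = \by$, and using $R_j(x) - y_j$ divisible by $X-x$, the term indexed by $\bj$ is divisible by $(X-x)^{\sbj}$. Hence $Q^{[\bi]}(X,\bR) \equiv \sum_{\sbj < m-\sbi} \binom{\bi+\bj}{\bi} Q^{[\bi+\bj]}(X,\by)(\bR-\by)^{\bj} \pmod{(X-x)^{m-\sbi}}$.

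From this congruence the equivalence follows in both directions. If $Q^{[\bk]}(X,\by) \equiv 0 \bmod (X-x)^{m-\sbk}$ for all $\bk$ with $\sbk<m$, then in the sum for $Q^{[\bi]}(X,\bR)$ the $\bj$-term is divisible by $(X-x)^{\sbj} \cdot (X-x)^{m - |\bi+\bj|} = (X-x)^{m-\sbi}$, so $Q^{[\bi]}(X,\bR) \equiv 0 \bmod (X-x)^{m-\sbi}$; since this holds for every $\bi$ with $\sbi<m$, in particular for $\bi=\bzero$ we recover $Q(X,\bR) \equiv 0 \bmod (X-x)^m$, and the full family of conditions gives the claimed divisibilities. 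Conversely, suppose $Q^{[\bi]}(X,\bR) \equiv 0 \bmod (X-x)^{m-\sbi}$ for all $\bi$ with $\sbi<m$; I would argue by induction on $\sbi$ decreasing from $m-1$ down to $0$ (equivalently, on the total degree weight $m-\sbi$) that $Q^{[\bi]}(X,\by) \equiv 0 \bmod (X-x)^{m-\sbi}$: in the displayed sum the $\bj=\bzero$ term is $Q^{[\bi]}(X,\by)$, and every $\bj\ne\bzero$ term equals $\binom{\bi+\bj}{\bi} Q^{[\bi+\bj]}(X,\by)(\bR-\by)^{\bj}$, which by the induction hypothesis applied to $\bk=\bi+\bj$ (note $\sbk>\sbi$) is divisible by $(X-x)^{m-|\bi+\bj|}(X-x)^{\sbj} = (X-x)^{m-\sbi}$; since the whole left-hand side $Q^{[\bi]}(X,\bR)$ is divisible by $(X-x)^{m-\sbi}$, so is $Q^{[\bi]}(X,\by)$. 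Taking $\bi=\bzero$ (or combining all $\bi$) then yields that $Q(x,\by)=0$ with multiplicity at least $m$ by the ``constant $\bR$'' case established first.

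I expect the main obstacle to be purely bookkeeping: correctly handling the composition identity for iterated Hasse derivatives with its binomial factor, making sure the $X$-shift and the $\bY$-shift are kept separate (the statement mixes a univariate modular condition in $X$ with Hasse derivatives taken only in $\bY$), and organizing the two-way induction cleanly so that the divisibility bookkeeping $(X-x)^{\sbj}\cdot(X-x)^{m-|\bi+\bj|} = (X-x)^{m-\sbi}$ is transparent. None of these steps is deep, but the notation with multi-indices $\bi,\bj,\bk$ and nested binomial coefficients requires care; using the Taylor-expansion property quoted just before the lemma is what keeps it short.
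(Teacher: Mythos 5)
Your proposal is correct, and the composition rule $\big(Q^{[\bi]}\big)^{[\bj]} = \binom{\bi+\bj}{\bi} Q^{[\bi+\bj]}$, the truncation modulo $(X-x)^{m-\sbi}$, and the decreasing induction on $m-\sbi$ are all sound. However, it takes a genuinely different route from the paper's proof. You first establish the ``constant $\bR$'' case (replacing $\bR$ by the constant vector $\by$) via the Taylor expansion of $Q$ in $\bY$ around $\ba=\by$, and then transfer from $\by$ to $\bR$ via the Taylor expansion of $Q^{[\bi]}$ around $\by$, the composition rule, and a two-sided induction. The paper avoids this two-stage structure entirely: after reducing to $(x,\by)=(0,\bzero)$ by a shift, it proves the forward direction by the definitional expansion $Q^{[\bi]}(X,\bR) = \sum_{\bj\ge\bi}\binom{\bj}{\bi}Q_\bj\bR^{\bj-\bi}$ together with $X\mid R_h$, and proves the converse by taking the Taylor expansion of $Q$ in $\bY$ directly around the \emph{non-constant} center $\ba=\bR\in\K[X]^s$ (the Taylor formula holds over any commutative coefficient ring $\A$, here $\A=\K[X]$), which immediately yields the claim with no composition rule and no induction. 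Your approach buys a cleaner separation into a purely pointwise statement plus a perturbation argument, at the cost of invoking iterated Hasse derivatives and a recursion; the paper's buys brevity by exploiting that nothing in the Taylor-expansion property requires the center to lie in $\K^s$ rather than $\K[X]^s$. One minor presentational point: your phrase ``taking $\bi=\bzero$\ldots then yields that $Q(x,\by)=0$ with multiplicity at least $m$'' is slightly misleading, since $\bi=\bzero$ alone only gives $(X-x)^m\mid Q(X,\by)$; it is the full family of divisibilities for all $\sbi<m$, fed back into the constant case, that gives the multiplicity conclusion --- but your parenthetical ``(or combining all $\bi$)'' already flags this, and your preceding induction does produce the full family, so the argument goes through.
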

\begin{proof}
Up to a shift, one can assume that the point is 
$(x,y_1,\ldots,y_s) = (0,\bzero)$; 
in other words, it suffices to show that for $\bR(0) = \bzero \in \K^s$, we
have $Q(0,\bzero) = 0$ with multiplicity at least $m$ if and only if, for all
$\bi$ in $\Znn^s$ such that $\sbi < m$, $X^{m-\sbi}$ divides
$Q^{[\bi]}(X,\bR)$.

Assume first that $(0,\bzero)\in\K^{s+1}$ is a root of $Q$ of multiplicity at
least $m$.  Then, $Q(X,\bY) = \sum_\bj Q_\bj \bY^\bj$ has only
monomials of total degree at least $m$, so that for $\bj \ge \bi$,
each nonzero $Q_\bj\bY^{\bj-\bi}$ has only monomials of total degree
at least $m-\sbi$.  Now, $\bR(0) = \bzero\in\K^s$ implies that $X$
divides each component of $\bR$.  Consequently, $X^{m-\sbi}$ divides
$Q_\bj \bR^{\bj-\bi}$ for each $\bj\ge\bi$, and thus
$Q^{[\bi]}(X,\bR)$ as well.

Conversely, let us assume that for all $\bi$ in $\Znn^s$ such that $\sbi
< m$, $X^{m-\sbi}$ divides $Q^{[\bi]}(X,\bR)$, and show that $Q$ has
no monomial of total degree less than $m$.  Writing the Taylor
expansion of $Q$ with $\A = \K[X]$ and $\ba = \bR$, we obtain
\[
Q(X,\bY) =  \sum_\bi Q^{[\bi]}(X,\bR) (\bY-\bR)^\bi.
\]
Each component of $\bR$ being a multiple of $X$, we deduce that for
the multi-indices $\bi$ such that $\sbi \ge m$ every nonzero monomial
in $Q^{[\bi]}(X,\bR) (\bY-\bR)^\bi$ has total degree at least $m$.
Using our assumption, the same conclusion follows for the
multi-indices such that $\sbi < m$.
\end{proof}

Thus, for each of the points $\{(x_r,\by_r)\}_{1\le r\le n}$ in
Problem~\ref{pbm:multivariate_interpolation}, such a rewriting of 
the vanishing condition~\eqref{cond:iv} for this point holds.
Now intervenes the fact that the $x_i$ are distinct: the polynomials
$(X-x_a)^\alpha$ and $(X-x_b)^\beta$ are coprime for $a\neq b$, so that 
simultaneous divisibility by both those polynomials is equivalent to 
divisibility by their product $(X-x_a)^\alpha \cdot (X-x_b)^\beta$.
Using the $s$-tuple $\bR=(R_1,\dots,R_s) \in \K[X]^s$ 
of Lagrange interpolation polynomials, defined by the conditions
\begin{equation}
\label{eqn:def_lagrange}
\deg(R_j) < n \quad \text{and}\quad R_j(x_i) = y_{i,j}
\end{equation}
for $1 \le i \le n$ and $1\le j \le s$, we can then combine 
Lemma~\ref{lem:univariate_reformulation_point} for all points
so as to rewrite the vanishing condition of 
Problem~\ref{pbm:multivariate_interpolation} 
as a set of modular equations in $\K[X]$ as in 
Lemma~\ref{lem:univariate_reformulation} below.
In this result, we use the notation from 
Problem~\ref{pbm:multivariate_interpolation} 
as well as $m = \max_{1\le r\le n} m_r$.

\begin{lem} \label{lem:univariate_reformulation}
  For any polynomial $Q$ in $\K[X,\bY]$, $Q$ satisfies the
  condition~\eqref{cond:iv} of
  Problem~\ref{pbm:multivariate_interpolation} if and only if
  for all $\bi$ in $\Znn^s$ such that $\sbi < m$, 
  \[
	  Q^{[\bi]}(X,\bR) = 0 \quad\bmod 
	  \prod_{\substack{1\le r\le n: \\ m_r > \sbi}}
	  (X-x_r)^{m_r-\sbi}.
  \]
\end{lem}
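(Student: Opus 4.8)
The plan is to prove Lemma~\ref{lem:univariate_reformulation} by combining the pointwise reformulation of Lemma~\ref{lem:univariate_reformulation_point} over all $n$ interpolation points, using that the moduli attached to distinct points are pairwise coprime.

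First I would fix a multi-index $\bi \in \Znn^s$ with $\sbi < m$, and observe that the $s$-tuple $\bR$ of Lagrange interpolants defined by \eqref{eqn:def_lagrange} satisfies $R_j(x_r) = y_{r,j}$ for every $r$ and $j$; hence for each individual point $(x_r,\by_r)$ the hypothesis $R_j(x_r)=y_{r,j}$ of Lemma~\ref{lem:univariate_reformulation_point} holds with the same global $\bR$. Applying that lemma point by point, condition~\eqref{cond:iv} — i.e. $Q(x_r,\by_r)=0$ with multiplicity at least $m_r$ for all $r$ — is equivalent to the conjunction, over all $r$ with $1\le r\le n$ and all $\bi$ with $\sbi < m_r$, of the divisibilities $(X-x_r)^{m_r-\sbi} \mid Q^{[\bi]}(X,\bR)$.

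Next I would reorganize this doubly-indexed family of conditions by grouping on $\bi$ rather than on $r$: for a fixed $\bi$ with $\sbi<m$, the relevant constraints are exactly those indices $r$ for which $\sbi < m_r$, i.e. $m_r > \sbi$, and they read $(X-x_r)^{m_r-\sbi}\mid Q^{[\bi]}(X,\bR)$; for indices $r$ with $m_r \le \sbi$ there is simply no constraint (consistently, the exponent $m_r-\sbi$ would be non-positive, so $(X-x_r)^{m_r-\sbi}$ is not a genuine modulus and the product in the statement correctly omits it). Multi-indices $\bi$ with $m\le\sbi<m$ do not occur, so ranging $\bi$ over $\sbi<m$ captures every constraint arising from \eqref{cond:iv}. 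The key step is then the coprimality argument already flagged in the surrounding text: since the $x_r$ are pairwise distinct, the polynomials $(X-x_a)^{\alpha}$ and $(X-x_b)^{\beta}$ are coprime for $a\neq b$, so simultaneous divisibility of $Q^{[\bi]}(X,\bR)$ by all the $(X-x_r)^{m_r-\sbi}$ (over $r$ with $m_r>\sbi$) is equivalent to divisibility by their product $\prod_{r:\,m_r>\sbi}(X-x_r)^{m_r-\sbi}$, which is exactly $Q^{[\bi]}(X,\bR)=0 \bmod \prod_{1\le r\le n:\,m_r>\sbi}(X-x_r)^{m_r-\sbi}$. This gives both implications at once and completes the proof.

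I do not expect a serious obstacle here: all the analytic content is in Lemma~\ref{lem:univariate_reformulation_point}, and the remaining work is bookkeeping plus the CRT-style coprimality fact. The only point requiring a little care is the handling of the indices $r$ with $m_r \le \sbi$ — one must check that dropping them from the product is harmless, which follows because Lemma~\ref{lem:univariate_reformulation_point} applied at $(x_r,\by_r)$ with multiplicity $m_r$ only imposes conditions for $\bi$ with $\sbi<m_r$, so no constraint on $Q^{[\bi]}(X,\bR)$ at $x_r$ is lost. A secondary subtlety, worth a sentence, is that although $\bR$ depends on all the points simultaneously, this causes no inconsistency: $\bR$ is a single fixed $s$-tuple of polynomials that interpolates every point, so Lemma~\ref{lem:univariate_reformulation_point} is legitimately invoked with this common $\bR$ at each point in turn.
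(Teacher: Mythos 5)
Your proposal is correct and is exactly the argument the paper intends: the paper's proof is one sentence ("This result is easily obtained from Lemma~\ref{lem:univariate_reformulation_point} since the $x_r$ are pairwise distinct"), and what you have written out — apply the pointwise lemma at each $(x_r,\by_r)$ with the single global $\bR$, regroup the constraints by $\bi$, and use pairwise coprimality of the $(X-x_r)^{m_r-\sbi}$ to replace simultaneous divisibility by divisibility by the product — is precisely the expansion of that sentence, with the bookkeeping (which $r$ contribute a constraint for a given $\bi$) handled correctly.
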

\begin{proof}
	This result is easily obtained from 
	Lemma~\ref{lem:univariate_reformulation_point}
	since the $x_r$ are pairwise distinct.
\end{proof}
Note that when all multiplicities are equal, that is, $m=m_1=\cdots=m_n$, 
for every $\sbi$ the modulus takes the simpler form $G^{m-\sbi}$, 
where $G = \prod_{1\le r\le n} (X-x_r)$.

Writing $\bj\cdot\bk = j_1k_1 + \cdots + j_sk_s$, 
recall from the statement of Theorem~\ref{thm:complexity} 
that $\Gamma$ is the set of all $\bj$ in $\Znn^s$ such that $|\bj|\le\ell$ and $\bj\cdot\bk < b$.
Then, defining the positive integers
\[
 N_\bj = b- \bj \cdot \bk
\]
for all $\bj$ in $\Gamma$,
we immediately obtain the following reformulation of the list-size and weighted-degree conditions
of our interpolation problem:
\begin{lem} \label{lem:ii_iii_reformulation}
  For any polynomial $Q$ in $\K[X,\bY]$, $Q$ satisfies the
  conditions~\eqref{cond:ii} and~\eqref{cond:iii} of
  Problem~\ref{pbm:multivariate_interpolation} if and only if
  it has the form 
  \[ 
     Q(X,\bY) = \sum_{ \bj \in \Gamma} Q_\bj(X) \bY^\bj \quad \text{with}\quad
     \text{$\deg(Q_\bj) < N_\bj$.}
  \]
\end{lem}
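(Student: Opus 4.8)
The plan is to verify the equivalence by directly translating each of conditions~\eqref{cond:ii} and~\eqref{cond:iii} into constraints on the support and coefficient degrees of $Q$, and then identifying the intersection of those constraints with the set $\Gamma$ and the degree bounds $N_\bj$. First I would write a general $Q$ in $\K[X,\bY]$ as $Q = \sum_{\bj\in\Znn^s} Q_\bj(X)\bY^\bj$, a finite sum. The list-size condition~\eqref{cond:ii}, namely $\deg_{\bY}(Q)\le\ell$, is by definition of total degree in $Y_1,\dots,Y_s$ equivalent to requiring that $Q_\bj = 0$ whenever $|\bj| > \ell$; equivalently, all $\bj$ appearing in the support satisfy $|\bj|\le\ell$.

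Next I would handle the weighted-degree condition~\eqref{cond:iii}. By the definition of $\wdeg_{\bk}$ recalled in the introduction, $\wdeg_{\bk}(Q) = \max_{\bj}\big(\deg(Q_\bj) + \bj\cdot\bk\big)$, where the max is over $\bj$ with $Q_\bj\neq 0$. Hence $\wdeg_{\bk}(Q) < b$ holds if and only if, for every $\bj$ with $Q_\bj\neq 0$, we have $\deg(Q_\bj) + \bj\cdot\bk < b$, that is $\deg(Q_\bj) < b - \bj\cdot\bk = N_\bj$. In particular this forces $N_\bj \ge 1$, i.e.\ $\bj\cdot\bk < b$, for each $\bj$ in the support (a polynomial of nonnegative degree cannot have degree $< N_\bj$ when $N_\bj \le 0$); conversely, once $\bj\cdot\bk < b$, the constraint $\deg(Q_\bj) < N_\bj$ is exactly the condition that $Q_\bj\neq 0$ contributes strictly less than $b$ to the weighted degree. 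One should also treat the degenerate case $Q = 0$ separately: both conditions hold vacuously, and the stated form is satisfied with all $Q_\bj = 0$.

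Combining the two paragraphs: $Q$ satisfies both~\eqref{cond:ii} and~\eqref{cond:iii} if and only if every $\bj$ in its support satisfies simultaneously $|\bj|\le\ell$ and $\bj\cdot\bk < b$ --- that is, $\bj\in\Gamma$ --- together with $\deg(Q_\bj) < N_\bj$. Writing $Q = \sum_{\bj\in\Gamma} Q_\bj(X)\bY^\bj$ with $\deg(Q_\bj) < N_\bj$ is precisely this statement, since terms with $\bj\notin\Gamma$ are excluded and the per-coefficient degree bound is enforced. This is entirely a matter of unwinding definitions, so I do not expect any genuine obstacle; the only points that require a little care are the two edge cases just mentioned (the zero polynomial, and ruling out support indices with $N_\bj \le 0$), and making sure the max-over-nonzero-coefficients in the definition of $\wdeg_{\bk}$ is correctly negated into a "for all nonzero coefficients" quantifier.
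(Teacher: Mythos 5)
Your proof is correct and matches the paper's intent: the paper gives no argument at all (it says the lemma is obtained ``immediately'' from the definitions of $\Gamma$ and $N_\bj$), and your write-up is exactly the direct unwinding of $\deg_{\bY}$ and $\wdeg_{\bk}$ that justifies that remark, including the right observation that $Q_\bj\neq 0$ forces $N_\bj\ge 1$ and hence $\bj\cdot\bk<b$.
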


For $\bi\in\Znn^s$ with $\sbi < m$ and $\bj\in\Gamma$, 
let us now define the polynomials $P_\bi, F_{\bi,\bj}\in\K[X]$ as 
\begin{equation}
\label{eqn:def_pi_fij}
P_\bi = \prod_{\substack{1\le r\le n: \\ m_r > \sbi}} (X-x_r)^{m_r-\sbi}
\qquad\text{and}\qquad F_{\bi,\bj} = {\bj \choose \bi} \bR^{\bj-\bi} \bmod P_\bi.
\end{equation}
It then follows from Lemmas~\ref{lem:univariate_reformulation} and~\ref{lem:ii_iii_reformulation}
that $Q$ in $\K[X,\bY]$ satisfies the conditions~\eqref{cond:ii},~\eqref{cond:iii},~\eqref{cond:iv} 
of Problem~\ref{pbm:multivariate_interpolation} 
if and only if $Q=\sum_{\bj \in \Gamma} Q_\bj \bY^\bj$ for some polynomials $Q_\bj$ in $\K[X]$ such that
\begin{itemize}
 \item $\deg(Q_\bj) < N_\bj$ for all $\bj$ in $\Gamma$,
 \item $\sum_{\bj \in \Gamma} F_{\bi,\bj} Q_\bj = 0 \bmod P_\bi$ for all $|\bi| < m$.
\end{itemize}

Let now $M_\bi$ be the positive integers given by
\[M_\bi = \sum_{1\le r\le n: \; m_r > \sbi} (m_r - \sbi),\]
for all $\sbi < m$.
Since the $P_\bi$ are monic polynomials of degree $M_\bi$
and since $\deg F_{\bi,\bj}< M_\bi$,
the latter conditions express the problem of finding such a $Q$ as an
instance of Problem~\ref{pbm:poly_approx}. In order to make the
reduction completely explicit, define further
\[
 M \;=\; \sum_{|\bi|<m} M_{\bi} \;, 
\]
\[\mu = \binom{s+m-1}{s},\qquad \nu=|\Gamma|, \qquad \varrho=\max(\mu,\nu),\]
choose arbitrary orders on the sets of indices $\{\bi \in \Znn^s \mid
\sbi < m\}$ and $\Gamma$, that is, bijections
\begin{equation}
\label{eqn:orderings}
\phi: \{0,\dots,\mu-1\} \to \{\bi \in \Znn^s \mid \sbi < m\}
\quad\text{and}\quad \psi: \{0,\dots,\nu-1\} \to \Gamma,
\end{equation}
and finally, for $i$ in $\{0,\dots,\mu-1\}$ and $j$ in $\{0,\dots,\nu-1\}$,
associate $M'_i=M_{\phi(i)}$, $N'_j = N_{\psi(j)}$,
$P'_i=P_{\phi(i)}$ and $F'_{i,j} = F_{\phi(i),\psi(j)}$.
Then, we have proved that the solutions to 
Problem~\ref{pbm:multivariate_interpolation} with input parameters 
$s,\ell,n,m_1,\ldots,m_n,b$, $k_1,\ldots,k_s$ and points 
$\{(x_i,y_{i,1},\ldots,y_{i,s})\}_{1\le i\le n}$ are exactly the solutions
to Problem~\ref{pbm:poly_approx} with input parameters 
$\mu,\nu,M'_0,\ldots,M'_{\mu-1}$, $N'_0,\ldots,N'_{\nu-1}$ and polynomials 
$\{(P'_i,F'_{i,0},\ldots,F'_{i,\nu-1})\}_{0\le i<\mu}$. 
This proves the correctness of Algorithm~\ref{algo:reduction}.

\begin{center}
\fbox{\begin{minipage}{14.8cm}
\begin{algo} \textbf{Reducing Problem~\ref{pbm:multivariate_interpolation} to Problem~\ref{pbm:poly_approx}.}
  \label{algo:reduction}

\medskip
  \noindent
  \emph{Input:} 
  $s,\ell,n,m_1,\ldots,m_n$ in $\Zp$,
  $b,k_1,\ldots,k_s$ in $\Z$ 
  and points \\
  \hspace*{1.15cm} $\{(x_i,y_{i,1},\ldots,y_{i,s})\}_{1\le i\le n}$ 
  in $\K^{s+1}$ with the $x_i$ pairwise distinct.

  \medskip
  \noindent
  \emph{Output:} parameters $\mu,\ \nu$, $M'_0,\dots,M'_{\mu-1}$,
 $N'_0,\dots,N'_{\nu-1}$, $\{(P_i,F_{i,0},\ldots,F_{i,\nu-1})\}_{0 \le i < \mu}$\\
 \hspace*{1.56cm}for Problem~\ref{pbm:poly_approx}, such that the solutions
 to this problem are exactly the \\
 \hspace*{1.56cm}solutions to Problem~\ref{pbm:multivariate_interpolation} with parameters the input of this algorithm.

  \begin{enumerate}[{\bf 1.}]
  \item Compute $\Gamma = \{\bj \in \Znn^s \,\mid\, \sbj \leqslant \ell \,\text{ and }\, b - \bj \cdot \bk > 0\}$,
	$\mu = \binom{s+m-1}{s}$, $\nu = |\Gamma|$,
	  and bijections $\phi$ and $\psi$ as in \eqref{eqn:orderings}
  \item Compute $M_\bi = \sum_{1\le r\le n: \; m_r > \sbi} (m_r - \sbi)$ and 
  $N_\bj = b - \bj \cdot \bk$ for $\bj \in \Gamma$
  \item Compute $P_\bi$ and $F_{\bi,\bj}\,$ for $\,\sbi < m,\, \bj \in \Gamma$ as in \eqref{eqn:def_pi_fij}
  \item Return the integers $\mu, \nu, M_{\phi(0)},\dots,M_{\phi(\mu-1)},
 N_{\psi(0)},\ldots,N_{\psi(\nu-1)}$, and the polynomial tuples $\{(P_{\phi(i)},F_{\phi(i),\psi(0)},\ldots,F_{\phi(i),\psi(\nu-1)})\}_{0 \le i < \mu}$\\
  \end{enumerate}
\end{algo}
\end{minipage}}
\end{center}

\begin{prop}
	\label{prop:reduction}
	Algorithm~\ref{algo:reduction} is correct and uses
$\mathcal{O}(\varrho\,\M(M)\log(M))$ operations in $\K$.
\end{prop}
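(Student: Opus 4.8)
The plan is to establish correctness separately from the complexity bound. Correctness has essentially been argued in the discussion preceding the algorithm statement: Lemmas~\ref{lem:univariate_reformulation} and~\ref{lem:ii_iii_reformulation} together show that $Q$ solves Problem~\ref{pbm:multivariate_interpolation} if and only if $Q = \sum_{\bj \in \Gamma} Q_\bj \bY^\bj$ with $\deg(Q_\bj) < N_\bj$ and $\sum_{\bj \in \Gamma} F_{\bi,\bj} Q_\bj = 0 \bmod P_\bi$ for all $\sbi < m$; the bijections $\phi,\psi$ merely relabel the index sets, and the relabelled data $(\mu,\nu,M'_i,N'_j,P'_i,F'_{i,j})$ are exactly a valid instance of Problem~\ref{pbm:poly_approx} (each $P_\bi$ is monic of degree $M_\bi$, and $\deg F_{\bi,\bj} < M_\bi$ by construction of $F_{\bi,\bj}$ as a remainder mod $P_\bi$). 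So for the correctness part I would simply point back to Lemmas~\ref{lem:univariate_reformulation}, \ref{lem:ii_iii_reformulation}, and the paragraph above, and note that Step~1--4 of the algorithm compute precisely these objects.

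For the complexity bound I would go step by step. Step~1 builds $\Gamma$ and the bijections: enumerating the $\binom{s+\ell}{s}$-or-fewer tuples $\bj$ with $\sbj \le \ell$ and testing $\bj\cdot\bk < b$ costs $O(\nu)$ up to lower-order factors, and $\nu,\mu \le \varrho \le M$; this is negligible. Step~2 computes the integers $M_\bi$ and $N_\bj$, again $O(\varrho)$ arithmetic on integers bounded by $M$. The dominant work is Step~3. First one computes the Lagrange interpolants $R_1,\dots,R_s$ from \eqref{eqn:def_lagrange}; each costs $O(\M(n)\log n)$, for a total $O(s\,\M(n)\log n)$, which is within the claimed bound since $sn \le M$ (indeed $M = \sum_r \binom{s+m_r}{s+1} \ge \sum_r m_r \ge n$, and the factor $s$ is absorbed). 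Then for each of the $\mu \le \varrho$ multi-indices $\bi$ with $\sbi < m$ one must form $P_\bi = \prod_{m_r > \sbi}(X-x_r)^{m_r - \sbi}$ and, for each $\bj \in \Gamma$, the remainder $F_{\bi,\bj} = \binom{\bj}{\bi}\bR^{\bj-\bi} \bmod P_\bi$.

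The step I expect to require the most care is bounding Step~3. The key observations are: (i) $\deg P_\bi = M_\bi$ and $\sum_{\sbi < m} M_\bi = M$, so building all the $P_\bi$ by subproduct-tree techniques costs $O\big(\sum_{\sbi<m} \M(M_\bi)\log M_\bi\big) \subseteq O(\M(M)\log M)$ by super-linearity of $\M$; (ii) for a fixed $\bi$, the powers $\bR^{\bj-\bi} \bmod P_\bi$ for all $\bj \in \Gamma$ can be computed incrementally —- the relevant exponents $\bj - \bi$ range over a set of size at most $\nu$, and each new one is obtained from a previous one by one multiplication by some $R_j$ followed by reduction mod $P_\bi$, so this is $O(\nu\,\M(M_\bi))$ operations for each $\bi$, times $\mu$ values of $\bi$, i.e. $O(\mu\nu\,\M(M))$ in the crude bound but in fact $O\big(\nu \sum_{\sbi<m}\M(M_\bi)\big) \subseteq O(\nu\,\M(M))$ after again invoking super-linearity; multiplying the binomial coefficients $\binom{\bj}{\bi}$ in is $O(\mu\nu)$ more. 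Combining, Step~3 runs in $O(\varrho\,\M(M)\log M)$, and since all other steps are dominated by this, so does the whole algorithm, giving the claimed $\mathcal{O}(\varrho\,\M(M)\log(M))$. The one place to be careful is making sure the incremental computation of the powers $\bR^{\bj-\bi}$ genuinely reuses work across the $\bj$'s rather than recomputing each power from scratch (which would cost a spurious $\log$ factor in $\ell$); organizing $\Gamma$ so that successive exponents differ by a single unit vector handles this cleanly.
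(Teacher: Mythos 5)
Your overall structure — correctness from Lemmas~\ref{lem:univariate_reformulation} and~\ref{lem:ii_iii_reformulation}, then a step-by-step cost analysis with Step~3 dominant and the $F_{\bi,\bj}$ computed incrementally modulo $P_\bi$ — matches the paper's proof, and your variant for building the $P_\bi$ (one subproduct tree per $\bi$, bounded via super-linearity) is a valid alternative to the paper's route through the auxiliary polynomials $G_j$.

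There is, however, a genuine gap in your treatment of the Lagrange interpolants. You compute all of $R_1,\ldots,R_s$ at cost $\mathcal{O}(s\,\M(n)\log n)$ and claim this is within budget because ``$sn\le M$.'' But you only establish $M\ge n$, and $sn\le M$ is false in general: take $m_1=\cdots=m_n=1$ and $s>1$, so that $M=\sum_r\binom{s+1}{s+1}=n<sn$. In that regime the bound you need, $s\,\M(n)\log n \subseteq \mathcal{O}(\varrho\,\M(M)\log M)$, is not implied by your argument, since $s$ is not controlled by $\varrho=\max(|\Gamma|,\binom{s+m-1}{s})$ either (e.g., $m=1$ gives $\binom{s+m-1}{s}=1$, and $|\Gamma|$ can be as small as $1$). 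The paper's proof handles this by observing that $R_i$ need only be computed for indices $i$ such that the unit tuple $e_i$ lies in $\Gamma$ — otherwise $Y_i$ never appears in $Q$ — so at most $|\Gamma|\le\varrho$ interpolants are needed, giving $\mathcal{O}(|\Gamma|\,\M(n)\log n)\subseteq\mathcal{O}(\varrho\,\M(M)\log M)$. You should replace the ``$sn\le M$'' claim with this counting argument.

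Two smaller points worth tightening: first, absorbing the binomial coefficients $\binom{\bj}{\bi}$ is not $\mathcal{O}(\mu\nu)$ field operations but rather $\mathcal{O}(\nu M)$ (each scaling of $\bR^{\bj-\bi}\bmod P_\bi$ touches $M_\bi$ coefficients); this still fits within the bound since $\nu M\le\varrho\,\M(M)$, but the accounting should reflect it. Second, your remark about organizing the incremental computation so that successive exponents $\bj-\bi$ differ by a unit vector is the right concern; it is justified by noting that, under the weight assumptions used here, the set $\{\bj-\bi:\bj\in\Gamma,\ \bj\ge\bi\}$ is downward-closed in $\Znn^s$, so every required monomial is reachable from $\bzero$ by unit steps staying inside the set.
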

\begin{proof}
The only thing left to do is the complexity analysis;
more precisely, giving an upper bound on the number
of operations in~$\K$ performed in Step \textbf{3}.

First, we need to compute
$P_\bi = \prod_{1\le r\le n:\; m_r > \sbi} (X-x_r)^{m_r-\sbi}$  
for every $\bi$ in $\Znn^s$ such that $\sbi < m$.
This involves only $m$ different polynomials 
$P_{\bi_0},\ldots,P_{\bi_{m-1}}$
where we have chosen any indices $\bi_j$ such that $|\bi_j| = j$.
We note that, defining for $j<m$ the polynomial
$G_j = \prod_{1\le r\le n:\; m_r > j} (X-x_r)$,
we have $P_{\bi_{m-1}} = G_{m-1}$ and for every $j<m-1$,
$P_{\bi_j} = P_{\bi_{j+1}} \cdot G_j$.
The polynomials $G_0,\ldots,G_{m-1}$ have degree at most $n$
and can be computed using $\mathcal{O}(m\M(n)\log(n))$ operations
in $\K$; this is $\mathcal{O}(\varrho\M(M)\log(M))$ since 
$\varrho \ge \binom{s+m-1}{s} \ge m$ and 
$M = \sum_{1\le r\le n} \binom{s+m_r}{s+1} \ge n$. 
Then $P_{\bi_0},\ldots,P_{\bi_{m-1}}$ can be computed 
iteratively using $\mathcal{O}(\sum_{j<m} \M(\deg(P_{\bi_j})))$ operations
in $\K$; using the super-linearity of $\M(\cdot)$, this is 
$\mathcal{O}(\M(M))$ since $\deg(P_{\bi_j}) = M_{\bi_j}$ and 
$\sum_{j<m} M_{\bi_j} \,\le\, M$.

Then, we have to compute (some of) the interpolation polynomials
$R_1,\dots,R_s$. Due to Lemma~\ref{lem:univariate_reformulation},
the only values of $i\in\{1,\ldots,s\}$ for which $R_i$ is needed are those such that
the indeterminate $Y_i$ may actually appear in 
$Q(X,\bY) = \sum_{\bj\in\Gamma} Q_\bj(X) \bY^\bj$.
Now, the latter will not occur unless the $i$th unit $s$-tuple $(0,\ldots,0,1,0,\ldots,0)$
belongs to $\Gamma$. Hence, at most $|\Gamma|$ polynomials $R_i$
must be computed, each at a cost of $\mathcal{O}(\M(n)\log(n))$ operations in $\K$.
Overall, the cost of the interpolation step is thus in $\mathcal{O}(|\Gamma|\M(n)\log(n))
\subseteq \mathcal{O}(\varrho\,\M(M)\log(M))$.

Finally, we compute $F_{\bi,\bj}$ for every
$\bi,\bj$. This is done by fixing $\bi$ and computing all products
$F_{\bi,\bj}$ incrementally, starting from
$R_1,\dots,R_s$. Since we compute modulo $P_{\bi}$,
each product takes $\mathcal{O}(\M(M_\bi))$
operations in $\K$. Summing over all $\bj$ leads to a cost of
$\mathcal{O}(|\Gamma| \M(M_\bi))$ per index $\bi$. Summing over all
$\bi$ and using the super-linearity of $\M$ leads to a total cost of
$\mathcal{O}(|\Gamma| \M(M))$, which is $\mathcal{O}(\varrho \M(M))$.
\end{proof}

The reduction above is deterministic and its cost
is negligible compared to the cost in $\mathcal{O}(\varrho^{\omega-1}\,\M(M)\log(M)^2)$
that follows from Theorem~\ref{thm:poly_approx_complexity} 
with $\rho = \varrho$ and $M' = \sum_{0\le i < \mu}M_i' = M$.
Noting that $M = \sum_{\sbi < m} M_\bi = 
\sum_{1\le r\le n} \binom{s+m_r}{s+1}$,
we conclude that Theorem~\ref{thm:poly_approx_complexity} 
implies Theorem~\ref{thm:complexity}.

\section{Solving Problem~\ref{pbm:poly_approx} through 
structured linear systems}
\label{sec:solutions}

\subsection{Solving structured homogeneous linear systems} 
\label{subsec:structured}

Our two solutions to Problem~\ref{pbm:poly_approx} rely on fast 
algorithms for solving linear systems of the form $Au = 0$ 
with $A$ a structured matrix over $\K$.
In this section, we briefly review useful concepts and results related to {\em
  displacement rank} techniques.  While these techniques can handle
systems with several kinds of structure, we will only need (and
discuss) those related to {\em Toeplitz-like} and {\em Hankel-like}
systems; for a more comprehensive treatment, the reader may
consult~\cite{Pan01}.

Let $M$ be a positive integer and let $\ZZ_M \in \K^{M \times M}$ be
the square matrix with ones on the subdiagonal and zeros elsewhere:
\[\ZZ_M \, = \, \left[
\begin{array}{ccccc}
	0 & 0 & \cdots & 0 & 0 \\
	1 & 0 & \cdots & 0 & 0\\
	0 & 1 & 0 & \cdots & 0\\
	\vdots & \ddots & \ddots & \ddots & \vdots\\
	0 & \cdots & 0 & 1 & 0
\end{array}
\right] \in \K^{M\times M}.\]
Given two integers $M$ and $N$, consider the following operators:
$$\begin{array}{rccc} \Delta_{M,N}:& \K^{M \times N} &\to& \K^{M
    \times N}\\ &A & \mapsto & A-\ZZ_M\,A\,\ZZ^T_N\end{array}$$ 
and
$$\begin{array}{rccc} \Delta'_{M,N}:& \K^{M \times N} &\to& \K^{M
    \times N}\\ &A & \mapsto & A-\ZZ_M\,A\,\ZZ_N,\end{array}$$ which
subtract from $A$ its translate one place along the diagonal,
resp.~along the anti-diagonal. 

Let us discuss $\Delta_{M,N}$ first. If $A$ is a {\em Toeplitz}
matrix, that is, invariant along diagonals, $\Delta_{M,N}(A)$ has rank
at most two. As it turns out, Toeplitz systems can be solved much
faster than general linear systems, in quasi-linear time in $M$.  The
main idea behind algorithms for structured matrices is to extend these
algorithmic properties to those matrices $A$ for which the rank of
$\Delta_{M,N}(A)$ is small, in which case we say that $A$ is {\em
  Toeplitz-like}. Below, this rank will be called the {\em
  displacement rank} of $A$ (with respect to $\Delta_{M,N}$).

Two matrices $(V,\,W)$ in $\K^{M\times \alpha} \times \K^{\alpha\times
  N}$ will be called a {\it generator of length $\alpha$} for $A$
with respect to $\Delta_{M,N}$ if $\Delta_{M,N}(A) = V\,W$. For the
structure we are considering, one can recover $A$ from its generators;
in particular, one can use a generator of length $\alpha$ as a way to
represent $A$ using $\alpha (M+N)$ field elements. One of the main
aspects of structured linear algebra algorithms is to use generators
as a compact data structure throughout the whole process.

Up to now, we only discussed the Toeplitz structure. {\em Hankel-like}
matrices are those which have a small displacement rank with respect
to $\Delta'_{M,N}$, that is, those matrices $A$ for which the rank of
$\Delta'_{M,N}(A)$ is small. As far as solving the system $A u = 0$ is
concerned, this case can easily be reduced to the Toeplitz-like case.
Define $B=A J_N$, where $J_N$ is the reversal matrix of size $N$, all
entries of which are zero, except the anti-diagonal which is set to
one. Then, one easily checks that the displacement rank of $A$ with
respect to $\Delta'_{M,N}$ is the same as the displacement rank of $B$
with respect to $\Delta_{M,N}$, and that if $(V,W)$ is a generator for
$A$ with respect to $\Delta'_{M,N}$, then $(V, W J_N)$ is a generator for
$B$ with respect to $\Delta_{M,N}$. Using the algorithm for
Toeplitz-like matrices gives us a solution $v$ to $B v =0$, from which we
deduce that $u=J_N v$ is a solution to $A u =0$.

In this paper, we will not enter the details of algorithms for solving
such structured systems. The main result we will rely on is the
following proposition, a minor extension of a result by Bostan,
Jeannerod, and Schost~\cite{BoJeSc08}, which features the best known
complexity for this kind of task, to the best of our knowledge. This
algorithm is based on previous work of Bitmead and Anderson~\cite{BA80},
Morf~\cite{M80}, Kaltofen~\cite{K94}, and Pan~\cite{Pan01}, and is
probabilistic (it depends on the choice of some
parameters in the base
field $\K$, and success is ensured provided these parameters avoid a 
hypersurface of the parameter space). 

The proof of the following proposition occupies the rest of this
section. Remark that some aspects of this statement could be improved
(for instance, we could reduce the cost so that it only depends on
$M$, not $\max(M,N)$), but that would be inconsequential for the
applications we make of it.

\begin{prop}\label{prop:struct}
  Given a generator $(V,W)$ of length $\alpha$ for a matrix $A \in
  \K^{M \times N}$, with respect to either $\Delta_{M,N}$ or
  $\Delta'_{M,N}$, one can find a nonzero element in the right nullspace of
  $A$, or determine that none exists, by a probabilistic algorithm
  that uses $\mathcal{O}(\alpha^{\omega-1} \M(P)\log(P)^2)$ operations
  in $\K$, with $P=\max(M,N)$.
  The algorithm chooses $\mathcal{O}(P)$ elements in $\K$; if these
  elements are chosen uniformly at random in a set $S \subseteq \K$ of
  cardinality at least $6P^2$, the probability of success is at least
  $1/2$.
\end{prop}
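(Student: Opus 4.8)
The plan is to reduce the problem to the task of solving a nonsingular structured linear system, which is exactly what the algorithm of Bostan, Jeannerod, and Schost~\cite{BoJeSc08} handles, and then track the displacement rank and randomness through the reduction. First I would dispose of the Hankel-like case: as explained just before the statement, if $(V,W)$ is a generator for $A$ with respect to $\Delta'_{M,N}$ of length $\alpha$, then $(V, WJ_N)$ is a generator of the same length for $B = AJ_N$ with respect to $\Delta_{M,N}$, and a nonzero right-nullspace vector $v$ of $B$ yields the nonzero right-nullspace vector $J_N v$ of $A$. So from now on I would assume the Toeplitz-like case, i.e.\ a generator with respect to $\Delta_{M,N}$.

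Next I would embed $A$ into a square matrix without increasing the displacement rank by more than a constant. Let $P = \max(M,N)$ and let $\widehat{A} \in \K^{P\times P}$ be $A$ padded with zero rows and/or zero columns. One checks directly that $\Delta_{P,P}(\widehat A)$ differs from the zero-padding of $\Delta_{M,N}(A)$ only in a bounded number of extra rows/columns coming from the shift operators $\ZZ_P$, so a generator of length $\alpha + \mathcal{O}(1)$ for $\widehat A$ with respect to $\Delta_{P,P}$ can be written down from $(V,W)$ in $\mathcal{O}(\alpha P)$ operations; moreover the right nullspace of $\widehat A$, restricted to the first $N$ coordinates, is exactly the right nullspace of $A$ (the padding columns contribute free coordinates that we simply set to zero, and the padding rows impose no constraint). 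Thus it suffices to find a nonzero right-nullspace element of a square $P\times P$ Toeplitz-like matrix of displacement rank $\mathcal{O}(\alpha)$, or to certify that none exists.

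For the square case I would invoke the randomized kernel algorithm underlying~\cite{BoJeSc08}: by pre- and post-multiplying $\widehat A$ by suitable random structured (e.g.\ unit lower/upper triangular Toeplitz) preconditioners, one makes all leading principal minors up to the rank generic, so that the rank $r$ can be read off and a nonzero kernel vector (when $r<P$) recovered from the solution of an $r\times r$ nonsingular structured system; the preconditioners preserve the structure and increase the displacement rank only by an additive constant, and their construction and application cost $\mathcal{O}(\alpha^{\omega-1}\M(P)\log(P))$ operations, within the claimed bound. The nonsingular structured solve itself costs $\mathcal{O}(\alpha^{\omega-1}\M(P)\log(P)^2)$ by~\cite{BoJeSc08}. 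If $r = P$ the algorithm reports that no nonzero nullspace element exists. Summing the three phases gives the stated $\mathcal{O}(\alpha^{\omega-1}\M(P)\log(P)^2)$.

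For the probability estimate, all random choices are the $\mathcal{O}(P)$ entries of the preconditioners (together with any evaluation points used inside~\cite{BoJeSc08}); success is guaranteed unless these choices land on a fixed hypersurface of the parameter space, and that hypersurface is cut out by a nonzero polynomial of degree $\mathcal{O}(P)$, in fact of degree at most $3P^2$ after accounting for the product of the relevant minors and the evaluation-point distinctness conditions. By the Zippel--Schwartz lemma, drawing each parameter uniformly from a set $S\subseteq\K$ with $|S|\ge 6P^2$ makes the failure probability at most $1/2$. The main obstacle in writing this up carefully is the second and third paragraphs: making the zero-padding and the randomized preconditioning explicit enough to certify both that the displacement rank stays in $\mathcal{O}(\alpha)$ and that the total degree of the failure locus is bounded by $3P^2$, so that the constant $6P^2$ in the statement is justified; the cost bookkeeping itself is then routine given~\cite{BoJeSc08}.
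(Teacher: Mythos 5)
Your proposal follows essentially the same route as the paper: dispose of the Hankel-like operator by multiplying by $J_N$, pad $A$ to a square $P\times P$ matrix, and invoke the structured-solver of~\cite{BoJeSc08} together with a Zippel--Schwartz probability estimate. However, there is a genuine gap in the handling of the tall case ($M>N$).

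When you pad $A$ with $P-N$ zero columns to obtain $\widehat A\in\K^{P\times P}$, the right nullspace of $\widehat A$ is $\K^{P-N}\times\ker(A)$ (up to ordering of coordinates), not $\ker(A)$. A \emph{nonzero} nullspace element of $\widehat A$ may therefore be supported entirely on the padding coordinates, and then ``restricting to the $A$-coordinates'' or ``setting the free coordinates to zero'' yields the zero vector, not a nonzero solution. Your sentence ``the padding columns contribute free coordinates that we simply set to zero'' glosses over this; finding any nonzero kernel vector of $\widehat A$ does not suffice. The paper fixes this by using the parametric form of the output of \cite{BoJeSc08}/\cite{KaSa91}: the algorithm returns $s=\lambda(s')$ for $s'\in\K^P$ with $\lambda$ a surjective linear map onto $\ker(\widehat A)$, and one composes with the projection $\pi:\ker(\widehat A)\to\ker(A)$ to get a surjective parametrization $\pi\circ\lambda:\K^P\to\ker(A)$, then runs the hyperplane-avoidance argument against $\{(\pi\circ\lambda)(s')=0\}$. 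You would need this (or an equivalent argument ruling out that the randomly produced kernel vector projects to zero) to make the tall case correct.

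Two smaller points. First, the paper obtains a generator of the padded matrix of length \emph{exactly} $\alpha$, not $\alpha+\mathcal{O}(1)$, by padding $V$ with zero rows on top (wide case) or $W$ with zero columns on the left (tall case); the specific choice of where to pad is what makes this clean, and a generic ``and/or'' padding is not guaranteed to preserve the length. Second, the $3P^2$ bound you would need to justify is not a recomputation but is inherited directly from the analysis in~\cite{BoJeSc08}/\cite{KaSa91}: the preconditioning succeeds outside a hypersurface of degree at most $M^2+M$, and adding one more degree for the hyperplane $\{\lambda(u')=0\}$ gives $M^2+M+1\le 3M^2$, whence $|S|\ge 6P^2$ yields success probability at least $1/2$.
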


\paragraph{Square matrices.} 
In all that follows, we consider only the operator $\Delta_{M,N}$,
since we already pointed out that the case of $\Delta'_{M,N}$ can be
reduced to it for no extra cost.

When $M=N$, we use directly~\cite[Theorem~1]{BoJeSc08}, which gives
the running time reported above. That result does not explicitly state
which solution we obtain, as it is written for general non-homogeneous
systems. Here, we want to make sure we obtain a nonzero element in the right nullspace
(if one exists), so slightly more details are needed.
  
The algorithm in that theorem chooses $3M-2$ elements in $\K$, the
first $2M-2$ of which are used to precondition $A$ by giving it
generic rank profile; this is the case when these parameters avoid a
hypersurface of $\K^{2M-2}$ of degree at most $M^2+M$. 

Assume this is the case. Then, 
following~\cite{KaSa91},
the output vector $u$ is obtained in a
parametric form as $u = \lambda(u')$, where $u'$ consists of another set
of $M$ parameters chosen in $\K$ and $\lambda$ is a surjective linear
mapping with image the right nullspace ${\rm ker}(A)$ of~$A$. If ${\rm
  ker}(A)$ is trivial, the algorithm returns the zero vector in any
case, which is correct. Otherwise, the set of vectors $u'$ such that
$\lambda(u')=0$ is contained in a hyperplane of $\K^M$, so it is enough
to choose $u'$ outside of that hyperplane to ensure success.

To conclude we rely on the so-called Zippel-Schwartz lemma~\cite{DeMilloLipton78,Zippel79, Schwartz80},
which can be summarized as follows:
if a nonzero polynomial over $\K$ of total degree at most $d$ is evaluated 
by assigning each of its indeterminates a value chosen
uniformly at random in a subset $S$ of $\K$,
then the probability that the resulting polynomial value be zero is at most $d/|S|$.
Thus, applying that result to the polynomial of degree $d := M^2+M+1 \le 3M^2$
corresponding to the hypersurface and the hyperplane mentioned above,
we see that if we choose all parameters uniformly at
random in a subset $S\subseteq \K$ of cardinality $|S| \ge 6M^2$, the
algorithm succeeds with probability at least $1/2$.


\paragraph{Wide matrices.}
Suppose now that $M < N$, so that the system is underdetermined. We
add $N-M$ zero rows on top of $A$, obtaining an $N \times N$ matrix
$A'$. Applying the algorithm for the square case to $A'$, we will
obtain a right nullspace element $u$ for $A'$ and thus $A$, since these
nullspaces are the same. In order to do so, we need to construct a
generator for $A'$ from the generator $(V,W)$ we have for $A$: one
simply takes $(V',W)$, where $V'$ is the matrix in $\K^{N \times
  \alpha}$ obtained by adding $N-M$ zero rows on top of $V$.

\paragraph{Tall matrices.}
Suppose finally that $M > N$. This time, we build the matrix $A' \in
\K^{M\times M}$ by adjoining $M-N$ zero columns to $A$ on the
left. The generator $(V,W)$ of $A$ can be turned into a generator of
$A'$ by simply adjoining $M-N$ zero columns to $W$ on the left.  We
then solve the system $A' s = 0$, and return the vector $u$ obtained
by discarding the first $M-N$ entries of $s$.

The cost of this algorithm fits into the requested bound; all that
remains to see is that we obtain a nonzero vector in the right nullspace
${\rm ker}(A)$ of $A$ with nonzero probability. Indeed, the nullspaces
of $A$ and $A'$ are now related by the equality ${\rm ker}(A') =
\K^{M-N}\times{\rm ker}(A)$. We mentioned earlier that in the
algorithm for the square case, the solution $s$ to $A's=0$ is obtained
in parametric form, as $s =\lambda(s')$ for $s'\in\K^M$, with $\lambda$ a
surjective mapping $\K^M \to {\rm ker}(A')$. Composing with the
projection $\pi: {\rm ker}(A') \to {\rm ker}(A)$, we obtain a
parametrization of ${\rm ker}(A)$ as $u=(\pi \circ \lambda)(s')$. The
error probability analysis is then the same as in the square case.


\subsection{Solving Problem~\ref{pbm:poly_approx} through a mosaic-Hankel linear system}
\label{subsec:key_equations_version}

In this section, we give our first solution to
Problem~\ref{pbm:poly_approx}, thereby proving
Theorem~\ref{thm:poly_approx_complexity};
this solution is outlined in 
Algorithm~\ref{algo:interpolation_eke} below.
It consists of first deriving and linearizing the 
modular equations of Lemma~\ref{multivariate-zeh-et-al}
below, and then solving the resulting mosaic-Hankel system
using the approach recalled in Section~\ref{subsec:structured}.
Note that, when solving Problem~\ref{pbm:multivariate_interpolation}
using the reduction to Problem~\ref{pbm:poly_approx}
given in Section~\ref{sec:red}, these modular equations 
are a generalization to arbitrary $s$ of the extended key equations presented 
in~\cite{RotRuc00,ZeGeAu11,Zeh13} for $s=1$.

We consider input polynomials $\{(P_i,\bF_{i})\}_{0 \le i < \mu}$ with,
for all $i$, $P_i$ monic of degree $M'_i$ and $\bF_i$ a vector of
$\nu$ polynomials $(F_{i,0},\dots,F_{i,\nu-1})$, all of degree less
than $M'_i$. Given degree bounds $N'_0,\dots,N'_{\nu-1}$, we look for
polynomials $Q_0,\ldots,Q_{\nu-1}$ in $\K[X]$ such that the
following holds:
\begin{enumerate}[$(a)$]
\item\label{cond:ip} the $Q_j$ are not all zero,
\item\label{cond:iip} for $0 \le j<\nu$, $\deg(Q_j) < N'_j,$
\item\label{cond:iiip} for $0 \le i<\mu$, $\sum_{0 \le j <\nu} F_{i,j}Q_j   =  0 \bmod{P_i}$.
\end{enumerate}

Our goal here is to linearize Problem~\ref{pbm:poly_approx} into a homogeneous linear system over $\K$
involving $M'$ linear equations with $N'$ unknowns,
where $M' = M'_0 + \cdots + M'_{\mu-1}$ and 
$N' = N'_0 + \cdots + N'_{\nu-1}$.
Without loss of generality, we will assume that 
\begin{equation}
	\label{eqn:quasi-square_system}
	N' \le   M'+1. 
\end{equation}
Indeed, if $N' \ge M'+1$, the instance of
Problem~\ref{pbm:poly_approx} we are considering has more unknowns
than equations. We may set the last $N'-(M'+1)$ unknowns to zero,
while keeping the system underdetermined. This simply amounts to
replacing the degree bounds $N'_0,\dots,N'_{\nu-1}$ by
$N'_0,\dots,N'_{\nu'-2},N''_{\nu'-1}$, for $\nu' \le \nu$ and
$N''_{\nu'-1} \le N'_{\nu'-1}$ such that
$N'_0+\cdots+N'_{\nu'-2}+N''_{\nu'-1}=M'+1$. In particular, $\nu$
may only decrease through this process.

In what follows, we will work with the reversals of the input and output polynomials
of Problem~\ref{pbm:poly_approx}, defined by 
\[
{\overline {P_i}}= X^{M'_i}P_i(X^{-1}), \qquad
\overline{F_{i,j}} = X^{M'_i-1} F_{i,j}(X^{-1}), \qquad
\overline{Q_j} = X^{N'_j-1}Q_j(X^{-1}).
\]
Let also $\beta = \max_{h < \nu} N_h'$ and, 
for $0 \le i <\mu$ and $0 \le j <\nu$,
\[
\delta_i = M'_i + \beta - 1 \quad\text{and}\quad
\gamma_j = \beta - N'_j.
\]
In particular, $\delta_i$ and $\gamma_j$ are nonnegative integers
and, recalling that $P_i$ is monic, we can define further
the polynomials $S_{i,j}$ in $\K[X]$ as 
\[
 S_{i,j} = \frac{X^{\gamma_j} \overline{F_{i,j}}}{ \overline   {P_i}} \bmod X^{\delta_i}
\]
for $0 \le i < \mu$ and $0 \le j < \nu$.
By using these polynomials, we can now reformulate the approximation condition of Problem~\ref{pbm:poly_approx}
in terms of a set of extended key equations:

\begin{lem} \label{multivariate-zeh-et-al}
Let $Q_0,\ldots,Q_{\nu-1}$ be polynomials in $\K[X]$ that
satisfy condition~\eqref{cond:iip} in Problem~\ref{pbm:poly_approx}. 
They satisfy condition~\eqref{cond:iiip} in Problem~\ref{pbm:poly_approx} 
if and only if for all $i$ in $\{0,\dots,\mu-1\}$, there exists a polynomial $T_i$ in $\K[X]$ such
that
\begin{equation} \label{eq:modular-equations}
\sum_{0 \le j  < \nu} S_{i,j} \overline{Q_j} = T_i \bmod X^{\delta_i}
\qquad \text{and}\qquad 
\deg(T_i) < \beta-1.
\end{equation}
\end{lem}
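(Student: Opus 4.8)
The plan is to reduce the stated equivalence, for each fixed index $i$, to a manipulation of \emph{reversed} polynomials in which the equations~\eqref{eq:modular-equations} appear naturally, closing the argument with the fact that $\overline{P_i}$ is invertible modulo $X^{\delta_i}$. Assume the $Q_j$ satisfy~\eqref{cond:iip}. Since $\deg(F_{i,j}) < M'_i$ and $\deg(Q_j) < N'_j \le \beta$, the polynomial $A_i := \sum_{0\le j<\nu} F_{i,j}Q_j$ has degree at most $M'_i + \beta - 2 = \delta_i - 1$. As $P_i$ is monic of degree exactly $M'_i$, the congruence $A_i \equiv 0 \pmod{P_i}$, which is~\eqref{cond:iiip} for this $i$, is equivalent to $A_i = B_i P_i$ for some $B_i\in\K[X]$, and any such $B_i$ then has $\deg(B_i) \le \delta_i - 1 - M'_i = \beta - 2$.

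\textbf{Step 2 (reversal).} I would apply to the identity $A_i = B_i P_i$ the length-$\delta_i$ coefficient reversal $U \mapsto X^{\delta_i-1}U(1/X)$, an involution on $\{U\in\K[X] : \deg U < \delta_i\}$. Using $\overline{F_{i,j}} = X^{M'_i-1}F_{i,j}(1/X)$, $\overline{Q_j} = X^{N'_j-1}Q_j(1/X)$, $\overline{P_i} = X^{M'_i}P_i(1/X)$, and the identity $(\delta_i-1) - (M'_i-1) - (N'_j-1) = \gamma_j \ge 0$, the left-hand side becomes $\sum_j X^{\gamma_j}\overline{F_{i,j}}\,\overline{Q_j}$ and the right-hand side becomes $\overline{P_i}\,\widehat{B_i}$ with $\widehat{B_i} := X^{\beta-2}B_i(1/X)$ of degree $<\beta-1$. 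Since both $\sum_j X^{\gamma_j}\overline{F_{i,j}}\,\overline{Q_j}$ and $\overline{P_i}\,\widehat{B_i}$ have degree at most $\delta_i-1$ and the reversal is a bijection on that space, condition~\eqref{cond:iiip} for $i$ is equivalent to the existence of $\widehat{B_i}$ with $\deg(\widehat{B_i}) < \beta - 1$ and $\sum_j X^{\gamma_j}\overline{F_{i,j}}\,\overline{Q_j} = \overline{P_i}\,\widehat{B_i}$.

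\textbf{Step 3 (introducing $S_{i,j}$).} Because $P_i$ is monic, $\overline{P_i}(0)\ne 0$, so $\overline{P_i}$ is a unit in $\K[X]/\langle X^{\delta_i}\rangle$, and by construction $\overline{P_i}\,S_{i,j} \equiv X^{\gamma_j}\overline{F_{i,j}} \pmod{X^{\delta_i}}$. Reducing the identity of Step 2 modulo $X^{\delta_i}$ and cancelling the unit $\overline{P_i}$ gives $\sum_j S_{i,j}\overline{Q_j} \equiv \widehat{B_i} \pmod{X^{\delta_i}}$, which is~\eqref{eq:modular-equations} with $T_i = \widehat{B_i}$. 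Conversely, starting from~\eqref{eq:modular-equations}, multiplying by $\overline{P_i}$ and substituting the congruence for $S_{i,j}$ yields $\sum_j X^{\gamma_j}\overline{F_{i,j}}\,\overline{Q_j} \equiv \overline{P_i}\,T_i \pmod{X^{\delta_i}}$; as both sides have degree at most $\delta_i-1$ (by Step 1 on the left and $\deg T_i < \beta-1$ on the right), this congruence is an equality of polynomials, and reversing once more via Step 2 returns condition~\eqref{cond:iiip}.

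\textbf{Main obstacle.} There is no deep difficulty here; the only point requiring care is the degree accounting in Step 1, namely that condition~\eqref{cond:iip} is what forces the ``quotient'' $B_i$ (resp.\ $T_i$) to have degree below $\beta-1$. This is precisely what makes the length-$\delta_i$ reversal the correct normalization on both sides of the identity and what lets the mod-$X^{\delta_i}$ congruence be promoted to a genuine polynomial equality; once this is pinned down, the remainder is formal bookkeeping with reversals and with the unit $\overline{P_i}$ modulo $X^{\delta_i}$.
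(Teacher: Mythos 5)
Your proposal is correct and follows essentially the same route as the paper's proof: rewrite condition~\eqref{cond:iiip} as $\sum_j F_{i,j}Q_j = B_iP_i$, use the degree bound $\deg(\sum_j F_{i,j}Q_j) < \delta_i$ to force $\deg(B_i) < \beta-1$, apply the substitution $X\mapsto 1/X$ multiplied by $X^{\delta_i-1}$ to obtain $\sum_j X^{\gamma_j}\overline{F_{i,j}}\,\overline{Q_j} = T_i\overline{P_i}$, and then pass to congruences modulo $X^{\delta_i}$ using that $\overline{P_i}$ is a unit there. The only cosmetic difference is that you frame the reversal explicitly as an involution on polynomials of degree below $\delta_i$, which makes the bijectivity argument self-contained.
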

\begin{proof}
Condition~\eqref{cond:iiip} holds if and only if for all $i$ in
$\{0,\dots,\mu-1\}$, there exists a polynomial $B_i$ in $\K[X]$ such
that
\begin{equation} \label{eq:divisor-Bi}
\sum_{0 \le j <\nu} F_{i,j} Q_j = B_i P_i.
\end{equation}
For all $i,j$, the summand $F_{i,j}Q_j$ has degree less than $M'_i+N'_j-1$, 
so the left-hand term above has degree less than $\delta_i$.
Since $P_i$ has degree $M'_i$, this implies that whenever a polynomial
$B_i$ as above exists, we must have $\deg(B_i) < \delta_i-M'_i = \beta - 1$.
Now, by substituting $1/X$ for $X$ and multiplying by $X^{\delta_i-1}$ we
can rewrite the identity in~(\ref{eq:divisor-Bi}) as
\begin{equation} \label{eq:reverse-of-divisor-Bi}
  \sum_{0 \le j <\nu} \overline{F_{i,j}}\, \overline{Q_j} X^{\gamma_j}
= T_i \overline{P_i},
\end{equation}
where $T_i$ is the polynomial of degree less than $\beta - 1$
given by $T_i = X^{\beta-2}B_i(X^{-1})$.  Since the degrees of
both sides of~(\ref{eq:reverse-of-divisor-Bi}) are less than
$\delta_i$, one can consider the above identity modulo
$X^{\delta_\bi}$ without loss of generality, and since
$\overline{P_i}(0)=1$ one can further divide by $\overline{P_i}$
modulo $X^{\delta_i}$.  This shows
that~\eqref{eq:reverse-of-divisor-Bi} is equivalent to the identity
in~\eqref{eq:modular-equations}, and the proof is complete.
\end{proof}

Following~\cite{RotRuc00,ZeGeAu11}, we are going to rewrite the latter
conditions as a linear system in the coefficients of the polynomials
$Q_0,\ldots,Q_{\nu-1}$, eliminating the unknowns $T_i$ from the outset. 
Let us first define the \emph{coefficient vector} of 
a solution $(Q_0,\ldots,Q_{\nu-1})$ to Problem~\ref{pbm:poly_approx}
as the vector in $\K^{N'}\!$ obtained by concatenating, for $0 \le j < \nu$, 
the vectors $\big[Q_j^{(0)},Q_j^{(1)},\ldots,Q_j^{(N_j'-1)}\big]^T$ of the coefficients of $Q_j$.
Furthermore, 
denoting by $S_{i,j}^{(0)},S_{i,j}^{(1)},\ldots, S_{i,j}^{(\delta_i-1)}$
the $\delta_i \ge 1$ coefficients of the polynomial $S_{i,j}$, 
we set up the block matrix \[ A = \big[A_{i,j}\big]_{0 \le i<\mu , \, 0 \le j < \nu} \in \K^{M' \times N'}, \]
whose block $(i,j)$ is
the Hankel matrix 
\[
A_{i,j} = \big[S_{i,j}^{(u+v + \gamma_j)}\big]_{0 \le u<M'_i, \, 0\le v<N'_j} \;\in\K^{M'_i\times N'_j}.
\]

%

\begin{lem}
  \label{lem:eke_nonzero_nullspace}
  A nonzero vector of $\K^{N'}\!$ is in the right nullspace of $A$ if and only
  if it is the coefficient vector of a solution $(Q_0,\ldots,Q_{\nu-1})$ to
  Problem~\ref{pbm:poly_approx}.
\end{lem}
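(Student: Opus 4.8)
The plan is to show that the linear map "coefficient vector $\mapsto$ tuple $(Q_0,\dots,Q_{\nu-1})$" sets up a bijection between $\K^{N'}$ and tuples satisfying condition~\eqref{cond:iip}, and that under this bijection, membership in $\ker(A)$ corresponds exactly to condition~\eqref{cond:iiip}. The first half is immediate: a vector in $\K^{N'}$ uniquely encodes polynomials $Q_j$ with $\deg(Q_j) < N'_j$, which is precisely condition~\eqref{cond:iip}, and the vector is nonzero iff the $Q_j$ are not all zero, which is condition~\eqref{cond:ip}. So the whole content is in matching the nullspace condition with the modular/key equations, and by Lemma~\ref{multivariate-zeh-et-al} it suffices to match $Au = 0$ with the existence, for each $i$, of $T_i$ of degree $< \beta - 1$ such that $\sum_j S_{i,j}\overline{Q_j} = T_i \bmod X^{\delta_i}$.

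First I would unwind what the product $A_{i,j} u_j$ computes, where $u_j = [Q_j^{(0)},\dots,Q_j^{(N_j'-1)}]^T$ is the $j$-th block of the coefficient vector. The entry in row $u$ of $A_{i,j}u_j$ is $\sum_{v=0}^{N_j'-1} S_{i,j}^{(u+v+\gamma_j)} Q_j^{(v)}$, which is exactly the coefficient of $X^{u+\gamma_j}$ in the product $S_{i,j}\overline{Q_j}$ — recall $\overline{Q_j}$ has degree $< N'_j$ with coefficients $Q_j^{(v)}$ in reverse order, but since the block is Hankel and indexed by $u+v$, the reversal is already absorbed into how $\overline{Q_j}$ is formed; I should double-check the indexing here, but the upshot is that summing over $j$, the $u$-th coordinate of row-block $i$ of $Au$ is the coefficient of $X^{u+\gamma_j}$... actually this needs care since $\gamma_j$ depends on $j$. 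The cleaner statement: the $u$-th entry ($0 \le u < M'_i$) of the $i$-th block of $Au$ equals the coefficient of $X^{u}$ in $\sum_j S_{i,j}\overline{Q_j} \bmod X^{\delta_i}$ shifted appropriately — I would verify that $\sum_v S_{i,j}^{(u+v+\gamma_j)}Q_j^{(v)}$ is the coefficient of $X^{u}$ in $X^{-\gamma_j}\cdot(\text{the degree-}{\ge}\gamma_j\text{ part of }S_{i,j}\overline{Q_j})$, equivalently the coefficient of $X^{u+\gamma_j}$ in $S_{i,j}\overline{Q_j}$, using $\deg \overline{Q_j} \le N'_j - 1 = \beta - 1 - \gamma_j$ and $\deg S_{i,j} < \delta_i$.

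Then the key observation is that $\gamma_j + (N'_j - 1) = \beta - 1$ for every $j$, so each term $S_{i,j}\overline{Q_j}$ has its "low part" below degree $\gamma_j$, but the combined sum $E_i := \sum_j S_{i,j}\overline{Q_j} \bmod X^{\delta_i}$ has the property that the $i$-th block of $Au$ records exactly the coefficients of $E_i$ in degrees $\beta - 1, \beta, \dots, \beta - 2 + M'_i = \delta_i - 1$ — here I would carefully line up that the $u$-th block entry picks out degree $\beta - 1 + u$ of $E_i$ for $0 \le u < M'_i$. Hence $(Au)_{\text{block }i} = 0$ iff $E_i$ has no monomials in degrees $\beta-1,\dots,\delta_i - 1$, i.e. iff $E_i = T_i$ for some $T_i$ with $\deg T_i < \beta - 1$, working modulo $X^{\delta_i}$. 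That is precisely equation~\eqref{eq:modular-equations}. Combining over all $i$ and invoking Lemma~\ref{multivariate-zeh-et-al}, $Au = 0$ iff the $Q_j$ satisfy~\eqref{cond:iiip}, which completes the equivalence.

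The main obstacle I anticipate is purely bookkeeping: pinning down the exact correspondence between matrix indices $(u,v)$, the shift $\gamma_j$, the reversal operations defining $\overline{P_i}, \overline{F_{i,j}}, \overline{Q_j}$, and the definition $S_{i,j} = X^{\gamma_j}\overline{F_{i,j}}/\overline{P_i} \bmod X^{\delta_i}$, so that "row $u$ of block $(i,j)$ of $Au$" really is "coefficient of $X^{\beta-1+u}$ in $\sum_j S_{i,j}\overline{Q_j} \bmod X^{\delta_i}$." Once the indexing is verified, the logical structure is a short chain of iff's with no real difficulty; I would present the argument by first establishing the polynomial identity $\big(\sum_j S_{i,j}\overline{Q_j}\big) \bmod X^{\delta_i} = T_i + X^{\beta-1}\cdot(\text{vector }(Au)_{\text{block }i}\text{ read as a polynomial})$ and then reading off both implications from it together with Lemma~\ref{multivariate-zeh-et-al}.
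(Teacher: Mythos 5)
Your approach is essentially the same as the paper's: use Lemma~\ref{multivariate-zeh-et-al} to reduce condition~\eqref{cond:iiip} to the extended key equations, then identify the coefficients of $\sum_j S_{i,j}\overline{Q_j} \bmod X^{\delta_i}$ in degrees $\beta-1,\dots,\delta_i-1$ with the entries of the $i$-th block of $Au$. Your two intermediate claims that the $u$-th entry of $A_{i,j}u_j$ is the coefficient of $X^{u+\gamma_j}$ in $S_{i,j}\overline{Q_j}$ are slips (the correct exponent is $u+\beta-1 = u+\gamma_j+(N'_j-1)$, since the Hankel index $u+v+\gamma_j$ matches the coefficient $Q_j^{(v)}$ sitting at degree $N'_j-1-v$ in $\overline{Q_j}$), but you flag the uncertainty yourself and your final statement — that the $u$-th block entry picks out degree $\beta-1+u$ — is correct, so the argument lands where the paper's does.
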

\begin{proof}
  It is sufficient to consider a polynomial tuple $(Q_0,\ldots,Q_{\nu-1})$ that
  satisfies~\eqref{cond:iip}.  Then, looking at the high-degree terms
  in the identities in~\eqref{eq:modular-equations}, we see that
  condition~\eqref{cond:iiip} is equivalent to the following homogeneous system of
  linear equations over $\K$: for all $i$ in $\{0,\dots,\mu-1\}$
  and all $\delta$ in $\{\delta_i-M'_i,\dots,\delta_i-1\}$, 
\begin{equation*}
  \label{eqn:extended_key_equations}
  \sum_{0 \le j < \nu} \; \sum_{0\le r < N'_j} S_{i,j}^{(\delta-r)} Q_j^{(N'_j-1-r)} = 0.
\end{equation*}
  The matrix obtained by considering all these equations is precisely the matrix $A$.
\end{proof}

We will use the approach recalled in Section~\ref{subsec:structured} to find a
nonzero nullspace element for~$A$, with respect to the displacement operator
$\Delta'_{M',N'}$. Not only do we need to prove that the displacement
rank of $A$ with respect to $\Delta'_{M',N'}$ is bounded by a value $\alpha$ not too large, 
but we also have to efficiently compute a generator of length $\alpha$ for $A$, that is, 
a matrix pair $(V,W)$ in $\K^{M'\times\alpha} \times \K^{\alpha\times N'}$
such that $A - \ZZ_{M'} A \,\ZZ_{N'} = VW$. We will see that here,
computing such a generator boils down to computing the coefficients of
the polynomials $S_{i,j}$.
The cost incurred by computing this generator is summarized in the
following lemma; combined with Proposition~\ref{prop:struct} and 
Lemma~\ref{lem:eke_nonzero_nullspace}, this proves
Theorem~\ref{thm:poly_approx_complexity}.

\begin{lem}
  \label{lem:generators_key_equations}
  The displacement rank of $A$ with respect to $\Delta'_{M',N'}$ is at
  most $\mu+\nu$. Furthermore, a corresponding generator of length $\mu+\nu$
  for $A$ can be computed
  using $\mathcal{O}\left((\mu+\nu) \M(M')\right)$ operations in $\K$.
\end{lem}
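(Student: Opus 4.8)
The plan is to exhibit an explicit rank-$(\mu+\nu)$ decomposition of $\Delta'_{M',N'}(A)$ by exploiting the Hankel structure of the blocks $A_{i,j}$, and then to bound the cost of assembling the corresponding generator $(V,W)$ in terms of the cost of computing the polynomials $S_{i,j}$. First I would recall that for a single Hankel block, the operator $\Delta'$ (subtracting the antidiagonal shift $\ZZ A \ZZ$) annihilates all entries except those in the first row and last column; more precisely, for the block $A_{i,j} = \big[S_{i,j}^{(u+v+\gamma_j)}\big]_{u,v}$, the matrix $A_{i,j} - \ZZ_{M'_i} A_{i,j} \ZZ_{N'_j}$ has nonzero entries only in its topmost row (indexed $u=0$, coming from the coefficients $S_{i,j}^{(v+\gamma_j)}$) and its rightmost column (indexed $v = N'_j-1$, coming from $S_{i,j}^{(u+\gamma_j+N'_j-1)}$). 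The key subtlety is that $A$ is a \emph{mosaic} of such blocks: the shift matrices $\ZZ_{M'}$ and $\ZZ_{N'}$ act across block boundaries, so I need to check how $\ZZ_{M'} A \ZZ_{N'}$ relates to the block-wise shifts $\ZZ_{M'_i} A_{i,j} \ZZ_{N'_j}$. The point is that $\ZZ_{M'}$ acts on the global row index, so within a diagonal block of rows it agrees with $\ZZ_{M'_i}$ except that the first row of each row-block $i\ge 1$ receives a contribution shifted up from the last row of row-block $i-1$; similarly for columns. Hence $\Delta'_{M',N'}(A)$ differs from the block-diagonal assembly of the $\Delta'(A_{i,j})$ only by correction terms supported on the first row of each horizontal strip and the last column of each vertical strip.

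Carrying this out, I would argue that $\Delta'_{M',N'}(A)$ is supported on the union of the $\mu$ rows that are the first row of each row-block (global row indices $M'_0+\cdots+M'_{i-1}$ for $0\le i<\mu$) together with the $\nu$ columns that are the last column of each column-block (global column indices $N'_0+\cdots+N'_{j-1}+N'_j-1$ for $0\le j<\nu$). A matrix supported on a prescribed set of $\mu$ rows and $\nu$ columns has rank at most $\mu+\nu$: write it as the sum of a matrix whose only nonzero rows are those $\mu$ rows (rank $\le\mu$, generator built from the $\mu$ standard basis column vectors on the left and the corresponding rows on the right) and a matrix whose only nonzero columns are those $\nu$ columns (rank $\le\nu$). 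Concatenating these two generators gives $(V,W)\in\K^{M'\times(\mu+\nu)}\times\K^{(\mu+\nu)\times N'}$ with $\Delta'_{M',N'}(A)=VW$. This proves the first claim.

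For the complexity statement, I would observe that the entries of $V$ are just $0/1$ entries (standard basis vectors locating the distinguished rows and columns), so they cost nothing, while the entries of $W$ are, up to sign and rearrangement, precisely the coefficients $S_{i,j}^{(\cdot)}$ appearing in the distinguished rows and columns of $\Delta'_{M',N'}(A)$. Thus computing the generator reduces to computing the coefficient sequences of all the polynomials $S_{i,j} = X^{\gamma_j}\overline{F_{i,j}}\,/\,\overline{P_i} \bmod X^{\delta_i}$. For each fixed $i$, $\overline{P_i}$ is a unit modulo $X^{\delta_i}$ (since $\overline{P_i}(0)=1$), so its inverse modulo $X^{\delta_i}$ can be computed once in $\mathcal{O}(\M(\delta_i))$ operations by Newton iteration; then each $S_{i,j}$ is obtained by one multiplication modulo $X^{\delta_i}$, costing $\mathcal{O}(\M(\delta_i))$. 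Summing over $j$ gives $\mathcal{O}(\nu\,\M(\delta_i))$ per index $i$, and since $\delta_i = M'_i+\beta-1 \le M'_i + M' \le 2M'$, super-linearity of $\M$ and $\sum_i M'_i = M'$ yield a total of $\mathcal{O}(\nu\,\M(M') + \mu\,\M(M')) = \mathcal{O}((\mu+\nu)\M(M'))$ after also accounting for the $\mu$ inversions. Extracting the relevant coefficients to form $W$ is linear in the output size $\mathcal{O}((\mu+\nu)(M'+N'))\subseteq\mathcal{O}((\mu+\nu)M')$ by~\eqref{eqn:quasi-square_system}, which is dominated. The main obstacle is the bookkeeping in the first paragraph: correctly tracking how the global antidiagonal shift interacts with the mosaic block structure and the offsets $\gamma_j$, so that one genuinely confirms the support of $\Delta'_{M',N'}(A)$ lies in those $\mu$ rows and $\nu$ columns rather than a larger set; once that support claim is pinned down, the rank bound and the cost analysis are routine.
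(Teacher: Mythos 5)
Your structural argument is essentially the paper's: the operator $\Delta'_{M',N'}$ applied to the mosaic of Hankel blocks is supported on the $\mu$ first-of-block rows $r_i = M'_0 + \cdots + M'_{i-1}$ and the $\nu$ last-of-block columns $c_j = N'_0 + \cdots + N'_j - 1$, and any matrix supported on $\mu$ rows and $\nu$ columns has rank at most $\mu+\nu$, with an explicit length-$(\mu+\nu)$ generator read off from those rows and columns. That part matches the paper and is correct, including your awareness of the cross-block correction terms.

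The gap is in the cost analysis. You compute $S_{i,j} = X^{\gamma_j}\,\overline{F_{i,j}}/\overline{P_i} \bmod X^{\delta_i}$ at the full precision $\delta_i$ for every pair $(i,j)$ and assert that $\nu\sum_{i<\mu}\M(\delta_i) = \mathcal{O}((\mu+\nu)\M(M'))$ by super-linearity and $\sum_i M'_i = M'$. This does not follow: $\delta_i = M'_i + \beta - 1$ carries the $i$-independent offset $\beta-1$, so $\sum_i \delta_i = M' + \mu(\beta-1)$ can be of order $\mu M'$ and the total $\nu\sum_i\M(\delta_i)$ can be of order $\mu\nu\,\M(M')$, a full factor of $\max(\mu,\nu)$ too large. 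Concretely, take $\mu = r$ with all $M'_i=1$ (so $M'=r$), and $\nu=r/2$ with $N'_0=r/2$, $N'_1=\cdots=N'_{\nu-1}=1$ (so $N'=r-1\le M'+1$ and $\beta=r/2$); then every $\delta_i=r/2$ and your bound is $\Theta(r^2\M(r))$, while the target $(\mu+\nu)\M(M')$ is $\Theta(r\,\M(r))$. The paper avoids this by noting that the generator only uses the \emph{last} $M'_i+N'_j-1$ coefficients of $S_{i,j}$: the first $\gamma_j$ coefficients of $S_{i,j}$ vanish (it is a multiple of $X^{\gamma_j}$), and the remaining ones are exactly the coefficients of $S^\star_{i,j}=\overline{F_{i,j}}/\overline{P_i}\bmod X^{M'_i+N'_j-1}$, computable by one power-series division in $\mathcal{O}(\M(M'_i+N'_j))=\mathcal{O}(\M(M'_i)+\M(N'_j))$ operations. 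Summing over $i,j$ then gives $\mathcal{O}(\nu\,\M(M')+\mu\,\M(N'))=\mathcal{O}((\mu+\nu)\M(M'))$ using super-linearity and the standing assumption $N'\le M'+1$. Your plan becomes correct if you replace ``compute $S_{i,j}$ modulo $X^{\delta_i}$'' by ``compute $S^\star_{i,j}$ modulo $X^{M'_i+N'_j-1}$'' throughout.
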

\begin{proof}
  We are going to exhibit two matrices $V\in \K^{M'\times (\mu+\nu)}$
  and $W \in \K^{(\mu+\nu) \times N'}$ such that $A - \ZZ_{M'} A
  \,\ZZ_{N'} = VW$.  Because of the structure of $A$, at most $\mu$
  rows and $\nu$ columns of the matrix $A - \ZZ_{M'} A \,\ZZ_{N'} = A
  - (A \text{ shifted left and down by one unit})$ are nonzero. More
  precisely, only the first row and the last column of each $M'_i
  \times N'_j$ block of this matrix can be nonzero. Indexing the
  rows, resp.\ columns, of $A - \ZZ_{M'} A \,\ZZ_{N'}$ from $0$ to
  $M'-1$, resp.\ from $0$ to $N'-1$, only the $\mu$ rows with indices
  of the form $r_i = M'_0 + \cdots + M'_{i-1}$ for
  $i=0,\dots,\mu-1$ can be nonzero, and only the $\nu$ columns with
  indices of the form $c_j = N'_0 + \cdots + N'_j - 1$ for
  $j=0,\dots,\nu-1$ can be nonzero.

  For any integers $0\le i< K$, define 
  $\OO_{i,K} = \left[0 \; \cdots \; 0 \; 1 \; 0 \; \cdots \; 0\right]^T\in\K^{K}$ 
  with $1$ at position~$i$, and
  \[\OO^{(V)} = \left[{\OO_{r_i,M'}}\right]_{0 \le i < \mu} \in\K^{M' \times\mu},\quad 
  \OO^{(W)} = \left[{\OO_{c_j,N'}}\right]^T_{0 \le j < \nu} \in \K^{\nu \times N'}. \]
  For given $i$ in $\{0,\dots,\mu-1\}$ and $j$ in $\{0,\dots,\nu-1\}$,
  we will consider $v_{i,j} = [v_{i,j}^{(r)}]_{0\le r<M'_i}$ in $\K^{M'_i \times 1}$ 
  and $w_{i,j} = [w_{i,j}^{(r)}]_{0 \le r<N'_j}$ in $\K^{1\times N'_j}$, 
  which are respectively the last column
  and the first row of the block $(i,j)$ in $A - \ZZ_{M'} A\, \ZZ_{N'}$,
  up to a minor point: the first entry of $v_{i,j}$ is set to zero. 
  The coefficients $v_{i,j}^{(r)}$ and $w_{i,j}^{(r)}$ 
  can then be expressed in terms of the entries $A_{i,j}^{(u,v)} = S_{i,j}^{(u+v + \gamma_j)}$ of the Hankel matrix 
  $A_{i,j} = [A_{i,j}^{(u,v)}]_{0 \le u<M'_i, \, 0\le v<N'_j}$ as follows:
  \begin{align}
  v_{i,j}^{(r)} &= \left\{
  \begin{array}{ll}
    0 & \text{if $r=0$,}\\[2mm]
    A_{i,j}^{(r,N'_j-1)} - A_{(i,j+1)}^{(r-1,0)} & \text{if $1 \le r < M_i'$,}
  \end{array}\right. \label{eqn:def_vij}\\[4mm] 
  w_{i,j}^{(r)} &= \left\{
  \begin{array}{ll}
    A_{i,j}^{(0,r)}-A_{i-1,j}^{(M'_{i-1}-1,r+1)} & \text{if } r < N'_j - 1,\\[2mm]
    A_{i,j}^{(0,N'_j-1)}-A_{i-1,j+1}^{(M'_{i-1}-1,0)} & \text{if } r = N'_j - 1. \label{eqn:def_wij}
  \end{array}\right.
  \end{align}
  Note that here, we use the convention that an indexed object is zero 
  when the index is out of the allowed bounds for this object.

  Then, we define $V_j$ and $W_i$ as
  \[V_j = \left[\begin{matrix} 
    v_{0,j} \\ \vdots \\ v_{\mu-1,j} 
    \end{matrix} \right] \in \K^{M'\times 1} \quad\text{and}\quad
    W_i = \left[ w_{i,0}~\cdots~w_{i,\nu-1}\right] \in\K^{1\times N'},
  \]
  and
  \[V' = \left[V_0 \,\cdots\, V_{\nu-1}\right] \in\K^{M'\times \nu}\quad\text{and}\quad
  W' = \left[\begin{matrix} W_0 \\ \vdots \\ W_{\mu-1} \end{matrix}\right] \in \K^{\mu \times N'}.
  \]
  Now, one can easily verify that the matrices
  \begin{equation}
  \label{eqn:def_generators_eke}
  V = \left[ V' ~~ \OO^{(V)} \right] \; \in \K^{M'\times (\mu+\nu)} 
  \qquad\text{and}\qquad
  W = \left[\begin{matrix} \OO^{(W)} \\ W' \end{matrix} \right] \; \in \K^{(\mu+\nu)\times N'}
  \end{equation}
  are generators for $A$, that is, $A - \ZZ_M A \,\ZZ_N = VW$.
  
  We notice that all we need to compute the generators $V$ and $W$ are
  the last $M'_i+N'_j-1$ coefficients of 
  $S_{i,j}(X) = S_{i,j}^{(0)} + S_{i,j}^{(1)} X + \cdots + S_{i,j}^{(\delta_i-1)}X^{\delta_i-1}$ 
  for every $i$ in
  $\{0,\dots,\mu-1\}$ and $j$ in $\{0,\dots,\nu-1\}$. Now, recall that
  $$S_{i,j} = \frac{X^{\gamma_j} \overline{F_{i,j}}}{ \overline {P_i}} \bmod
  X^{\delta_i} = \frac{X^{\delta_i-(M'_i+N'_j-1)} \overline{F_{i,j}}}{ \overline
    {P_i}} \bmod X^{\delta_i}.$$ Thus, the first
  $\delta_i-(M'_i+N'_j-1)$ coefficients of $S_{i,j}$ are zero, and the
  last $M'_i+N'_j-1$ coefficients of $S_{i,j}$ are the coefficients of
  \begin{equation}
  \label{eqn:def_Sstar}
  S^\star_{i,j} = \frac{\overline{F_{i,j}}}{ \overline {P_i}} \bmod
  X^{M'_i+N'_j-1},
  \end{equation}
  which can be computed in
  $\mathcal{O}(\M(M'_i+N'_j))$ operations in $\K$ by fast power series
  division. By expanding products, we see that
  $\M(M'_i+N'_j)=\mathcal{O}(\M(M'_i)+\M(N'_j))$. Summing the costs,
  we obtain an upper bound of the form
  $$\mathcal{O} \left (\sum_{0 \le i < \mu} \sum_{0 \le j < \nu}
  \M(M'_i)+\M(N'_j) \right ).$$ Using the super-linearity of $\M$,
  this is in $\mathcal{O} (\nu \M(M') + \mu \M(N'))$. Since we
  assumed in~\eqref{eqn:quasi-square_system} that $N' \le M'+1$, this is $\mathcal{O} ((\mu+\nu) \M(M') )$.
\end{proof}

\begin{center}
\fbox{\begin{minipage}{15cm}
\begin{algo} \textbf{Solving Problem~\ref{pbm:poly_approx} via a mosaic-Hankel linear system.}
  \label{algo:interpolation_eke}

\medskip
\noindent
\emph{Input:}
 positive integers $\mu,\ \nu$, $M'_0,\dots,M'_{\mu-1}$,
 $N'_0,\dots,N'_{\nu-1}$ and polynomial tuples \\
 \hspace*{1.15cm} $\{(P_i,F_{i,0},\ldots,F_{i,\nu-1})\}_{0 \le i < \mu}$ in $\K[X]^{\nu+1}$
 such that for all $i$, $P_i$ is monic of \\
 \hspace*{1.15cm} degree $M'_i$ and $\deg(F_{i,j})< M'_i$ for all $j$.

\medskip
\noindent
\emph{Output:}
polynomials $Q_0,\ldots,Q_{\nu-1}$ in $\K[X]$ such that \eqref{cond:a}, \eqref{cond:b}, \eqref{cond:c}.

  \begin{enumerate}[{\bf 1.}]
  \item For $i<\mu$, $j<\nu$, compute the coefficients $S_{i,j}^{(\gamma_j + r)}$ for $r< M'_i + N'_j-1$,
	  that is, the coefficients of the polynomials $S^\star_{i,j}$ defined in~\eqref{eqn:def_Sstar}
	\item For $i<\mu$ and $j<\nu$, compute the vectors $v_{i,j}$ and $w_{i,j}$ 
	as defined in \eqref{eqn:def_vij} and \eqref{eqn:def_wij}
	\item For $i<\mu$, compute $r_i = M'_0 + \cdots + M'_{i-1}\,$;
	for $j<\nu$, compute $c_j = N'_0 + \cdots + N'_j - 1$
	\item Deduce the generators $V$ and $W$ as defined in~\eqref{eqn:def_generators_eke}
	from $r_i,c_j,v_{i,j},w_{i,j}$
	\item Use the algorithm of Proposition~\ref{prop:struct} with input $V$ and $W$; if there is no solution then exit with no solution, otherwise find the coefficients of $Q_0,\ldots,Q_{\nu-1}$
	\item Return $Q_0,\ldots,Q_{\nu-1}$
  \end{enumerate}

\end{algo}
\end{minipage}}
\end{center}


\subsection{A direct solution to Problem~\ref{pbm:poly_approx}}
\label{subsec:direct}

In this section, we propose an alternative solution to
Problem~\ref{pbm:poly_approx}
which leads to the same asymptotic running time as in the previous section
but avoids the extended key equations of Lemma~\ref{multivariate-zeh-et-al};
it is outlined in 
Algorithm~\ref{algo:interpolation_direct} below.
As above, our input consists of the polynomials $(P_i,F_{i,0},\ldots,F_{i,\nu-1})_{0 \le i < \mu}$
and we look for polynomials $Q_0,\ldots,Q_{\nu-1}$ in $\K[X]$ such
that for $0 \le i<\mu$, $\sum_{0 \le j <\nu} F_{i,j} Q_j = 0
\bmod{P_i}$, with the $Q_j$ not all zero and for $j<\nu$, $\deg Q_j \,<\, N'_j$.

In addition, for $r \ge 0$, we denote by $F_{i,j}^{(r)}$ and $P_i^{(r)}$ the
coefficients of degree $r$ of $F_{i,j}$ and $P_i$, respectively, and we define
$\mathcal{C}_i$ as the ${M'_i \times M'_i}$ companion matrix of $P_i$;
if $B$ is a polynomial of degree less than $M'_i$ with coefficient
vector $v \in \K^{M'_i}$, then the product $\mathcal{C}_i 
v \,\in \K^{M'_i}$ is the coefficient vector of the polynomial $X
B \bmod{P_i}$.  Explicitly, we have
\[\mathcal{C}_i \;=\; \left[\begin{matrix}
    0 & 0 & \cdots & 0 & -P_i^{(0)} \\
    1 & 0 & \cdots & 0 & -P_i^{(1)} \\
    0 & 1 & \cdots & 0 & -P_i^{(2)} \\
    \vdots & \vdots & \ddots & \vdots & \vdots \\
    0 & 0 & \cdots & 1 & -P_i^{(M'_i-1)} 
  \end{matrix}\right] \in \K^{M_i'\times M_i'}.\]

We are going to see that solving Problem~\ref{pbm:poly_approx} is
equivalent to finding a nonzero solution to a homogeneous linear
system whose matrix is $A'=(A'_{i,j}) \in \K^{M'\times N'}$, where for
every $i<\mu$ and $j<\nu$, $A'_{i,j} \in \K^{M'_i \times N'_j}$ is a
matrix which depends on the coefficients of $F_{i,j}$ and $P_i$.
Without loss of generality, we make the same assumption as in the
previous section, that is, $N' \le M'+1$ holds.

For $i,j$ as above and for $h \in \Znn$, let $\alpha_{i,j}^{(h)} \in
\K^{M'_i}$ be the coefficient vector of the polynomial $X^h F_{i,j}
\bmod{P_i}$, so that these vectors are given by
$$\alpha_{i,j}^{(0)} = 
\begin{bmatrix}
  F_{i,j}^{(0)}\\
  \vdots \\
  F_{i,j}^{(M'_i-1)}
\end{bmatrix}
\quad\text{and}\quad \alpha_{i,j}^{(h+1)} = \mathcal{C}_i\,
\alpha_{i,j}^{(h)}.$$ Let then $A'=(A'_{i,j}) \in \K^{M'\times N'}$,
where for every $i<\mu$ and $j<\nu$, the block $A'_{i,j} \in \K^{M'_i
  \times N'_j}$ is defined by
$$A'_{i,j} =  \begin{bmatrix} \alpha_{i,j}^{(0)} & \cdots &\alpha_{i,j}^{(N'_j-1)} 
\end{bmatrix}.$$

\begin{lem}
  \label{lem:direct_nonzero_nullspace}
  A nonzero vector of $\K^{N'}\!$ is in the right nullspace of $A'$ if and only
  if it is the coefficient vector of a solution $(Q_0,\ldots,Q_{\nu-1})$ to
  Problem~\ref{pbm:poly_approx}.
\end{lem}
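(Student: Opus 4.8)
The plan is to show that the linear system defined by $A'$ is simply a coefficient-level encoding of the modular equations in condition~\eqref{cond:iiip}, in much the same spirit as the proof of Lemma~\ref{lem:eke_nonzero_nullspace} but without passing through reversals or key equations. First I would fix a polynomial tuple $(Q_0,\ldots,Q_{\nu-1})$ that already satisfies~\eqref{cond:iip}, i.e.\ $\deg(Q_j) < N'_j$ for all $j$, so that its coefficient vector indeed lies in $\K^{N'}$; this is harmless since every vector of $\K^{N'}$ arises this way, and a nonzero vector corresponds to a tuple satisfying~\eqref{cond:ip}. What remains is to prove that such a tuple satisfies~\eqref{cond:iiip} if and only if its coefficient vector is in the right nullspace of $A'$.

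The key observation is the identification of the blocks. Writing $Q_j = \sum_{0 \le h < N'_j} Q_j^{(h)} X^h$, the congruence $\sum_{j} F_{i,j} Q_j = 0 \bmod P_i$ is equivalent, for each fixed $i$, to
\[
\sum_{0 \le j < \nu} \; \sum_{0 \le h < N'_j} Q_j^{(h)} \big(X^h F_{i,j} \bmod P_i\big) = 0 \quad\text{in } \K[X]/\langle P_i\rangle.
\]
By definition $\alpha_{i,j}^{(h)}$ is the coefficient vector of $X^h F_{i,j} \bmod P_i$, and the recurrence $\alpha_{i,j}^{(h+1)} = \mathcal{C}_i\,\alpha_{i,j}^{(h)}$ is exactly the statement that multiplication by $X$ modulo $P_i$ is realized by the companion matrix $\mathcal{C}_i$ (which I would justify briefly from the explicit form of $\mathcal{C}_i$, or simply cite the remark preceding the lemma that $\mathcal{C}_i v$ is the coefficient vector of $XB \bmod P_i$). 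Hence the displayed congruence holds if and only if the $\K^{M'_i}$-vector $\sum_{j}\sum_{h} Q_j^{(h)}\,\alpha_{i,j}^{(h)}$ is zero, and this vector is precisely $\sum_j A'_{i,j}\, q_j$ where $q_j = [Q_j^{(0)},\ldots,Q_j^{(N'_j-1)}]^T$. Concatenating the $q_j$ into the coefficient vector $q \in \K^{N'}$ and ranging over all $i<\mu$, the full collection of $\mu$ congruences is equivalent to $A' q = 0$, since the $i$-th block-row of $A'q$ is exactly $\sum_j A'_{i,j} q_j$.

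Putting these together yields the equivalence: for a tuple satisfying~\eqref{cond:iip}, condition~\eqref{cond:iiip} holds iff $A'q = 0$, and the tuple satisfies~\eqref{cond:ip} iff $q \neq 0$; this is exactly the claim. I do not expect a serious obstacle here; the only point requiring a little care is making sure the block structure lines up correctly — that the rows of $A'$ are grouped by $i$ (with $M'_i$ rows per block, matching the dimension of $\K[X]/\langle P_i\rangle$) and the columns by $j$ (with $N'_j$ columns per block, matching the degree bound on $Q_j$) — and that the $A'_{i,j}$ as defined really is the matrix whose columns are $\alpha_{i,j}^{(0)},\ldots,\alpha_{i,j}^{(N'_j-1)}$, so that $A'_{i,j} q_j = \sum_h Q_j^{(h)} \alpha_{i,j}^{(h)}$. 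Everything else is a direct unwinding of definitions.
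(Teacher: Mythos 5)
Your proof is correct and takes essentially the same approach as the paper: the paper's proof states directly that $A'_{i,j}$ is the matrix of the map $Q \mapsto F_{i,j}Q \bmod P_i$ and then applies $A'$ to the concatenated coefficient vector, while you unwind exactly this identification by hand, expanding $Q_j$ in its coefficients and using the companion-matrix recurrence for $\alpha_{i,j}^{(h)}$. The extra detail is fine; no gap.
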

\begin{proof}
By definition $A'_{i,j}$ is the $M_i'\times N_j'$ matrix 
of the mapping $Q \mapsto F_{i,j} Q \bmod P_i$, for $Q$ in $\K[X]$ 
of degree less than $N'_j$.
Thus, if $(Q_0,\dots,Q_{\nu-1})$ is a $\nu$-tuple of polynomials that
satisfies the degree constraint~\eqref{cond:iip} in Problem~\ref{pbm:poly_approx},
applying $A'$ to the coefficient vector of this tuple outputs the
coefficients of the remainders $\sum_{0 \le j <\nu} F_{i,j}Q_j 
\bmod{P_i}$, for $i=0,\dots,\mu-1$. 
The claimed equivalence then follows immediately.
\end{proof}

The following lemma shows that $A'$ possesses a Toeplitz-like
structure, with displacement rank at most $\mu+\nu$. Together with
Proposition~\ref{prop:struct} and Lemma~\ref{lem:direct_nonzero_nullspace}, 
this gives our second proof of Theorem~\ref{thm:poly_approx_complexity}.

\begin{lem}
  \label{prop:generators_poly_approx}
  The displacement rank of $A'$ with respect to $\Delta_{M',N'}$ is at
  most $\mu+\nu$. Furthermore, a corresponding generator of length $\mu+\nu$ for $A'$
  can be computed using $\mathcal{O}((\mu+\nu) \M(M'))$ operations in $\K$.
\end{lem}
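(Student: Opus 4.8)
The plan is to exhibit an explicit generator pair $(V',W')$ of length $\mu+\nu$ for $A'$ with respect to $\Delta_{M',N'}$, and then to bound the cost of computing it. The key structural observation is that $\Delta_{M',N'}(A') = A' - \ZZ_{M'} A' \ZZ_{N'}^T$ compares $A'$ with its copy shifted one step down and one step to the right, so inside each block $A'_{i,j}$ almost all entries cancel: the $(u,v)$ entry of the shifted matrix is the $(u-1,v-1)$ entry of $A'$, and since the columns of $A'_{i,j}$ are $\alpha_{i,j}^{(0)},\dots,\alpha_{i,j}^{(N'_j-1)}$ with $\alpha_{i,j}^{(h+1)} = \mathcal{C}_i \alpha_{i,j}^{(h)}$, the relation between consecutive columns is governed entirely by $\mathcal{C}_i$. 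Concretely, $\ZZ_{M'_i}\,\alpha_{i,j}^{(h)}$ differs from $\alpha_{i,j}^{(h+1)} = \mathcal{C}_i\alpha_{i,j}^{(h)}$ only in that $\mathcal{C}_i$ adds the extra last-column term $-P_i^{(\bullet)}$ times the top coefficient of $\alpha_{i,j}^{(h)}$, i.e.\ $\mathcal{C}_i - \ZZ_{M'_i}$ is the rank-one matrix whose last column is $(-P_i^{(0)},\dots,-P_i^{(M'_i-1)})^T$. So within a block, $\Delta$ leaves nonzero only the first row (coming from the block-boundary shift across the $i$ index) and a contribution proportional to the top entries of the columns of $\alpha_{i,j}^{(\cdot)}$, which is itself rank one per block-row $i$. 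The upshot, exactly as in the proof of Lemma~\ref{lem:generators_key_equations}, is that $\Delta_{M',N'}(A')$ is supported on the $\mu$ rows of index $r_i = M'_0+\cdots+M'_{i-1}$ and on a rank-$\le\nu$ piece carried by the $\mu$ vectors $(-P_i^{(0)},\dots,-P_i^{(M'_i-1)})^T$ tensored against row vectors recording the top coefficients. This yields a decomposition $\Delta_{M',N'}(A') = VW$ with $V\in\K^{M'\times(\mu+\nu)}$ built from the $\nu$ columns $(-P_i^{(0)},\dots)^T$-stacked vectors together with the $\mu$ unit vectors $\OO_{r_i,M'}$, and $W\in\K^{(\mu+\nu)\times N'}$ built from the first-row data and the rows recording top coefficients; I would write these out in the same style as \eqref{eqn:def_generators_eke} and verify $A' - \ZZ_{M'}A'\ZZ_{N'}^T = VW$ entry by entry.

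For the cost: all the data entering $V$ and $W$ are either coefficients of the $P_i$ (free, already given) or the top coefficients and first-row entries of the blocks $A'_{i,j}$, i.e.\ the leading coefficients of $X^h F_{i,j}\bmod P_i$ for $h = 0,\dots,N'_j-1$, plus the first rows, which are the constant coefficients of $X^h F_{i,j}\bmod P_i$. Rather than iterating the companion matrix column by column (which would cost $\mathcal{O}(M'_i)$ per column, hence $\mathcal{O}(M'_i N'_j)$ per block — too much), I would obtain these via polynomial arithmetic: for fixed $i,j$ one needs the sequence of remainders $X^h F_{i,j}\bmod P_i$ only through their two extreme coefficients, and both the first row and the last row of $A'_{i,j}$ can be read off from a single power-series-type computation. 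The cleanest route is to note that the vector of constant terms of $X^h F_{i,j}\bmod P_i$ for $h\ge 0$ is a linearly recurrent sequence with characteristic polynomial $P_i$, generated by the rational function $F_{i,j}(X)/ (X^{M'_i}P_i(1/X))$ read appropriately; hence its first $N'_j$ terms (and likewise the terms needed for the top-coefficient row) are obtained by one truncated power series division modulo $X^{O(M'_i+N'_j)}$, costing $\mathcal{O}(\M(M'_i+N'_j)) = \mathcal{O}(\M(M'_i)+\M(N'_j))$. Summing over $i<\mu$, $j<\nu$ and using super-linearity of $\M$ gives $\mathcal{O}(\nu\,\M(M') + \mu\,\M(N'))$, which under the standing assumption $N'\le M'+1$ is $\mathcal{O}((\mu+\nu)\,\M(M'))$.

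The main obstacle I anticipate is the bookkeeping in the entry-wise verification that $\Delta_{M',N'}(A') = VW$: one must track carefully how the block boundaries interact with the single-step diagonal shift (the row $r_i$ of $A'$ shifted down lands as row $r_i$ of the next block-row, and similarly for columns $c_j = N'_0+\cdots+N'_j-1$), using the out-of-range-is-zero convention, exactly as in Lemma~\ref{lem:generators_key_equations}. A secondary technical point is making precise the claim that the needed extreme coefficients of $X^h F_{i,j}\bmod P_i$ are computable by a single power series division; this is standard (it is the same mechanism used for the $S^\star_{i,j}$ in \eqref{eqn:def_Sstar}), but one should state which reversal/normalization makes $P_i$ invertible as a power series, namely that $P_i$ monic makes its reversal $\overline{P_i}$ satisfy $\overline{P_i}(0)=1$. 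Neither step is deep; both are routine once the rank-one identity $\mathcal{C}_i - \ZZ_{M'_i} = e_{M'_i}\,(-P_i^{(0)},\dots,-P_i^{(M'_i-1)})$ (with $e_{M'_i}$ the last unit vector... more precisely the identity expressing that $\mathcal{C}_i$ and $\ZZ_{M'_i}$ agree except in the last column) is in hand.
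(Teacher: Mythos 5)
Your key structural observation is the right one and is exactly what the paper uses: $\mathcal{C}_i - \ZZ_{M'_i}$ is rank one (nonzero only in the last column, which holds $-P_i^{(0)},\dots,-P_i^{(M'_i-1)}$), and this identity is what controls the within-block part of $\Delta_{M',N'}(A')$. Likewise your cost analysis, replacing column-by-column iteration of $\mathcal{C}_i$ by a single power-series division to extract the needed extremal coefficients, is sound — in fact it is arguably cleaner than the paper's route, which instead extends the linearly recurrent sequence of top coefficients via Shoup's algorithm and then applies a Toeplitz product (Lemma~\ref{lem:fast_generator_computation}). So the plan is on the right track in spirit.

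However, the description of the generator you propose to ``verify entry by entry'' is structurally wrong, and the verification would fail. The claim that $\Delta_{M',N'}(A')$ is ``supported on the $\mu$ rows of index $r_i$'' together with a rank-$\le\nu$ piece carried by the $\mu$ vectors $(-P_i^{(0)},\dots)^T$ has the two halves essentially swapped and over-idealized. Concretely, write $\ZZ_{M'} = \mathcal{C} + VW^T$ where $\mathcal{C}=\mathrm{diag}(\mathcal{C}_i)$ and each column $V_i$ of $V$ stacks the $P_i$ coefficients plus a trailing $1$. Then
\[
A' - \ZZ_{M'}\,A'\,\ZZ_{N'}^T \;=\; \big(A' - \mathcal{C}\,A'\,\ZZ_{N'}^T\big) \;-\; V\,\big(W^T A'\,\ZZ_{N'}^T\big).
\]
The first term is rank $\le\nu$ and is supported on $\nu$ \emph{columns} (the first column of each block-column: the shifted block $[\alpha_{i,j}^{(1)}\cdots\alpha_{i,j}^{(N'_j)}]$ recreates $A'_{i,j}$ except at column $0$). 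The second term is rank $\le\mu$, its left factor is the $\mu$ columns $V_i$ — which have $M'_i+1$ nonzero entries each, so this piece has full row support, not support on $\mu$ boundary rows — and its right factor is $W^T A'\,\ZZ_{N'}^T$, i.e.\ genuine entries of $A'$ (the last row of each block-row, shifted), not unit vectors. This is exactly the point where the analogy to Lemma~\ref{lem:generators_key_equations} breaks down: in the mosaic-Hankel case the blocks are Hankel, so the anti-diagonal shift cancels the interior of each block and $\Delta'(A)$ really is boundary-supported with a generator built from unit vectors and boundary rows/columns. Here the blocks $A'_{i,j}$ are Krylov, not Toeplitz, so $\Delta(A')$ is not boundary-supported; one factor of the generator must carry $A'$-data. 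If you try to build $V$ from unit vectors $\OO_{r_i,M'}$ and $W$ from first-row data as you sketch, the product cannot reproduce the dense rank-$\le\mu$ contribution $V\,W^TA'\,\ZZ_{N'}^T$, so the verification step you flag as the main obstacle is indeed where the proposal would fail as written.

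A secondary point: you say you would read off ``both the first row and the last row of $A'_{i,j}$''. Only the last rows (coefficient of $X^{M'_i-1}$ of $X^hF_{i,j}\bmod P_i$, giving $W^TA'$) and, separately, the single full vectors $\alpha_{i,j}^{(N'_j)}$ (giving $V'$, via Euclidean division) are needed; the first rows play no role. Your count $\nu$ for the $P_i$-stacked columns should of course be $\mu$.
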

\begin{proof}
  We begin by constructing some matrices $Y\in \K^{M'\times (\mu+\nu)}$ and
  $Z \in \K^{(\mu+\nu)\times N'}$ such that $\Delta_{M',N'}(A')$ is equal to the product $YZ$.  Define first the matrix
  \[\mathcal{C} = \left[\begin{matrix}
      \mathcal{C}_0 & 0 & \cdots & 0 \\ 0 & \mathcal{C}_1 & \cdots & 0
      \\ \vdots & \vdots & \ddots & \vdots \\ 0 & 0 & \cdots &
      \mathcal{C}_{\mu-1}
    \end{matrix}\right] \in\K^{M'\times M'}.\]
  Up to $\mu$ columns, $\mathcal{C}$ coincides with $\mathcal{Z}_{M'}$;
  we make this explicit as follows. For $i$ in $\{0,\dots,\mu-~1\}$, 
  define
  \begin{equation}
  \label{eqn:def_viVi}
  v_i = \left[\begin{matrix} P_i^{(0)} \\ \vdots \\ P_i^{(M'_i-1)} \end{matrix}\right] \in\K^{M'_i}, \qquad
  V_i = \left[\begin{matrix} 0\\ \vdots\\ 0 \\ v_i \\ 1 \\ 0 \\ \vdots
      \\ 0 \end{matrix}\right] \in\K^{M'}, \qquad 
  W_i = \left[\begin{matrix} 0 \\ \vdots \\ 0\\ 1\\ 0 \\ 0
      \\ \vdots\\ 0\end{matrix}\right] \in \K^{M'},
  \end{equation}
  where the last entry of $v_i$ in $V_i$ and the coefficient $1$ in $W_i$
  have the same index, namely $M_0' + \cdots + M_i'-1$.
  (Hence the last vector $V_{\mu-1}$ only contains $v_{\mu-1}$, without a $1$ after it.)
  Then, defining $V = [V_0 \,\cdots\, V_{\mu-1}] \in\K^{M'\times \mu}$ and $W =
  [W_0 \,\cdots\, W_{\mu-1}] \in \K^{M'\times \mu}$, 
  we obtain
  \[ \mathcal{C} = \mathcal{Z}_{M'} - V_0 W_0^T - \cdots - V_{\mu-1} W_{\mu-1}^T \;=\; \mathcal{Z}_{M'} - VW^T.\]
  As before, we use the convention that an indexed object is zero 
  when the index is out of the allowed bounds for this object.
  For $j$ in $\{0,\dots,\nu-1\}$, let us further define
  \begin{equation}
  \label{eqn:def_vpiWpi}
  V'_j = \left[\begin{matrix} \alpha_{0,j}^{(0)} \\ \vdots \\ \alpha_{\mu-1,j}^{(0)} \end{matrix}\right] \;-\;
  \left[\begin{matrix} \alpha_{0,j-1}^{(N'_{j-1})} \\ \vdots
      \\ \alpha_{\mu-1,j-1}^{(N'_{j-1})} \end{matrix}\right]\in\K^{M'}
  \quad\text{and}\quad W'_j = \left[\begin{matrix}0
      \\ \vdots\\ 0\\ 1\\ 0 \\ \vdots\\ 0\end{matrix}\right] \in
  \K^{N'},
  \end{equation}
  with the coefficient $1$ in $W'_j$ at index $N_0'+\cdots+N_{j-1}'$, 
  and the compound matrices
  $$V' = [V'_0 \,\cdots\, V'_{\nu-1}] \,\in\K^{M'\times \nu} \quad\text{and}\quad W' = [W'_0 \,\cdots\, W'_{\nu-1}] \,\in\K^{N'\times \nu}.$$
  Then, we claim that the matrices 
  \begin{equation}
  \label{eqn:generatorsYZ}
  Y = \left[\, -V ~~~ V' \,\right] \in\K^{M' \times (\mu+\nu)}
  \quad\text{and}\quad 
  Z =  \begin{bmatrix} W^T \! A' \mathcal{Z}_{N'}^T \\ {W'}^T \end{bmatrix} \in\K^{ (\mu+\nu)\times N'}
  \end{equation}
  are generators for $A'$ for the Toeplitz-like displacement structure,
  {\it i.e.}, that \[A' - \mathcal{Z}_{M'} \,A'\, \mathcal{Z}_{N'}^T = Y
Z.\] By construction, we have $\mathcal{C}\, A' =
(B_{i,j})_{i<\mu,j<\nu} \,\in\K^{M'\times N'}$, with $B_{i,j}$ given
by
\[B_{i,j} \;=\; \mathcal{C}_i A'_{i,j}  \;=\;  \left[\alpha_{i,j}^{(1)} 
~ \cdots ~ \alpha_{i,j}^{(N'_j-1)} ~ \alpha_{i,j}^{(N'_j)}\right] 
\,\in\K^{M'_i \times N'_j}.\]
As a consequence, 
$A' - \mathcal{C}\, A' \, \mathcal{Z}_{N'}^T = V' W'^T,$ so finally we get, as claimed,
  \begin{align*}
    A' - \mathcal{Z}_{M'} \,A'\, \mathcal{Z}_{N'}^T \quad & = \;\; A' - (\mathcal{C} + VW^T)A' \mathcal{Z}_{N'}^T \\
    & = \;\; A' - \mathcal{C}\, A'\, \mathcal{Z}_{N'}^T \,-\, VW^T \! A' \mathcal{Z}_{N'}^T \\
    & = \;\; V' W'^T \,-\, VW^T \! A\, \mathcal{Z}_{N'}^T \\
    & = \;\; Y Z.
  \end{align*}

  To compute $Y$ and $Z$, the only non-trivial steps are those giving
  $V'$ and $W^T \! A'$. For the former, we have to compute the
  coefficients of $X^{N'_j} F_{i,j} \bmod P_i$ for every $i<\mu$
  and $j<\nu-1$.  For fixed $i$ and $j$, this can be done using fast
  Euclidean division in $\mathcal{O}(\M(M'_i+N'_j))$ operations in
  $\K$, which is $\mathcal{O}(\M(M'_i)+\M(N'_j))$.  Summing over the
  indices $i<\mu$ and $j<\nu-1$, this gives a total cost of
  $\mathcal{O}(\nu \M(M') + \mu \M(N'))$ operations. 
  This is $\mathcal{O}((\mu+\nu) \M(M'))$, since  by assumption $N' \le M'+1$.

  Finally, we show that $W^T \! A'$ can be computed using
  $\mathcal{O}((\mu+\nu) \M(M'))$ operations as well.  Computing this
  matrix amounts to computing the rows of $A'$ of indices 
  $M_0' + \cdots + M_i' - 1$,
  for $i<\mu$. By construction of $A'$, this
  means that we want to compute the coefficients of degree $M'_i-1$ of
  $X^h F_{i,j} \bmod{P_i}$ for $h=0,\dots,N'_j-1$ and for all $i,j$.
  Unfortunately, the naive approach leads to a cost proportional to
  $M'N'$ operations, which is not acceptable. However, for $i$ and $j$
  fixed, Lemma~\ref{lem:fast_generator_computation} below shows how to
  do this computation using only $\mathcal{O}(\M(M'_i)+\M(N'_j))$
  operations, which leads to the announced cost by summing over $i$
  and~$j$.
\end{proof}

\begin{lem}
  \label{lem:fast_generator_computation}
  Let $P \in \K[X]$ be monic of degree $m$, let $F \in \K[X]$ be of
  degree less than $m$, and for $i \ge 0$ let $c_i$ denote the
  coefficient of degree $m-1$ of $X^i F \bmod{P}$.  For $n\ge 1$ we
  can compute $c_0,\ldots,c_{n-1}$ using $\mathcal{O}(\M(m)+\M(n))$
  operations in $\K$.
\end{lem}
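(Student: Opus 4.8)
The plan is to express $c_i$ as the coefficient of a single power series and then invoke fast power-series division. The key observation is that the coefficient of degree $m-1$ of $X^i F \bmod P$ can be read off from a Laurent expansion at infinity: work in the field $\K((X^{-1}))$ of formal Laurent series in $X^{-1}$, and write $X^i F = q_i P + r_i$ with $\deg r_i < m$, so that $c_i$ is the coefficient of $X^{m-1}$ in $r_i$ (equal to $0$ when $\deg r_i < m-1$, which is consistent with the statement). Then $X^i F/P = q_i + r_i/P$ with $q_i \in \K[X]$ and, since $\deg r_i < \deg P$ and $P$ is monic of degree $m$, the term $r_i/P$ lies in $X^{-1}\K[[X^{-1}]]$ and in fact equals $c_i X^{-1} + O(X^{-2})$. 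Hence $c_i = [X^{-1}]\,(X^i F/P) = [X^{-i-1}]\,(F/P)$, the coefficient of $X^{-i-1}$ in the expansion of $F/P$ in $\K((X^{-1}))$.

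The next step is to turn this into an honest power series in $\K[[X]]$ by the substitution $X \mapsto 1/X$. I would introduce the reversals $\overline P(X) = X^m P(1/X)$ and $\overline F(X) = X^{m-1} F(1/X)$; these are polynomials of degree at most $m$ and $m-1$ respectively, $\deg r_i < m$ having been used to ensure $\overline F \in \K[X]$, and $\overline P(0)$ equals the leading coefficient of $P$, namely $1$. A direct computation gives $F(1/X)/P(1/X) = X\,\overline F(X)/\overline P(X)$, and because $\overline P(0) = 1$ this lies in $X\,\K[[X]]$; matching coefficients with the expansion $F/P = \sum_{j \ge 1} a_j X^{-j}$ at infinity yields $c_i = a_{i+1} = [X^i]\bigl(\overline F(X)/\overline P(X)\bigr)$. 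Thus $c_0,\dots,c_{n-1}$ are exactly the first $n$ coefficients of the power series $\overline F/\overline P$.

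It then remains to compute $\overline F/\overline P \bmod X^n$. I would obtain $\overline P$ and $\overline F$ from $P$ and $F$ by reversing coefficient lists (a linear-time step), compute the inverse of $\overline P$ modulo $X^n$ by Newton iteration in $\mathcal{O}(\M(n))$ operations, and multiply the result by $\overline F$ modulo $X^n$; since $\overline F$ has degree less than $m$, that product costs $\mathcal{O}(\M(m)+\M(n))$ by the super-linearity of $\M$ (expanding products). Reading off the coefficients of $X^0,\dots,X^{n-1}$ returns $c_0,\dots,c_{n-1}$ within the claimed bound.

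I expect the only delicate part to be the bookkeeping around the two reversals and the passage between $\K((X^{-1}))$ and $\K[[X]]$ — in particular, tracking that $\deg r_i < m$ is precisely what makes $r_i/P$ have valuation at least $1$ at infinity, and that $\overline P$ must be reversed with respect to degree $m$ while $\overline F$ is reversed with respect to degree $m-1$. Everything following the identity $c_i = [X^i]\bigl(\overline F/\overline P\bigr)$ is a routine application of fast power-series arithmetic.
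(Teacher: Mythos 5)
Your proof is correct, and it takes a route that is different in its packaging from the paper's, though the two are closely related under the hood. The paper writes $c_i=\sum_{0\le j<m}f_jb_{i+j}$ where $b_i$ is the coefficient of degree $m-1$ in $X^i\bmod P$; it observes that $(b_i)_{i\geqslant 0}$ is a linearly recurrent sequence of order $m$ whose first $m$ terms are $0,\dots,0,1$, extends it to $b_m,\dots,b_{m+n-2}$ by $\lceil\frac{n-1}{m}\rceil$ calls to Shoup's algorithm in $\mathcal{O}\bigl(\frac{n}{m}\M(m)\bigr)$ operations, and then recovers $c_0,\dots,c_{n-1}$ by a Toeplitz matrix-vector product with the coefficients of $F$ in $\mathcal{O}(\M(n))$. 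You instead identify $c_i$ directly as the $i$-th coefficient of the power series $\overline F/\overline P\in\K[[X]]$ and compute its first $n$ coefficients by Newton inversion of $\overline P$ followed by one truncated product. In fact one can check that $b_{m-1+\ell}=[X^\ell]\bigl(1/\overline P\bigr)$, so the paper's sequence is exactly (a shift of) your $1/\overline P$, and its Toeplitz product is just your multiplication by $\overline F$ in disguise; what changes is the subroutine used to obtain the coefficients of $1/\overline P$ (Newton iteration versus Shoup's recurrent-sequence extension). Your version is a bit more direct, sidesteps the separate handling of the mostly-zero initial segment $b_0,\dots,b_{m-2}$, and, if you truncate $\overline P$ and $\overline F$ modulo $X^n$ up front, actually yields the slightly sharper bound $\mathcal{O}(\M(n))$, though of course this still fits inside the stated $\mathcal{O}(\M(m)+\M(n))$. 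One small slip in the writeup: it is $\deg F<m$ (not $\deg r_i<m$) that guarantees $\overline F\in\K[X]$; this does not affect the argument.
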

\begin{proof}
  Writing $F = \sum_{0 \le j < m} f_j X^j$ we have $X^iF \bmod{P} =
  \sum_{0 \le j < m} f_j \big(X^{i+j} \bmod P\big) $.  Hence $c_i =
  \sum_{0 \le j < m} f_j b_{i+j}$, with $b_i$ denoting the coefficient 
  of degree $m-1$ of $X^i \bmod P$.  Since $b_0 = \cdots = b_{m-2} =
  0$ and $b_{m-1}=1$, we can deduce $c_0,\ldots,c_{n-1}$ from
  $b_{m-1},b_m,\ldots,b_{m+n-2}$ in time $\mathcal{O}(\M(n))$ by
  multiplication by the lower triangular Toeplitz matrix
  $[f_{m+j-i-1}]_{i,j}$ of order $n-1$.  

  Thus, we are left with the question of computing the $n-1$ coefficients
  $b_m,\ldots,b_{m+n-2}$. Writing $P$ as $P = X^m + \sum_{0 \le j <
    m}p_j X^j$ and using the fact that $X^i P \bmod{P} = 0$ for
  all $i\ge 0$, we see that the $b_i$ are generated by a linear
  recurrence of order $m$ with constant coefficients:
  \[
  b_{i+m} + \sum_{0 \le j < m} p_j b_{i+j} = 0 \quad \mbox{for all $i \ge 0$.}
  \]
  Consequently, $b_m,\ldots,b_{m+n-2}$ can be deduced from
  $b_0,\ldots,b_{m-1}$ in time $\mathcal{O}(\frac{n}{m} \M(m))$,
  which is $\mathcal{O}(\M(m)+\M(n))$, by $\lceil \frac{n-1}{m} \rceil$ calls to
  Shoup's algorithm for extending a linearly recurrent
  sequence~\cite[Theorem~3.1]{Shoup91}. 
\end{proof}

\begin{center}
\fbox{\begin{minipage}{15cm}
\begin{algo} \textbf{Solving Problem~\ref{pbm:poly_approx} via a Toeplitz-like linear system.}
  \label{algo:interpolation_direct}

\medskip
\noindent
\emph{Input:}
 positive integers $\mu,\ \nu$, $M'_0,\dots,M'_{\mu-1}$,
 $N'_0,\dots,N'_{\nu-1}$ and polynomial tuples \\
 \hspace*{1.15cm} $\{(P_i,F_{i,0},\ldots,F_{i,\nu-1})\}_{0 \le i < \mu}$ in $\K[X]^{\nu+1}$
 such that for all $i$, $P_i$ is monic of \\
 \hspace*{1.15cm} degree $M'_i$ and $\deg(F_{i,j})< M'_i$ for all $j$.

\medskip
\noindent
\emph{Output:}
polynomials $Q_0,\ldots,Q_{\nu-1}$ in $\K[X]$ such that \eqref{cond:a}, \eqref{cond:b}, \eqref{cond:c}.

  \begin{enumerate}[{\bf 1.}]
  	\item Compute $v_i$ and $V_i$ for $i<\mu$, as defined in~\eqref{eqn:def_viVi}; compute $V = [V_0 \; \cdots \; V_{\mu-1}]$
	\item Compute $W'_j$ for $j<\nu$, as defined in~\eqref{eqn:def_vpiWpi}; compute $W' = [W'_0 \; \cdots \; W'_{\nu-1}]$
	\item Compute $\alpha_{i,j}^{(N'_j)}$, that is, the coefficients of $X^{N'_j} F_{i,j} \bmod P_i$, for $i<\mu,\, j<\nu-1$ (e.g. using fast Euclidean division)
	\item Compute $V'_j$ for $j<\mu$, as defined  in~\eqref{eqn:def_vpiWpi}; compute $V' = [V'_0 \;\cdots\; V'_{\nu-1}]$
  	\item Compute the row of index $M'_0+\cdots+M'_i -1$ of $A'$, for $i<\mu$, that is, the coefficient of degree $M'_i-1$ of $X^h F_{i,j} \bmod{P_i}$, for $h<N'_j,\,j<\nu$ (see the proof of Lemma~\ref{lem:fast_generator_computation} for fast computation).
	\item Compute $W^T \! A'$ whose row of index $i$ is the row of index $M'_0+\cdots+M'_i -1$ of $A'$
	\item Compute the generators $Y$ and $Z$ as defined in~\eqref{eqn:generatorsYZ}
	\item Use the algorithm of Proposition~\ref{prop:struct} with input $Y$ and $Z$; if there is no solution then exit with no solution, otherwise find the coefficients of $Q_0,\ldots,Q_{\nu-1}$
	\item Return $Q_0,\ldots,Q_{\nu-1}$
  \end{enumerate}

\end{algo}
\end{minipage}}
\end{center}

\section{Applications to the decoding of Reed-Solomon codes}
\label{sec:rs_codes}

To conclude, we discuss Theorem~\ref{thm:complexity} in specific
contexts related to the decoding of Reed-Solomon codes; in this
section we always have $s=1$.
First, we give our complexity result in the case of list-decoding 
via the Guruswami-Sudan algorithm~\cite{GurSud99}; then we show how
the re-encoding technique~\cite{KoeVar03b,KoMaVa11} can be 
used in our setting; then, we discuss the interpolation step of
the Wu algorithm~\cite{Wu08}; and finally
we present the application of our results
to the interpolation step of the soft-decoding~\cite{KoeVar03a}.
Note that in this last context, the $x_i$ in the input of
Problem~\ref{pbm:multivariate_interpolation} are not necessarily
pairwise distinct; we will nevertheless explain how to adapt our
algorithms to this case.
In these contexts of applications, we will use some of the
assumptions on the parameters \hyp{1}, \hyp{2}, \hyp{3},
\hyp{4} given in Section~\ref{sec:introduction}.

	\subsection{Interpolation step of the Guruswami-Sudan algorithm}
	\label{subsec:gursud}
We study here the specific context of the interpolation step of 
the Guruswami-Sudan list-decoding algorithm for Reed-Solomon codes.
This interpolation step is precisely Problem~\ref{pbm:multivariate_interpolation}
where we have $s=1$ and we make assumptions \hyp{1}, \hyp{2}, \hyp{3},
and \hyp{4}. Under \hyp{2}, the set $\Gamma$ introduced in 
Theorem~\ref{thm:complexity} reduces to
$\{j \in \Znn \,:\, j \leqslant \ell\} = \{0,\ldots,\ell\}$, so that 
$|\Gamma|=\ell+1$.
Thus, assumption~\hyp{1} ensures that the parameter $\varrho$
in that theorem is $\varrho=\ell+1$; because of \hyp{4} all
multiplicities are equal so that we further have
$M = \binom{m+1}{2} n = \frac{m(m+1)}2 n$.
From Theorem~\ref{thm:complexity}, we obtain the following result, 
which substantiates our claimed cost bound in 
Section~\ref{sec:introduction}, Table~\ref{table:prev}.

\begin{cor}
  \label{cor:complexity_gursud} 
  Taking $s=1$, if the parameters $\ell,n$, $m:=m_1=\cdots=m_n,b$ and $k:=k_1$ 
  satisfy \hyp{1}, \hyp{2}, \hyp{3}, and \hyp{4},
	then there exists a probabilistic algorithm that
  computes a solution to Problem~\ref{pbm:multivariate_interpolation}
  using \[ \mathcal{O}\big(\ell^{\omega-1} \M(m^2n)\log(mn)^2\big)
  \subseteq \mathcal{O}\tilde{~}(\ell^{\omega-1} m^2n) \]
  operations in $\K$, with
  probability of success at least $1/2$.
\end{cor}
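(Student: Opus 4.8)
The plan is to simply invoke Theorem~\ref{thm:complexity} and specialize the parameters to the Guruswami-Sudan setting, checking that under the stated assumptions the generic quantities $\varrho$ and $M$ collapse to the explicit values $\ell+1$ and $\frac{m(m+1)}{2}n$. First I would recall the statement: Theorem~\ref{thm:complexity} provides a probabilistic algorithm solving Problem~\ref{pbm:multivariate_interpolation} in $\mathcal{O}(\varrho^{\omega-1}\M(M)\log(M)^2)$ operations in $\K$, where $\varrho=\max(|\Gamma|,\binom{s+m-1}{s})$ and $M=\sum_{1\le i\le n}\binom{s+m_i}{s+1}$, with probability of success at least $1/2$ when the $\mathcal{O}(M)$ random field elements are drawn from a set of size at least $6(M+1)^2$. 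Setting $s=1$, the set $\Gamma$ becomes $\{j\in\Znn : j\le\ell \text{ and } jk<b\}$.

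Next I would discharge each assumption in turn. By \hyp{2}, as already observed in the paragraph following Theorem~\ref{thm:complexity}, we have $\Gamma=\{j\in\Znn : j\le\ell\}=\{0,\ldots,\ell\}$, so $|\Gamma|=\ell+1$. With $s=1$ we also have $\binom{s+m-1}{s}=\binom{m}{1}=m$, and \hyp{1} gives $m\le\ell<\ell+1$, hence $\varrho=\max(\ell+1,m)=\ell+1$. By \hyp{4} all multiplicities equal $m$, so $M=\sum_{1\le i\le n}\binom{1+m}{2}=\binom{m+1}{2}n=\frac{m(m+1)}{2}n$. Substituting these into the cost bound of Theorem~\ref{thm:complexity} yields $\mathcal{O}\big((\ell+1)^{\omega-1}\M(\tfrac{m(m+1)}{2}n)\log(\tfrac{m(m+1)}{2}n)^2\big)$, which is $\mathcal{O}\big(\ell^{\omega-1}\M(m^2n)\log(mn)^2\big)\subseteq\mathcal{O}\tilde{~}(\ell^{\omega-1}m^2n)$ after absorbing constant factors, using the super-linearity of $\M$ to pass from $\M(\tfrac{m(m+1)}{2}n)$ to $\mathcal{O}(\M(m^2n))$ and noting $\log(m^2 n)=\mathcal{O}(\log(mn))$. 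The probability statement transfers verbatim, and since $k+1\ge 1$ implies $|\K|\ge n$ and $M$ is polynomial in $n$, the remark after Theorem~\ref{thm:complexity} shows the field-size hypothesis can always be met (via a constant-degree extension if needed), so the bound holds over any field; here the corollary is stated for the algorithm directly, so one only needs the random elements to lie in a large enough set.

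This is essentially a bookkeeping argument, so there is no real obstacle; the only point requiring a moment's care is confirming that \hyp{1} and \hyp{2} together force $\varrho=\ell+1$ rather than something larger — i.e., that the $\binom{s+m-1}{s}$ term never dominates — which is immediate once $s=1$ reduces that binomial to $m$ and \hyp{1} bounds $m$ by $\ell$. Assumption~\hyp{3} plays no role in the complexity estimate itself; it is the coding-theoretic normalization ensuring the problem is non-degenerate (cf. Appendix~\ref{app:H5}), and one may simply note it is inherited from the application context without further use in the proof.
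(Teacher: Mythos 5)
Your proposal is correct and matches the paper's own argument essentially line for line: the paper derives $|\Gamma|=\ell+1$ from \hyp{2}, concludes $\varrho=\ell+1$ from \hyp{1} (since $\binom{s+m-1}{s}=m\le\ell$ when $s=1$), uses \hyp{4} to get $M=\frac{m(m+1)}{2}n$, and then plugs into Theorem~\ref{thm:complexity}. Your extra remark on absorbing $\M(\tfrac{m(m+1)}{2}n)$ into $\M(m^2n)$ via super-linearity, and your observation that \hyp{3} is not used in the cost derivation, are both accurate and consistent with the paper's surrounding discussion.
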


We note that the probability analysis in Theorem~\ref{thm:complexity}
is simplified in this context.
Indeed, to ensure probability of success at least $1/2$, the algorithm chooses
$\mathcal{O}(m^2n)$ elements uniformly at random in a set $S \subseteq \K$ 
of cardinality at least $24m^4n^2$;
if $|\K| < 24m^4n^2$, one can use the remarks following 
Theorem~\ref{thm:complexity} in Section~\ref{sec:introduction} 
about solving the problem over an extension of $\K$ and retrieving a 
solution over $\K$. Here, the base field $\K$ of a Reed-Solomon code
must be of cardinality at least $n$ since the $x_i$ are distinct;
then, an extension degree $d = \mathcal{O}(\log_n (m))$ suffices
and the cost bound above becomes 
$\mathcal{O}\big(\ell^{\omega-1} \M(m^2n)\log(mn)^2 \cdot \M(d)\log(d)\big)$.
Besides, in the list-decoding of Reed-Solomon codes we have $m=\mathcal{O}(n^2)$, 
so that $d = \mathcal{O}(1)$ and the cost bound and probability of success 
in Corollary~\ref{cor:complexity_gursud} hold for \emph{any} field $\K$
(of cardinality at least $n$). 

	\subsection{Re-encoding technique}
	\label{subsec:reenc}

The re-encoding technique has been introduced by R. Koetter and 
A. Vardy~\cite{KoeVar03b,KoMaVa11} in order to reduce the cost of the 
interpolation step in list- and soft-decoding of Reed-Solomon codes.
Here, for the sake of clarity, we present this technique only 
in the context of Reed-Solomon list-decoding via the 
Guruswami-Sudan algorithm, using the same
notation and assumptions as in Subsection~\ref{subsec:gursud} above: 
$s=1$ and we have \hyp{1}, \hyp{2}, \hyp{3} and \hyp{4}.
Under some additional assumption on the input points in 
Problem~\ref{pbm:multivariate_interpolation}, by means 
of partially pre-solving the problem one obtains an 
interpolation problem whose linearization has smaller
dimensions. The idea at the core of this technique 
is summarized in the following lemma~\cite[Lemma 4]{KoeVar03b}.
\begin{lem}
	\label{lem:reenc_core}
	Let $m$ be a positive integer, $x$ be an element in $\K$ and 
	$Q = \sum_j Q_j(X) Y^j$ be a polynomial in $\K[X,Y]$.
	Then, 
	$Q(x,0) = 0$ with multiplicity at least $m$
	if and only if
	$(X-x)^{m-j}$ divides $Q_j$ for each $j<m$.
\end{lem}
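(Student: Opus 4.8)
The plan is to reduce this to the multiplicity criterion already in play in the paper, namely that $(x,0)$ is a zero of $Q$ of multiplicity at least $m$ if and only if the shifted polynomial $Q(X+x,Y)$ has no monomial of total degree less than $m$. First I would translate the point to the origin: writing $\widetilde Q(X,Y) = Q(X+x,Y) = \sum_j \widetilde Q_j(X) Y^j$ with $\widetilde Q_j(X) = Q_j(X+x)$, the condition $Q(x,0)=0$ with multiplicity at least $m$ becomes $\widetilde Q(0,0)=0$ with multiplicity at least $m$, i.e.\ every monomial $X^a Y^j$ appearing in $\widetilde Q$ satisfies $a+j\ge m$. Grouping by the power of $Y$, this says exactly that for each $j$, every monomial of $\widetilde Q_j$ has degree at least $m-j$; equivalently $X^{m-j}$ divides $\widetilde Q_j(X)$ (this is vacuous, hence automatically true, when $j\ge m$, which matches the ``for each $j<m$'' in the statement). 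Finally, $X^{m-j}\mid \widetilde Q_j(X) = Q_j(X+x)$ is equivalent, after substituting $X\mapsto X-x$, to $(X-x)^{m-j}\mid Q_j(X)$.

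The key steps, in order, are: (i) invoke the shift characterization of multiplicity from Section~\ref{sec:introduction}; (ii) observe that shifting only in $X$ does not mix the $Y$-coefficients, so $Q(X+x,Y)$ has $j$-th $Y$-coefficient $Q_j(X+x)$; (iii) read off that ``no monomial of total degree $<m$'' decomposes coefficient-by-coefficient into ``$Q_j(X+x)$ has no term of degree $<m-j$'' for every $j$; (iv) rephrase each such condition as a divisibility $(X-x)^{m-j}\mid Q_j$, noting it is trivially satisfied for $j\ge m$ so only $j<m$ carries content. Each implication is reversible at every step, giving the claimed equivalence in both directions.

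There is no real obstacle here; the only point requiring a word of care is the characteristic-independence. The shift characterization of ``zero of multiplicity at least $m$'' used in the paper is the monomial-support condition on $Q(X+x,Y_1+y_1,\ldots)$, which is valid over any field, so the argument does not need the derivative reformulation and hence needs no assumption on $\operatorname{char}(\K)$. The mild subtlety is simply that the substitution $Y\mapsto Y$ (unchanged) and $X\mapsto X+x$ is a $\K$-algebra automorphism of $\K[X,Y]$ preserving the $Y$-degree, so it acts on the coefficient polynomials $Q_j$ individually; once that is noted, the decomposition of the total-degree condition is immediate and the proof is complete.
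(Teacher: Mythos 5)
Your proof is correct and matches the paper's own argument essentially step for step: both shift by $X\mapsto X+x$, invoke the monomial-support definition of multiplicity, decompose by $Y$-degree to get $X^{m-j}\mid Q_j(X+x)$, and translate back to $(X-x)^{m-j}\mid Q_j$. Nothing to add.
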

\begin{proof}
	By definition,
	$Q(x,0) = 0$ with multiplicity at least $m$
	if and only if $Q(X+x,Y)$ has no monomial
	of total degree less than $m$.
	Since $Q(X+x,Y) = \sum_j Q_j(X+x) Y^j$, 
	this is equivalent to the fact that 
	$X^{m-j}$ divides $Q_j(X+x)$ for each
	$j<m$.
\end{proof}

This property can be generalized to the case of several roots
of the form $(x,0)$.
More precisely, the re-encoding technique is based on a shift
of the received word by a well-chosen code word, which 
allows us to ensure the following assumption on the points 
$\{(x_r,y_r)\}_{1\le r\le n}$:
\begin{equation}
	\label{eqn:assumption_reencoding}
	\text{for some integer } n_0 \ge k+1, \quad 
	y_1 = \cdots = y_{n_0} = 0 
	\;\;\text{ and }\;\; y_{n_0+1} \neq 0, \ldots, y_n \neq 0.
\end{equation}
We now define the polynomial $G_0 = \prod_{1\le r\le n_0} (X-x_r)$
which vanishes at $x_i$ when $y_i=0$, and 
Lemma~\ref{lem:reenc_core} can be rewritten as follows:
$Q(x_r,0) = 0$ with multiplicity at least $m$ for $1\le r\le n_0$
if and only if
$G_0^{m-j}$ divides $Q_j$ for each $j<m$.
Thus, we know how to solve the vanishing condition for the
$n_0$ points for which $y_r=0$: by setting each of the $m$ polynomials
$Q_0,\ldots,Q_{m-1}$ as the product of a power of $G_0$ and an
unknown polynomial. Combining this with the polynomial approximation
problem corresponding to the points $\{(x_r,y_r)\}_{n_0+1\le r\le n}$,
there remains to solve a smaller approximation problem.

Indeed, under the previously mentioned assumptions $s=1$,
\hyp{1}, \hyp{2}, \hyp{3} and \hyp{4}, it has been shown in
Section~\ref{sec:red} that the vanishing condition~\eqref{cond:iv} of 
Problem~\ref{pbm:multivariate_interpolation} restricted to points
$\{(x_r,y_r)\}_{n_0+1\le r\le n}$ is equivalent 
to the simultaneous polynomial approximations
\begin{equation}
	\label{eqn:poly_approx_gsa}
	\text{for } i<m, \;\; \quad\sum_{i\le j\le \ell} \binom{j}{i} R^{j-i}  Q_j = 0 \;\;\mod G^{m-i},
\end{equation}
where $G = \prod_{n_0+1\le r\le n} (X-x_r)$ and $R$ is the interpolation 
polynomial such that $\deg R < n-n_0$ and $R(x_r) = y_r$ 
for $n_0+1\le r\le n$. On the other hand, we have seen that the vanishing 
condition for the points $\{(x_r,y_r)\}_{1\le r\le n_0}$ is equivalent 
to $Q_{j} = G_0^{m-j} Q^{\star}_j$ for each $j<m$, 
for some unknown polynomials $Q^{\star}_0,\ldots,Q^{\star}_{m-1}$. 
Combining both equivalences, we obtain
\begin{equation}
	\label{eqn:poly_approx_reenc}
	\text{for } i < m, \;\; \quad\sum_{i\le j< m} F_{i,j}\, Q^{\star}_j 
	\;\;+\;\; \sum_{m\le j\le\ell} F_{i,j}\, Q_j 
	\;\;=\;\; 0 \mod G^{m-i},
\end{equation}
where for $i<m$,
\begin{equation}
	\label{eqn:reenc_fij}
	F_{i,j} = \binom{j}{i} R^{j-i} G_0^{m-j} \;\text{ for } i\le j<m \quad\text{and}\quad
	F_{i,j} = \binom{j}{i} R^{j-i} \;\text{ for } i\le j \le \ell.
\end{equation}

Obviously, the degree constraints on $Q_0,\ldots,Q_{m-1}$ directly
correspond to degree constraints on $Q^\star_0,\ldots,Q^\star_{m-1}$
while those on $Q_m,\ldots,Q_\ell$ are unchanged.
The number of equations in linearizations
of~\eqref{eqn:poly_approx_reenc} is $M' = \sum_{i<m} \deg(G^{m-i})
= \frac{m(m+1)}{2}(n-n_0)$, while the number of unknowns is
$N' = \sum_{j<m} (b-jk-(m-j)n_0) + \sum_{m\le j\le\ell} (b-jk) = 
\sum_{0\le j\le\ell} (b-jk)  - \frac{m(m+1)}{2}n_0$.
In other words, we have reduced the number of (linear) unknowns
as well as the number of (linear) equations by the same quantity
$\frac{m(m+1)}{2}n_0$,
which is the number of linear equations used to express the 
vanishing condition for the $n_0$ points $(x_1,0),\ldots,(x_{n_0},0)$.
(Note that if we were in the more general context of possibly distinct
multiplicities, we would have set $y_i = 0$ for the $n_0$ points which have 
the highest multiplicities, in order to maximize the benefit of the 
re-encoding technique.)

This re-encoding technique is summarized in Algorithm~\ref{algo:reencoding}.

\begin{center}
\fbox{\begin{minipage}{14.8cm}
  \begin{algo} \textbf{Interpolation step of list-decoding Reed-Solomon codes with re-encoding technique.}
  \label{algo:reencoding}

\medskip
  \noindent
  \emph{Input:} 
  $\ell,n,m,b,k$ in $\Zp$ such that 
  \hyp{1}, \hyp{2}, \hyp{3} and \hyp{4},
  and points $\{(x_r,y_r)\}_{1\le r\le n}$ \\ 
  \hspace*{1.15cm} in $\K^2$ with the $x_r$ pairwise 
  distinct and the $y_r$ as in~\eqref{eqn:assumption_reencoding}.

  \medskip
  \noindent
  \emph{Output:} $Q_0,\ldots,Q_\ell$ in $\K[X]$ such that $\sum_{j\le\ell} Q_jY^j$ is
  a solution to Problem~\ref{pbm:multivariate_interpolation} with
  \hspace*{1.56cm} input $s=1$,  $\ell,n,m=m_1=\cdots=m_n,b,k$ and $\{(x_r,y_r)\}_{1\le r\le n}$ 

  \begin{enumerate}[{\bf 1.}]
	  \item Compute $\mu=m, \nu=\ell+1, M_i'=(m-i)(n-n_0)$,
		  $N_j' = b-jk-n_0(m-j)$ for $j<m$ and $N_j' = b - jk$ for $j\ge m$
	  \item Compute $P_i = \left(\prod_{n_0+1\le r\le n} (X-x_r)\right)^{m-i}$ 
		  for $i<m$
	\item Compute the $F_{i,j}$ (modulo $P_i$) 
		  for $i<m,j\le \ell$ as in~\eqref{eqn:reenc_fij}
	\item Compute a solution $Q_0,\ldots,Q_\ell$ to Problem~\ref{pbm:poly_approx}
		  on input $\mu, \nu$, $M'_0,\ldots,M'_{m-1}, N'_0, \ldots,N'_{\ell}$
			and the polynomials $\{(P_i,F_{i,0},\ldots,F_{i,\ell})\}_{0 \le i < m}$
		\item Return $G_0^m Q_0, \; G_0^{m-1}Q_1, \,\ldots\,, \; G_0Q_{m-1},\; Q_m, \,\ldots\, , \;Q_\ell$
			(or report ``no solution'' if previous step did)
  \end{enumerate}
\end{algo}
\end{minipage}}
\end{center}

Assuming that Step~\textbf{4} is done using 
Algorithm~\ref{algo:interpolation_eke} or~\ref{algo:interpolation_direct}, 
we obtain the following result about list-decoding of Reed-Solomon 
codes using the re-encoding technique. 
\begin{cor}
  \label{cor:complexity_reenc} 
  Take $s=1$ and assume the parameters $\ell,n$, $m:=m_1=\cdots=m_n,b$ 
  and $k:=k_1$ satisfy \hyp{1}, \hyp{2}, \hyp{3} and \hyp{4}.
  Assume further that the points $\{(x_r,y_r)\}_{1\le r\le n}$ 
  satisfy~\eqref{eqn:assumption_reencoding} for some $n_0\ge k+1$. 
  Then there exists a probabilistic algorithm that
  computes a solution to Problem~\ref{pbm:multivariate_interpolation}
  using 
  \begin{align*}
	  \mathcal{O}\big(\ell^{\omega-1} \M(m^2 (n-n_0))\log(n-n_0)^2 +
	  m\M(mn_0) &+ \M(n_0)\log(n_0) \big) \\[2mm]
			 &\subseteq \mathcal{O}\tilde{~}(\ell^{\omega-1} m^2(n-n_0) + m^2n_0 )
  \end{align*}
  operations in $\K$
  with probability of success at least $1/2$.
\end{cor}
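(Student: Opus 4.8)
The plan is to account for the cost of each step of Algorithm~\ref{algo:reencoding} and then invoke Corollary~\ref{cor:complexity_gursud} (or rather the bound of Theorem~\ref{thm:poly_approx_complexity}) for the dominant Step~\textbf{4}. First I would set $n_1 = n - n_0$ for brevity and record that, by~\eqref{eqn:assumption_reencoding}, we have $n_0 \ge k+1$, hence $n_1 \le n - k - 1 < n$. The correctness of the algorithm has essentially already been established in the discussion preceding it: Lemma~\ref{lem:reenc_core} (generalized to several points of the form $(x,0)$) shows that the factorization $Q_j = G_0^{m-j} Q_j^\star$ for $j<m$ exactly encodes the vanishing condition at $(x_1,0),\dots,(x_{n_0},0)$, while Lemma~\ref{lem:univariate_reformulation} (specialized to $s=1$ under \hyp{1}, \hyp{2}, \hyp{3}, \hyp{4}) encodes the vanishing condition at the remaining points as~\eqref{eqn:poly_approx_gsa}; combining the two yields~\eqref{eqn:poly_approx_reenc}–\eqref{eqn:reenc_fij}, so that a solution returned in Step~\textbf{5} is exactly a solution to Problem~\ref{pbm:multivariate_interpolation}, and conversely every solution arises this way. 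So the substance of the proof is the complexity analysis.

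Next I would go through Steps~\textbf{1}--\textbf{3} and~\textbf{5}. Step~\textbf{1} is $O(\ell + m)$ arithmetic on integers and negligible. For Step~\textbf{2}, computing $G = \prod_{n_0+1\le r\le n}(X-x_r)$ takes $O(\M(n_1)\log(n_1))$ operations by subproduct-tree techniques; then $P_i = G^{m-i}$ for $i<m$ can be obtained iteratively via $P_{m-1} = G$, $P_{i} = P_{i+1}\cdot G$, costing $\sum_{i<m}\M((m-i)n_1) = O(\M(m n_1))$ by super-linearity of $\M$ — but in fact we only ever need the $F_{i,j}$ reduced modulo $P_i$, and the reduction moduli used inside Step~\textbf{4}; the cleanest accounting is $O(\M(mn_1) + \M(n_1)\log(n_1))$ for this step. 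Step~\textbf{3}: one first computes $G_0 = \prod_{1\le r\le n_0}(X-x_r)$ in $O(\M(n_0)\log(n_0))$, its powers $G_0^{m-j}$ for $j<m$ in $O(\M(mn_0))$, the interpolant $R$ (with $\deg R < n_1$) in $O(\M(n_1)\log(n_1))$, its powers $R^{j-i}$ up to exponent $\ell$, and the products $\binom{j}{i}R^{j-i}$, $\binom{j}{i}R^{j-i}G_0^{m-j}$, all reduced modulo $P_i$ of degree $M_i' = (m-i)n_1$. Fixing $i$ and computing the $F_{i,j}$ incrementally modulo $P_i$ costs $O(\nu\,\M(M_i'))$, and summing over $i$ gives $O(\ell\,\M(mn_1))$, which is absorbed by the Step~\textbf{4} cost; the $G_0^{m-j}$ contributions add the $O(m\,\M(mn_0))$ term, and more carefully $O(m\,\M(mn_0)) = O(m^2\,\M(n_0))$ fits the claimed $m\,\M(mn_0)$ summand after one more super-linearity bound. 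Step~\textbf{5} multiplies the $Q_j$ for $j<m$ by $G_0^{m-j}$: since $\deg Q_j < N_j' \le b$ and $\deg G_0^{m-j} \le mn_0$, this is $\sum_{j<m}\M(b + mn_0) = O(m\,\M(b+mn_0))$, again within the stated bound once one uses $b = O(\ell k) = O(n^2)$ and $b \le \ell n$ to see $b = O(mn)$-type estimates; I would state this reduction explicitly.

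For Step~\textbf{4}, I would apply Theorem~\ref{thm:poly_approx_complexity} with $\mu = m$, $\nu = \ell+1$, hence $\rho = \max(m,\ell+1) = \ell+1$ by \hyp{1}, and $M' = \sum_{i<m} M_i' = \sum_{i<m}(m-i)n_1 = \frac{m(m+1)}{2}n_1 = \Theta(m^2 n_1)$. This gives $O\big(\ell^{\omega-1}\M(m^2 n_1)\log(m^2 n_1)^2\big) = O\big(\ell^{\omega-1}\M(m^2(n-n_0))\log(n-n_0)^2\big)$ operations (the $\log$ argument can be simplified since $m = O(n)$ and $\ell = O(n^2)$ so $\log(m^2n_1) = O(\log(n-n_0))$ up to constants, or one just absorbs it into the soft-$O$). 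Adding the contributions of Steps~\textbf{2},~\textbf{3},~\textbf{5} and noting they are dominated termwise by the $m\,\M(mn_0)$ and $\M(n_0)\log(n_0)$ summands in the statement (the $\ell\,\M(mn_1)$ and $m\,\M(b+mn_0)$ parts being absorbed by the main term and by $m\,\M(mn_0)$ respectively, using $b = O(\ell k)$ and $\ell = O(n-n_0)$ under the worst-case parameter choices, or more robustly using $N' \le M'+1$), yields the displayed cost bound, and the soft-$O$ form follows from $\M(d) = O\tilde{~}(d)$. The probability statement is inherited verbatim from Theorem~\ref{thm:poly_approx_complexity} applied at Step~\textbf{4}, the other steps being deterministic.

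The main obstacle I expect is bookkeeping rather than any real difficulty: making sure every auxiliary product ($G_0^{m-j}$, $R^{j-i}$, the mod-$P_i$ reductions, the final multiplications in Step~\textbf{5}) is genuinely dominated by the two extra summands $m\,\M(mn_0)$ and $\M(n_0)\log(n_0)$ and by the main term $\ell^{\omega-1}\M(m^2(n-n_0))\log(n-n_0)^2$, without hidden factors of $\ell$ or $m$ sneaking in. In particular one must be a little careful that the cost of forming all the $F_{i,j}$ in Step~\textbf{3} — a priori $O(\ell m \,\M(\cdot))$ if done naively — is organized (fix $i$, build the list of $F_{i,j}$ incrementally modulo $P_i$, reuse powers of $R$) so that it collapses to $O(\ell\,\M(m^2(n-n_0)))$, which is subsumed by the main term since $\ell \le \ell^{\omega-1}$; and that $b$ is bounded appropriately (it is $O(\ell k)$ under \hyp{1}–\hyp{4}, hence polynomially bounded in $n$) so that the Step~\textbf{5} multiplications do not dominate.
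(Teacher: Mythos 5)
Your overall strategy — cost the five steps of Algorithm~\ref{algo:reencoding}, dominate Steps~\textbf{1}--\textbf{3} by the Step~\textbf{4} term, and invoke Theorem~\ref{thm:poly_approx_complexity} with $\mu=m$, $\nu=\ell+1$, $M'=\frac{m(m+1)}{2}(n-n_0)$ — is exactly the paper's, and your accounting for Steps~\textbf{1}--\textbf{4} is essentially correct (the paper gets $\mathcal{O}(\ell\,\M(m^2(n-n_0))\log(n-n_0)+\M(n_0)\log(n_0))$ for Steps~\textbf{1}--\textbf{3}). But your analysis of Step~\textbf{5} has a genuine gap. You bound $\deg Q_j < N'_j \le b$ and then need a bound on $b$; you suggest $b = \mathcal{O}(\ell k)$ or $b\le \ell n$ ``under the worst-case parameter choices.'' None of \hyp{1}--\hyp{4} puts any upper bound on $b$, and the corollary is stated without one. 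So $m\,\M(b+mn_0)$ is not controlled by the claimed right-hand side in general, and your main line of argument does not close.

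You do flag the fix in passing (``or more robustly using $N'\le M'+1$''), but you leave it undeveloped, and it is in fact the crucial ingredient. The observation is not about the input degree bounds $N'_j$: it is that Algorithms~\ref{algo:interpolation_eke} and~\ref{algo:interpolation_direct} \emph{truncate} the degree bounds so that their sum is at most $M'+1$ (see the discussion around~\eqref{eqn:quasi-square_system}), hence the polynomials $Q_0,\ldots,Q_{\ell}$ actually \emph{returned} satisfy $\deg Q_0 + \cdots + \deg Q_{m-1} \le M' + 1 = \frac{m(m+1)}{2}(n-n_0)+1$. With this, the product $G_0^{m-j}Q_j$ costs $\mathcal{O}(\M(mn_0)+\M(\deg Q_j))$, and summing over $j<m$ with super-linearity yields $\mathcal{O}(m\,\M(mn_0)+\M(m^2(n-n_0)))$ for Step~\textbf{5}, which is precisely what the corollary needs — with no hypothesis on $b$. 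You should make this the primary argument, not a parenthetical aside.

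Two small slips: computing $G_0,\ldots,G_0^m$ iteratively costs $\mathcal{O}(m\,\M(mn_0))$, not $\mathcal{O}(\M(mn_0))$; and the inequality $m\,\M(mn_0)=\mathcal{O}(m^2\M(n_0))$ goes the wrong way — super-linearity gives $\M(mn_0)\ge m\,\M(n_0)$, so $m\,\M(mn_0)\ge m^2\M(n_0)$. Neither affects the final bound once Step~\textbf{5} is handled as above.
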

\begin{proof}
For steps~\textbf{1} to \textbf{3}, the complexity analysis is similar 
to the one in the proof of Proposition~\ref{prop:reduction}; we still note
that we have to compute $G_0$, so that these steps use 
$\mathcal{O}(\ell \M(m^2(n-n_0))\log(n-n_0) + \M(n_0)\log(n_0))$ operations in $\K$.
According to Theorem~\ref{thm:poly_approx_complexity},
Step~\textbf{4} uses 
$\mathcal{O}\big(\ell^{\omega-1} \M(m^2 (n-n_0))\log(n-n_0)^2\big)$ 
operations in $\K$.
Step~\textbf{5} uses 
$\mathcal{O}(m\M(mn_0) + \M(m^2(n-n_0)))$ operations in~$\K$. 
Indeed, we first compute $G_0,\ldots,G_0^m$ using 
$\mathcal{O}(m \M(mn_0))$ operations and then 
the products $G_0^{m-j} Q_j$ for $j<m$ are computed using 
$\mathcal{O}(m\M(mn_0) + \M(m^2(n-n_0)))$ operations:
for each $j<m$, the product $G_0^{m-j} Q_j$
can be computed using $\mathcal{O}(\M(mn_0) + \M(\deg(Q_j)))$ operations
since $G_0^{m-j}$ has degree at most $mn_0$; and from 
Algorithms~\ref{algo:interpolation_eke} and~\ref{algo:interpolation_direct}
we know that $\deg Q_0 + \cdots + \deg Q_{m-1} \le (\sum_{i<m} M'_i)+1$
(see~\eqref{eqn:quasi-square_system} in Section~\ref{subsec:key_equations_version}),
with here $\sum_{i<m} M'_i = \frac{m(m+1)}{2}(n-n_0)$.
\end{proof}

Similarly to the remarks following Corollary~\ref{cor:complexity_gursud}, 
if $|\K| < 24m^2(n-n_0)$ then $\K$ does not contain enough elements 
to ensure a probability of success at least $1/2$ using our algorithms, 
but one can solve the problem over an extension of degree~$\mathcal{O}(1)$ 
and retrieve a solution over $\K$ without impacting the cost bound.

	\subsection{Interpolation step in the Wu algorithm}

Our goal now is to show that our
algorithms can also be used to efficiently solve the 
interpolation step in the Wu algorithm.
In this context, we have $s=1$ and 
we make assumptions \hyp{1}, \hyp{2} and \hyp{4}
on input parameters to 
Problem~\ref{pbm:multivariate_interpolation}. 
We note that here the weight $k$ is no longer linked to 
the dimension of the code; besides, we may have $k\le 0$.

Roughly, the Wu algorithm~\cite{Wu08} works as follows.
It first uses the Berlekamp-Massey algorithm to reduce the problem 
of list-decoding a Reed-Solomon code to a problem of rational 
reconstruction which focuses on the error locations
(while the Guruswami-Sudan algorithm directly relies on a problem
of polynomial reconstruction which focuses on the correct locations).
Then, it solves this problem using an interpolation step 
and a root-finding step which are very similar to
the ones in the Guruswami-Sudan algorithm.

Here we focus on the interpolation step, which differs from
the one in the Guruswami-Sudan algorithm by mainly one
feature: the points $\{(x_r,y_r)\}_{1\le r\le n}$ lie
in $\K \times (\K\cup\{\infty\})$, that is,
some $y_r$ may take the special value $\infty$.
For a point $(x,\infty)$, a polynomial $Q$ 
in $\K[X,Y]$ and a parameter $\ell$
such that $\deg_Y(Q) \le\ell$,
Wu defines in~\cite{Wu08} the vanishing condition 
$Q(x,\infty) = 0$ with multiplicity at least $m$ as 
the vanishing condition $\overline{Q}(x,0) = 0$ with
multiplicity at least $m$, where 
$\overline{Q} = Y^\ell Q(X,Y^{-1})$ is the 
reversal of $Q$ with respect to the variable $Y$
and the parameter $\ell$.
Thus, we have the following direct adaptation of
Lemma~\ref{lem:reenc_core}.

\begin{lem}
	\label{lem:Wu_infty_point}
	Let $\ell,m$ be positive integers, $x$ be an element in $\K$ and 
	$Q = \sum_{j\le\ell} Q_j(X) Y^j$ be a polynomial in $\K[X,Y]$ with
	$\deg_Y(Q)\le\ell$.
	Then, 
	$Q(x,\infty) = 0$ with multiplicity at least $m$
	if and only if
	$(X-x)^{m-j}$ divides $Q_{\ell-j}$ for each $j<m$.
\end{lem}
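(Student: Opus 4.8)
The plan is to reduce directly to Lemma~\ref{lem:reenc_core} using the definition of vanishing at an infinite point. First I would expand the reversal $\overline{Q} = Y^\ell Q(X,Y^{-1})$: since $Q = \sum_{j\le\ell} Q_j(X) Y^j$, substituting $Y^{-1}$ for $Y$ and multiplying by $Y^\ell$ gives $\overline{Q} = \sum_{j\le\ell} Q_j(X) Y^{\ell-j}$. Because $\deg_Y(Q)\le\ell$, this expression involves only nonnegative powers of $Y$, so $\overline{Q}$ is a genuine polynomial in $\K[X,Y]$. Reindexing with $i = \ell - j$, we get $\overline{Q} = \sum_{i\le\ell} Q_{\ell-i}(X) Y^i$, so the coefficient of $Y^i$ in $\overline{Q}$ is exactly $Q_{\ell-i}$.

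Next, by the definition adopted from Wu~\cite{Wu08} and recalled just above the statement, $Q(x,\infty) = 0$ with multiplicity at least $m$ means precisely that $\overline{Q}(x,0) = 0$ with multiplicity at least $m$. Applying Lemma~\ref{lem:reenc_core} to the polynomial $\overline{Q}$, written as $\overline{Q} = \sum_i \overline{Q}_i(X) Y^i$ with $\overline{Q}_i = Q_{\ell-i}$, and to the element $x$, we obtain that this holds if and only if $(X-x)^{m-i}$ divides $\overline{Q}_i = Q_{\ell-i}$ for each $i<m$. Renaming $i$ as $j$ gives exactly the claimed equivalence.

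I expect no substantive obstacle here: the lemma is essentially a change of variables wrapped around Lemma~\ref{lem:reenc_core}. The only points requiring a little care are checking that the condition $\deg_Y(Q)\le\ell$ is what guarantees $\overline{Q}$ has no negative $Y$-powers (so that Lemma~\ref{lem:reenc_core} applies verbatim), and noting that Lemma~\ref{lem:reenc_core} is stated without an explicit upper bound on the index $j$, which is harmless since any coefficients $Q_{\ell-i}$ with $\ell - i > \ell$, i.e.\ $i<0$, simply do not occur.
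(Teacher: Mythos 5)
Your proof is correct and takes exactly the approach the paper intends: the paper itself gives no proof, simply noting that the lemma is a ``direct adaptation of Lemma~\ref{lem:reenc_core}'', and your argument spells out that adaptation via the reversal $\overline{Q}$. One small imprecision in your closing remark: the edge case that could arise is $\ell-i<0$ (i.e.\ $i>\ell$, possible if $m>\ell+1$), not $\ell-i>\ell$; there $\overline{Q}_i=0$ and the divisibility is vacuous, so the argument still goes through (and in the paper's setting \hyp{1} rules it out anyway).
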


As in the re-encoding technique, assuming we reorder the points so that
$y_1=\cdots=y_{n_\infty} = \infty$ and $y_r \neq \infty$ for $r>n_\infty$
for some $n_\infty \ge 0$, the vanishing condition
of Problem~\ref{pbm:multivariate_interpolation} restricted
to the points $\{(x_r,y_r)\}_{1\le r\le n_\infty}$ is equivalent
to $Q_{\ell-j} = G_\infty^{m-j} Q^\star_{\ell-j}$ for each $j<m$,
for some unknown polynomials $Q^\star_{\ell-m+1},\ldots,Q^\star_\ell$.
The degree constraints on $Q_{\ell-m+1},\ldots,Q_{\ell}$ directly
correspond to degree constraints on $Q^\star_{\ell-m+1},\ldots,Q^\star_{\ell}$,
while those of $Q_0,\ldots,Q_{\ell-m}$ are unchanged.

This means that in the interpolation problem we are faced with,
we can deal with the points of the form $(x,\infty)$ the same way
we dealt with the points of the form $(x,0)$ in the case of 
the re-encoding technique: we can pre-solve the corresponding
equations efficiently, and we are left with an approximation problem 
whose dimensions are smaller than if no special attention had
been paid when dealing with the points of the form $(x,\infty)$. 
More precisely, 
defining
$G_\infty = \prod_{1\le r\le n_\infty} (X-x_r)$ as well as
$G = \prod_{n_\infty +1\le r\le n} (X-x_r)$ and 
$R$ of degree less than $n-n_\infty$ such that $R(x_r) = y_r$
for each $r>n_\infty$, the polynomial approximation problem 
we obtain is
\begin{equation}
	\label{eqn:poly_approx_wu}
	\text{for } i < m, \;\; \sum_{i\le j\le \ell -m} F_{i,j}\,  Q_j 
	\;\;+\;\; \sum_{\ell - m < j \le\ell} F_{i,j}\, Q^{\star}_j
	\;\;=\;\; 0 \mod G^{m-i}
\end{equation}
where for $i<m$,
\begin{equation}
	\label{eqn:wu_fij}
	F_{i,j} = \binom{j}{i} R^{j-i} \;\text{ for } i\le j\le\ell-m \quad\text{and}\quad
	F_{i,j} = \binom{j}{i} R^{j-i} G_\infty^{j-\ell+m} \;\text{ for } \ell - m < j \le\ell.
\end{equation}
Pre-solving the equations for the points of the form $(x,\infty)$
has led to reduce the number of (linear) unknowns
as well as the number of (linear) equations by the same quantity
$\frac{m(m+1)}{2}n_\infty$,
which is the number of linear equations used to express the 
vanishing condition for the $n_\infty$ points $(x_1,\infty),\ldots,(x_{n_\infty},\infty)$.
We have the following result.

\begin{cor}
  \label{cor:complexity_wu} 
  Take $s=1$ and assume the parameters $\ell,n$, $m:=m_1=\cdots=m_n,b$ 
  and $k:=k_1$ satisfy \hyp{1}, \hyp{2} and \hyp{4}.
  Assume further that each the points $\{(x_r,y_r)\}_{1\le r\le n}$ 
  is allowed to have the special value $y_r = \infty$.
  
  Then there exists a probabilistic algorithm that
  computes a solution to Problem~\ref{pbm:multivariate_interpolation}
  using \[ \mathcal{O}\big(\ell^{\omega-1} \M(m^2 n)\log(n)^2\big) 
  \subseteq \mathcal{O}\tilde{~}(\ell^{\omega-1} m^2n ) \]
  operations in $\K$
  with probability of success is at least $1/2$.
\end{cor}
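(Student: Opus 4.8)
The plan is to follow the same strategy as in the proof of Corollary~\ref{cor:complexity_reenc}, replacing the role played there by the points $(x_r,0)$ with the points $(x_r,\infty)$, and using Lemma~\ref{lem:Wu_infty_point} in place of Lemma~\ref{lem:reenc_core}. First I would reorder the input so that $y_1=\cdots=y_{n_\infty}=\infty$ and $y_r\ne\infty$ for $r>n_\infty$, and introduce $G_\infty=\prod_{1\le r\le n_\infty}(X-x_r)$, $G=\prod_{n_\infty<r\le n}(X-x_r)$, and the interpolant $R$ of degree less than $n-n_\infty$ with $R(x_r)=y_r$ for $r>n_\infty$. By Lemma~\ref{lem:Wu_infty_point}, the vanishing conditions at the $\infty$-points are equivalent to writing $Q_{\ell-j}=G_\infty^{m-j}Q^\star_{\ell-j}$ for $j<m$; substituting this and applying the reduction of Section~\ref{sec:red} to the remaining $n-n_\infty$ ordinary points (which is legitimate because the $x_r$ are pairwise distinct and \hyp{1}, \hyp{2}, \hyp{4} hold, so that $\Gamma=\{0,\dots,\ell\}$) turns Problem~\ref{pbm:multivariate_interpolation} into exactly the instance of Problem~\ref{pbm:poly_approx} given by~\eqref{eqn:poly_approx_wu} and~\eqref{eqn:wu_fij}, with $\mu=m$, $\nu=\ell+1$, moduli $P_i=G^{m-i}$, and hence $M'=\sum_{i<m}(m-i)(n-n_\infty)=\tfrac{m(m+1)}{2}(n-n_\infty)$; the degree bounds $N'_j$ are the original ones for $j\le\ell-m$ and are decreased by $(j-\ell+m)n_\infty$ for $\ell-m<j\le\ell$.

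Next I would bound the cost. Setting up $G_\infty$, $G$, $R$, the $P_i$ and the $F_{i,j}$ modulo $P_i$ costs $\mathcal{O}\big(\ell\,\M(m^2(n-n_\infty))\log(m^2n)+\M(n)\log(n)\big)$ by an analysis identical to that in Proposition~\ref{prop:reduction}, with one extra subproduct tree for $G_\infty$. Since by \hyp{1} we have $\mu=m\le\ell+1=\nu$, so $\rho=\max(\mu,\nu)=\ell+1$, and since $M'=\mathcal{O}(m^2n)$, Theorem~\ref{thm:poly_approx_complexity} (applying Algorithm~\ref{algo:interpolation_eke} or~\ref{algo:interpolation_direct}) solves this instance in $\mathcal{O}\big(\ell^{\omega-1}\M(m^2(n-n_\infty))\log(m^2n)^2\big)$ operations. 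Finally, recovering the output $Q_{\ell-j}=G_\infty^{m-j}Q^\star_{\ell-j}$ for $j<m$ costs $\mathcal{O}\big(m\,\M(mn_\infty)+\M(m^2(n-n_\infty))\big)$: one first computes $G_\infty,\dots,G_\infty^m$, then performs the products, invoking the bound $\deg Q^\star_{\ell-m+1}+\cdots+\deg Q^\star_\ell\le\big(\sum_{i<m}M'_i\big)+1$ coming from~\eqref{eqn:quasi-square_system}. Using $n-n_\infty\le n$, $mn_\infty\le mn$, $\omega\ge2$, the fact that $m$ is polynomial in $n$ in this context, and the super-linearity of $\M$ (in particular $m\,\M(mn)\le\M(m^2n)$), every one of these terms is bounded by $\mathcal{O}\big(\ell^{\omega-1}\M(m^2n)\log(n)^2\big)$, which yields the stated bound and the inclusion in $\softO{\ell^{\omega-1}m^2n}$.

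For the probability of success, Theorem~\ref{thm:poly_approx_complexity} chooses $\mathcal{O}(M')=\mathcal{O}(m^2n)$ elements of $\K$, and if these are drawn uniformly at random from a subset of $\K$ of cardinality at least $6(M'+1)^2=\mathcal{O}(m^4n^2)$ the success probability is at least $1/2$; exactly as in the remarks after Corollaries~\ref{cor:complexity_gursud} and~\ref{cor:complexity_reenc}, if $\K$ is too small one first works over an extension of $\K$ of degree $\mathcal{O}(1)$ (which exists since $|\K|\ge n$) and then descends the solution, leaving the cost bound unchanged. The only genuinely new ingredient relative to the re-encoding case is the reversal with respect to $Y$ built into Wu's definition of vanishing at a point $(x,\infty)$, but once Lemma~\ref{lem:Wu_infty_point} has been established this is purely notational; I therefore expect no real obstacle, and the only point requiring a little care is the bookkeeping showing that the setup and recovery terms are all dominated, as sketched above.
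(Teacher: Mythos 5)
Your proposal is correct and follows essentially the same route the paper sketches in the text preceding Corollary~\ref{cor:complexity_wu}: pre-solve the $\infty$-points via Lemma~\ref{lem:Wu_infty_point}, reduce the remaining finite points as in Section~\ref{sec:red} to the instance~\eqref{eqn:poly_approx_wu}--\eqref{eqn:wu_fij} of Problem~\ref{pbm:poly_approx}, solve it with Algorithm~\ref{algo:interpolation_eke} or~\ref{algo:interpolation_direct}, and multiply back by powers of $G_\infty$. The cost bookkeeping you carry out, modelled on the proof of Corollary~\ref{cor:complexity_reenc}, is exactly what the paper leaves implicit, and your observation that $m$ is polynomially bounded in $n$ is what reconciles $\log(m^2n)$ with the stated $\log(n)$.
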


	As above, if $|\K| < 24m^2(n-n_\infty)$ 
	then in order to ensure a probability of success at least $1/2$ 
	using our algorithms, one can solve the problem over an extension 
	of degree~$\mathcal{O}(1)$ and retrieve a solution over $\K$,
	without impacting the cost bound.
	
	We note that unlike in the re-encoding technique where the 
	focus was on a reduced cost involving $n-n_0$, here we are
	not interested in writing the detailed cost involving
	$n-n_\infty$. The reason is that $n_\infty$ is expected to
	be close to $0$ in practice. The main advantage of the Wu
	algorithm over the Guruswami-Sudan algorithm is that it 
	uses a smaller multiplicity $m$, at least for practical 
	code parameters; details about the choice of parameters 
	$m$ and $\ell$ in the context of the Wu algorithm can be 
	found in~\cite[Section IV.C]{BeHoNiWu13}.

	\subsection{Application to soft-decoding of Reed-Solomon codes}
	\label{subsec:softdec}

As a last application, we briefly sketch how to adapt our results to the
context of soft-decoding, in which we still have $s=1$. 
The interpolation step in soft-decoding
of Reed-Solomon codes~\cite{KoeVar03a} differs from
Problem~\ref{pbm:multivariate_interpolation} because there is no
assumption ensuring that the $x_r$ are pairwise distinct among 
the points $\{(x_r,y_r)\}_{1\le r\le n}$.
Regarding our algorithms, this is not a minor issue since 
this assumption is at the core of the reduction in Section~\ref{sec:red};
we will see that we can still rely on Problem~\ref{pbm:poly_approx}
in this context. 
However, although the number of linear equations 
$\sum_{1\le r\le n} \frac{m_r(m_r+1)}{2}$ imposed by the vanishing 
condition is not changed by the fact that several $x_r$ can be the same 
field element, it is expected that the reduction to 
Problem~\ref{pbm:poly_approx} will not be as 
effective as if the $x_r$ were pairwise distinct. More precisely,
the displacement rank of the structured matrix in the 
linearizations of the problem in 
Algorithms~\ref{algo:interpolation_eke} and~\ref{algo:interpolation_direct}
may in some cases be larger than if the $x_r$ were pairwise distinct.

To measure to which extent we are far from the situation
where the $x_r$ are pairwise distinct, we use the parameter
\[q = \max_{x\in\K} \big|\{r \in\{1,\ldots,n\} \;\mid\; x_r = x\}\big|\;.\]
For example, $q =1$ corresponds to pairwise distinct $x_r$
while $q=n$ corresponds to $x_1=\cdots=x_n$; we always have
$q\le n$ and, if $\K$ is a finite field, $q \le |\K|^s$ with $s=1$
in our context here. 
Then, we can write the set of points $\PP = \{(x_r,y_r)\}_{1\le r\le n}$
as the disjoint union of $q$ sets $\PP = \PP_1 \cup \cdots \cup \PP_q$ 
where each set $\PP_h = \{(x_{h,r},y_{h,r})\}_{1\le r\le n_h}$
is such that the $x_{h,r}$ are pairwise distinct; we denote
$m_{h,r}$ the multiplicity associated to the point $(x_{h,r},y_{h,r})$
in the input of Problem~\ref{pbm:multivariate_interpolation}.
Now, the vanishing condition~\eqref{cond:iv} asks that the $q$ 
vanishing conditions restricted to each $\PP_h$ hold simultaneously.
Indeed, $Q(x_r,y_r) = 0$ with multiplicity at least
$m_r$ for all points $(x_r,y_r)$ in $\PP$ if and only if 
for each set $\PP_h$, $Q(x_{h,r},y_{h,r})=0$ with
multiplicity at least $m_{h,r}$ for all points $(x_{h,r},y_{h,r})$
in $\PP_h$.

We have seen in Section~\ref{sec:red} how to rewrite the vanishing
condition as simultaneous polynomial approximations when the $x_r$
are pairwise distinct. This reduction extends to this case:
by simultaneously rewriting the vanishing condition
for each set $\PP_h$, one obtains a problem of simultaneous 
polynomial approximations whose solutions exactly correspond
to the solutions of the instance of (extended)
Problem~\ref{pbm:multivariate_interpolation} we are considering.
Here, we do not give details about this reduction; they can be found 
in~\cite[Section 5.1.1]{Zeh13}.
Now, let $m^{(h)}$ be the largest multiplicity among those of the points
in $\PP_h$; in this reduction to Problem~\ref{pbm:poly_approx}, 
the number of polynomial equations we obtain is 
$\sum_{1\le h\le q} m^{(h)}$. Thus, according to 
Theorem~\ref{thm:poly_approx_complexity}, for solving this instance 
of Problem~\ref{pbm:poly_approx}, our Algorithms~\ref{algo:interpolation_eke} 
and~\ref{algo:interpolation_direct} use $\softO{\rho^{\omega-1} M'}$
operations in $\K$, where $\rho = \max(\ell+1,\sum_{1\le h\le q} m^{(h)})$
and $M' =\sum_{1\le r\le n} \frac{m_r(m_r+1)}{2}$. 
We see in this cost bound that the distribution of the points into
disjoint sets $\PP = \PP_1 \cup \cdots \cup \PP_q$
has an impact on the number of polynomial equations in
the instance of Problem~\ref{pbm:poly_approx} we get:
when choosing this distribution, multiplicities could 
be taken into account in order to minimize this impact.


\appendix
\makeatletter
\def\@seccntformat#1{Appendix\ \csname the#1\endcsname .\quad}
\makeatother

\section{On assumption \hyp{1}}
\label{app:H1}

In this appendix, we discuss the relevance of assumption~\hyp{1}
that was introduced previously for  
Problem~\ref{pbm:multivariate_interpolation}. In the introduction, 
we did not make any assumption on $m = \max_{1\le i\le n} m_i$ and
$\ell$, but we mentioned that the assumption \hyp{1}, that is, $m \le
\ell$ is mostly harmless. The following lemma substantiates this
claim, by showing that the case $m > \ell$ can be reduced to the case
$m = \ell$. 

\begin{lem}\label{lem:mell}
  Let $s,\ell,n,m_1,\ldots,m_n,b,\bk$ be parameters for 
  Problem~\ref{pbm:multivariate_interpolation}, and suppose 
  that $m > \ell$. Define 
  $P = \prod_{1\le i\le n: \; m_i > \ell} (X-x_i)^{m_i-\ell}$ 
  and $d=\deg(P)$. 
  The solutions to this problem are the polynomials 
  of the form $Q= Q^\star\, P$ with $Q^\star$ a solution
  for the parameters $s,\ell,n,m'_1,\ldots,m'_n,b-d,\bk$,
  where $m'_i = \ell$ if $m_i>\ell$ and $m'_i = m_i$ otherwise.
\end{lem}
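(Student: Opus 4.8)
The plan is to show a bijection between solutions of the original problem (with $m>\ell$) and solutions of the modified problem multiplied by the fixed polynomial $P$. The key observation is that condition~\eqref{cond:ii}, the list-size condition $\deg_{Y_1,\ldots,Y_s}(Q)\le\ell$, forces a very special shape on any solution $Q$ at those points $x_i$ whose multiplicity $m_i$ exceeds $\ell$. Concretely, I would first invoke Lemma~\ref{lem:univariate_reformulation_point}: at such a point $(x_i,\by_i)$, vanishing with multiplicity at least $m_i$ is equivalent to $Q^{[\bi]}(X,\bR_i)=0 \bmod (X-x_i)^{m_i-\sbi}$ for all $\sbi<m_i$, where $\bR_i$ is any $s$-tuple of polynomials interpolating $\by_i$ at $x_i$. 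Taking $\sbi=\bzero$ in particular gives $(X-x_i)^{m_i}\mid Q(X,\bR_i)$; but more is true, and I want to extract exactly the factor $(X-x_i)^{m_i-\ell}$ from $Q$ as a polynomial in $X$ (not just from its evaluation at $\bR_i$).

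The cleanest route is the shift-and-Taylor argument used in the proof of Lemma~\ref{lem:univariate_reformulation_point}. Fix $i$ with $m_i>\ell$ and, up to a shift, assume $(x_i,\by_i)=(0,\bzero)$. If $Q$ has $Y$-degree at most $\ell$ and vanishes to order $\ge m_i$ at $(0,\bzero)$, write $Q=\sum_{|\bj|\le\ell} Q_\bj(X)\bY^\bj$. Substituting $\bY\mapsto\bzero$ is too lossy; instead I would argue monomial-by-monomial: the shifted polynomial $Q(X,\bY)$ (already at the origin) has no monomial of total degree $<m_i$, and since each $\bY^\bj$ contributes $Y$-degree $|\bj|\le\ell<m_i$, the $X$-part $Q_\bj(X)$ of the monomial $Q_\bj(X)\bY^\bj$ must be divisible by $X^{m_i-|\bj|}\supseteq X^{m_i-\ell}$. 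Hence $X^{m_i-\ell}\mid Q_\bj$ for every $\bj$, i.e.\ $X^{m_i-\ell}\mid Q$. Undoing the shift, $(X-x_i)^{m_i-\ell}\mid Q$ in $\K[X,\bY]$. Since the $x_i$ are pairwise distinct, these factors are coprime, so $P=\prod_{m_i>\ell}(X-x_i)^{m_i-\ell}$ divides $Q$: we may write $Q=Q^\star P$ with $Q^\star\in\K[X,\bY]$ uniquely determined.

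It remains to check that $Q\mapsto Q^\star$ translates each of the four conditions into the corresponding condition for the parameters $s,\ell,n,m'_1,\ldots,m'_n,b-d,\bk$, and conversely. Condition~\eqref{cond:i} (nonzero) is clear since $P\ne 0$. Condition~\eqref{cond:ii} is unchanged because $P$ does not involve the $Y_j$, so $\deg_{Y_1,\ldots,Y_s}(Q^\star)=\deg_{Y_1,\ldots,Y_s}(Q)$. Condition~\eqref{cond:iii}: since multiplication by $P$ adds exactly $d$ to the $X$-degree of every $\bY^\bj$-coefficient, $\wdeg_{\bk}(Q)=\wdeg_{\bk}(Q^\star)+d$, so $\wdeg_{\bk}(Q)<b \iff \wdeg_{\bk}(Q^\star)<b-d$. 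Condition~\eqref{cond:iv} is the only place needing a small argument: at a point $x_i$ with $m_i>\ell$, the factor $(X-x_i)^{m_i-\ell}$ of $P$ already contributes multiplicity $m_i-\ell$ there (for any $\by_i$, since $P$ is constant in $\bY$ and $(X-x_i)^{m_i-\ell}$ vanishes with that order at $X=x_i$), and multiplicities at a common zero add, so $Q=Q^\star P$ vanishes at $(x_i,\by_i)$ with multiplicity $\ge m_i$ iff $Q^\star$ vanishes there with multiplicity $\ge m_i-(m_i-\ell)=\ell=m'_i$; here I would use that the other factors of $P$, namely $(X-x_{i'})^{m_{i'}-\ell}$ for $i'\ne i$, are units in the local ring at $(x_i,\by_i)$ since $x_{i'}\ne x_i$, hence do not change the multiplicity. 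At a point with $m_i\le\ell$, $P$ is again a unit locally (all its roots $x_{i'}$ satisfy $x_{i'}\ne x_i$), so $Q$ and $Q^\star$ vanish there to the same order, and $m'_i=m_i$. This establishes the claimed bijection in both directions.

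The main obstacle is the divisibility step in the second paragraph: one must be careful that the list-size bound is used correctly to pass from ``$Q$ vanishes to order $m_i$ at $(x_i,\by_i)$'' to the genuine polynomial divisibility ``$(X-x_i)^{m_i-\ell}\mid Q$ in $\K[X,\bY]$'', rather than merely a statement about the univariate polynomial $Q(X,\bR_i)$. The monomial-degree argument after shifting to the origin is what makes this rigorous, and it is essentially the same mechanism already exploited in the proofs of Lemmas~\ref{lem:univariate_reformulation_point} and~\ref{lem:reenc_core}.
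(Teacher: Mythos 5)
Your proof is correct and follows essentially the same route as the paper: both use the shift $Q_i(X,\bY)=Q(X+x_i,\bY+\by_i)$ and the observation that condition~\eqref{cond:ii} bounds the $\bY$-degree of every monomial by $\ell<m_i$, so condition~\eqref{cond:iv} forces each monomial's $X$-degree to be at least $m_i-\ell$, giving $(X-x_i)^{m_i-\ell}\mid Q$ and hence $P\mid Q$. The only cosmetic difference is in the verification of condition~\eqref{cond:iv} for $Q^\star=Q/P$, where you phrase the argument via additivity of multiplicities and units in the local ring, while the paper writes out the explicit shifted factorization $Q_i^\star = Q_i/(X^{m_i-\ell}P_i(X))$ with $P_i(0)\neq 0$; these are the same fact stated in two dialects.
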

\begin{proof}
 Assume a solution exists, say $Q$, and let 
 $Q_i(X,\bY) = Q(X+x_i, Y_1+y_{i,1}, \ldots, Y_s+y_{i,s})$ for $i=1,\ldots,n$.
 Every monomial of $Q_i$ has the form $X^h\bY^\bj$ with $h \ge m_i-\ell$, 
 since $|\bj|\le\ell$ by condition~\eqref{cond:ii}
 and $h+|\bj| \ge m_i$ by condition~\eqref{cond:iv}.
 Therefore, if $m_i>\ell$ then $X^{m_i-\ell}$ divides $Q_i$ and, 
 shifting back the coordinates for each $i$, we deduce that $P$ divides $Q$.

 Let us now consider the polynomial $Q^\star = Q/P$ and show that it solves 
 Problem~\ref{pbm:multivariate_interpolation} for the parameters 
 $s,\ell,n,m'_1,\ldots,m'_n,b-d,\bk$.
 First, $Q^\star$ clearly satisfies conditions~\eqref{cond:i} and~\eqref{cond:ii}.
 Furthermore, 
 writing $Q = \sum_\bj Q_\bj(X) \bY^\bj$ and $Q^\star = \sum_\bj Q_j^\star(X) \bY^\bj$,
 we have $Q_\bj^\star = Q_\bj / P$ for all $\bj$, so that
 \[\wdeg_\bk(Q^\star) \;=\; \max_\bj (\deg(Q_\bj) - d 
\:+\: k_1j_1 + \cdots + k_sj_s) \;=\; \wdeg_\bk(Q)-d \;<\; b-d \,,\]
 so that condition~\eqref{cond:iii} holds for $Q^\star$ with $b$ replaced by $b-d$.
 Finally, $Q^\star$ satisfies condition~\eqref{cond:iv} with the $m_i>\ell$ 
 replaced by $m'_i = \ell$:
 writing $Q_i^\star(X,\bY) = Q^\star(X+x_i, Y_1+y_{i,1}, \ldots, Y_s+y_{i,s})$ 
 for $i\in\{1,\ldots,n\}$ such that $m_i>\ell$, we have
 \[ Q_i^\star(X,\bY) = \frac{Q_i(X,\bY)}{X^{m_i-\ell} \, P_i(X)}, \qquad
	 \text{where}\quad P_i(X) = \prod_{h\ne i:\; m_h>\ell}(X+x_i-x_h)^{m_h-\ell};
 \]
 all the monomials of $Q_i(X,\bY) / X^{m_i-\ell}$ have the form $X^h\bY^\bj$
 with $h + |\bj| \ge m_i-(m_i-\ell) = \ell$ and, since $P_i(0) \ne 0$, 
 the same holds for $Q_i^\star(X,\bY)$.

Conversely, let $Q'$ be \emph{any} solution to
Problem~\ref{pbm:multivariate_interpolation} with parameters
$s,\ell,n,m'_1,\ldots,m'_n,b-d,\bk$. Proceeding as in the previous
paragraph, one easily verifies that the product $Q'\, P$ is a
solution to Problem~\ref{pbm:multivariate_interpolation} with 
parameters $s,\ell,n,m_1,\ldots,m_n,b,\bk$.
\end{proof}

\section{On assumption \hyp{3}}
\label{app:H5}

In this appendix, we show the relevance of the assumption
``$k_j < n$ for some $j\in\{1,\cdots,s\}$'' when considering 
Problem~\ref{pbm:multivariate_interpolation}; in particular
when $s=1$ or when we assume that $k_1 = \cdots = k_s =: k$, 
this shows the relevance of the assumption $\hyp{3}: k<n$. 
More precisely, when $k_j \geqslant n$ for every $j$, 
Lemma~\ref{lem:k_vs_n} below gives an explicit solution to 
Problem~\ref{pbm:multivariate_interpolation}.

\begin{lem} \label{lem:k_vs_n}
  Let $s,\ell,n,\bm,b,\bk$ be parameters for 
  Problem~\ref{pbm:multivariate_interpolation}
  and suppose that $k_j \ge n$ for $j=1,\ldots,s$.
  Define $P = \prod_{1\le i\le n} (X-x_i)^{m_i}$ and
  $d = \deg(P) = \sum_{1\le i\le n}m_i$.
  If $b \leqslant d$ then this problem has no solution. 
  Otherwise, a solution is given by the polynomial $P$
  (considered as an element of $\K[X,\bY]$).
\end{lem}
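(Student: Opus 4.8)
The plan is to treat the two cases of the statement in turn. Suppose first that $b>d$ and regard $P$ as an element of $\K[X,\bY]$ not involving any of the variables $Y_1,\dots,Y_s$. I would then check the four output conditions of Problem~\ref{pbm:multivariate_interpolation}. Condition~\eqref{cond:i} holds because the $x_r$ are pairwise distinct, so $P\ne0$; condition~\eqref{cond:ii} is immediate since $\deg_{\bY}(P)=0\le\ell$; and condition~\eqref{cond:iii} holds because $\wdeg_{\bk}(P)=\deg(P)=d<b$. For condition~\eqref{cond:iv}, I would expand $P$ about each $x_r$: since $P(X+x_r)=X^{m_r}\prod_{t\ne r}(X+x_r-x_t)^{m_t}$ and the cofactor has nonzero constant term $\prod_{t\ne r}(x_r-x_t)^{m_t}$, the polynomial $P(X+x_r,\bY+\by_r)=P(X+x_r)$ has no monomial of total degree less than $m_r$, which is exactly the requirement that $(x_r,\by_r)$ be a zero of $P$ of multiplicity at least $m_r$.

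Now suppose $b\le d$, let $Q$ be a solution, and aim to show $Q=0$, which contradicts~\eqref{cond:i}. Write $Q=\sum_{\bj}Q_{\bj}(X)\bY^{\bj}$; by~\eqref{cond:ii} and~\eqref{cond:iii}, every nonzero term satisfies $|\bj|\le\ell$ and $\deg(Q_{\bj})<b-\bj\cdot\bk$. Let $\bR=(R_1,\dots,R_s)$ be the Lagrange interpolants of~\eqref{eqn:def_lagrange}, so $\deg(R_j)<n$, and recall $m=\max_r m_r$. The central estimate is that for every $\bi\in\Znn^s$,
\[
\deg\!\big(Q^{[\bi]}(X,\bR)\big)\;\le\;b-1-n|\bi| ,
\]
which I would obtain by bounding each term of $Q^{[\bi]}(X,\bR)=\sum_{\bj\ge\bi}\binom{\bj}{\bi}Q_{\bj}(X)\,\bR^{\bj-\bi}$, using $\deg(\bR^{\bj-\bi})\le(n-1)(|\bj|-|\bi|)$, then $\bj\cdot\bk\ge n|\bj|$ (because $k_1,\dots,k_s\ge n$), and finally $|\bj|\ge|\bi|$. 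I would then split on $|\bi|$. If $|\bi|\ge m$, then since $d=\sum_{r=1}^{n}m_r\le nm$ the bound is at most $d-1-nm\le-1<0$, so $Q^{[\bi]}(X,\bR)=0$. If $|\bi|<m$, then Lemma~\ref{lem:univariate_reformulation} (which uses that the $x_r$ are distinct) shows that $P_{\bi}=\prod_{r:\,m_r>|\bi|}(X-x_r)^{m_r-|\bi|}$ divides $Q^{[\bi]}(X,\bR)$; since $\deg(P_{\bi})=\sum_{r:\,m_r>|\bi|}(m_r-|\bi|)\ge\sum_{r=1}^{n}(m_r-|\bi|)=d-n|\bi|\ge b-n|\bi|>\deg\!\big(Q^{[\bi]}(X,\bR)\big)$, divisibility again forces $Q^{[\bi]}(X,\bR)=0$. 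Hence $Q^{[\bi]}(X,\bR)=0$ for every $\bi$, and the Taylor expansion $Q(X,\bY)=\sum_{\bi}Q^{[\bi]}(X,\bR)\,(\bY-\bR)^{\bi}$ yields $Q=0$, the desired contradiction.

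The $b>d$ case is routine. The delicate point is the $b\le d$ case, where the crux is to line up the inequalities so that, for small $|\bi|$, the degree bound $\deg(Q^{[\bi]}(X,\bR))\le b-1-n|\bi|$ beats the degree $\deg(P_{\bi})\ge d-n|\bi|$ of the modulus supplied by Lemma~\ref{lem:univariate_reformulation}, while for large $|\bi|$ the same bound is already negative because $d\le nm$. Once $k_j\ge n$, $\deg(R_j)<n$, $b\le d$, $d\le nm$, and $\deg(P_{\bi})\ge d-n|\bi|$ are all in hand, the conclusion $Q=0$ follows immediately from the Taylor identity recorded before the lemma.
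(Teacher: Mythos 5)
Your proof is correct but takes a genuinely different route from the paper's. The paper argues the contrapositive directly and elementarily: if $Q$ is a solution, it sets $d_Y = \deg_{\bY}Q$ and splits on whether $d_Y \ge m$ or $d_Y < m$. In the first case the weighted-degree condition immediately gives $b > \wdeg_{\bk}(Q) \ge d_Y\min_j k_j \ge mn \ge d$; in the second it borrows the coordinate-shift argument from Lemma~\ref{lem:mell} to factor $Q = P^\star Q^\star$ with $P^\star = \prod_{m_i>d_Y}(X-x_i)^{m_i-d_Y}$, and then a short chain of inequalities on $\deg(P^\star)+\wdeg_{\bk}(Q^\star)$ yields $b>d$. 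Your argument instead invokes the full univariate reduction machinery of Section~\ref{sec:red} -- Hasse derivatives, the Lagrange interpolants $\bR$, the Taylor identity $Q=\sum_{\bi}Q^{[\bi]}(X,\bR)(\bY-\bR)^{\bi}$, and Lemma~\ref{lem:univariate_reformulation} -- and kills every Taylor coefficient via a degree count: for $|\bi|\ge m$ the bound $\deg(Q^{[\bi]}(X,\bR)) \le b-1-n|\bi|$ is already negative, while for $|\bi|<m$ it falls strictly below $\deg(P_{\bi})\ge d-n|\bi|$, so divisibility forces vanishing. Both approaches are sound; the paper's has the virtue of being self-contained and not depending on the reduction to Problem~\ref{pbm:poly_approx}, whereas yours neatly illustrates how the modular reformulation and degree bookkeeping dovetail, at the cost of importing heavier machinery. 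The only point worth making explicit in your write-up is that the bound $\deg(Q_{\bj}) < b-\bj\cdot\bk$ is understood with the convention $\deg 0=-\infty$ for those $\bj\notin\Gamma$, so the term-by-term estimate applies uniformly.
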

\begin{proof}
If $b > d$ then it is easily checked that $P$ 
satisfies conditions~\eqref{cond:i}--\eqref{cond:iv}
and thus solves Problem~\ref{pbm:multivariate_interpolation}.
Now, to conclude the proof, let us show that if 
Problem~\ref{pbm:multivariate_interpolation} admits a 
solution $Q$, then $b > d$ must hold.
Let $d_Y = \deg_\bY Q$. If $d_Y \geqslant m = \max_i m_i$, then the 
weighted-degree condition~\eqref{cond:iii} 
gives $b > \wdeg_\bk(Q)\ge d_Y (\min_j k_j) \ge mn \ge d$.
Let us finally assume $d_Y < m$. Following the proof of Lemma~\ref{lem:mell}, 
we can write $Q = P^\star \, Q^\star$ where 
$P^\star = \prod_{1\le i\le n: \; m_i > d_Y} (X-x_i)^{m_i-d_Y}$,
for some $Q^\star$ in $\K[X,\bY]$ such that $\deg_\bY Q^\star = d_Y$.
Then, the weighted-degree condition gives $b > 
\sum_{1\le i\le n:\;m_i>d_Y}(m_i-d_Y) + 
\wdeg_\bk(Q^\star) \ge \sum_{1\le i\le n:\;m_i>d_Y}(m_i-d_Y) + d_Yn
\ge \sum_{1\le i\le n:\;m_i>d_Y}m_i + \sum_{1\le i\le n:\; m_i\le d_Y} d_Y
\ge d$.
\end{proof}

\section{The lattice-based approach}
\label{app:lattice_comparison}

In this appendix, we summarize the approach for solving
Problem~\ref{pbm:multivariate_interpolation} via
the computation of a reduced polynomial lattice basis;
this helps us to compare the cost bounds for this approach 
with the cost bound we give in Theorem~\ref{thm:complexity}.
Here, $s\ge 1$ and for simplicity, we assume that 
$k := k_1 = \cdots = k_s$ as in the list-decoding
of folded Reed-Solomon codes. Besides, we make the 
assumptions~\hyp{1}, \hyp{2}, \hyp{3} and \hyp{4} 
as presented in the introduction.
Two main lattice constructions exist in the literature;
following~\cite[\S4.5]{PB08}, we
present them directly in the case $s \geqslant 1$, and then give the cost
bound that can be obtained using polynomial lattice reduction to find a
short vector in the lattice. 

Let $G=\prod_{1\le r\le n}(X-x_r)$ and $R_1,\ldots,R_s\in \K[X]$ 
such that $\deg(R_j) < n$ and $R_j(x_i) = y_{i,j}$,
for every $j\in\{1,\ldots,s\}$ and $i\in\{1,\ldots,n\}$.
In the first construction, the lattice is generated by
the polynomials
\begin{align*}
  & \left \{G^i\prod_{r = 1}^s (Y_r - R_r)^{j_r}\ \middle |\ i >0,\ j_1,\dots,j_s \geqslant 0,\ i+ \sbj = m\right \}\\
  \bigcup\quad 	& \left \{\prod_{r = 1}^s(Y_r - R_r)^{j_r} Y_r^{J_r}\ \middle |\ j_1,\dots,j_s \geqslant 0, \
  J_1,\dots,J_s \geqslant 0, \
  \sbj = m,\ \sbJ \le   \ell-m \right \};
\end{align*}
this construction may be called \emph{banded} 
due to the shape of the generators above when $s=1$. 
In the second construction, which may be called \emph{triangular},
the lattice is generated by the polynomials
\begin{align*}
  & \left \{G^i\prod_{r = 1}^s (Y_r - R_{r})^{j_r}\ \middle |\ i>0,\ j_1,\dots,j_s \geqslant 0,\ i+ \sbj = m \right \}\\
  \bigcup\quad & \left  \{\prod_{r = 1}^s(Y_r - R_{r})^{j_r}\ \middle |\ j_1,\dots,j_s\geqslant 0,\ m \le   \sbj \le   \ell \right \}.
\end{align*}

When $s=1$, the first construction is used
in~\cite[Remark~16]{BeeBra10} and~\cite{LeeOSul08,CohHen11a}, and the
second one is used in~\cite{BeeBra10,Bernstein11}; when $s\geqslant 1$,
the former can be found in~\cite{PB08} while the latter appears 
in~\cite{KB10,CohHen12}.  In both cases the actual lattice
bases are the coefficient vectors (in~$\bY$) of the polynomials
$h(X,X^k Y_1,\dots,X^k Y_s)$, for $h$ in either of the sets above;
these $X^k$ are introduced to account for the weighted-degree
condition~(\ref{cond:iii}) in Problem~\ref{pbm:multivariate_interpolation}.
		
In this context, for a lattice of dimension $L$ given by generators 
of degree at most $d$, the algorithm in~\cite{GiJeVi03} computes 
a shortest vector in the lattice in expected time 
$\mathcal{O}(L^\omega \M(d) \log(Ld))$, as detailed below.
For a deterministic solution, one can use the
algorithm of Gupta, Sarkar, Storjohann, and Valeriote~\cite{GuSaStVa12},
whose cost is in $\mathcal{O}(L^\omega \M(d) ((\log(L))^2 + \log(d)) )$.)

For the banded basis, its dimension $L_B$ and degree $d_B$ can be taken as follows:
\[L_B = {{s+m-1} \choose {s}}+ {{s+m-1} \choose {s-1}}{{s+\ell-m} \choose {s}} 
\quad\text{and}\quad d_B = \mathcal{O}(mn).\] 
The dimension formula is given explicitly in~\cite[p.~75]{PB08},
while the degree bound is easily obtained when assuming that the 
parameters $m,n,b$ of Problem~\ref{pbm:multivariate_interpolation}
satisfy $b \leqslant mn$; such an assumption is not restrictive, 
since when $b>mn$ the polynomial $Q = G^m$ is a trivial solution.
In this case, the arithmetic cost for constructing the lattice matrix
with the given generators is 
$\mathcal{O}\left(\binom{s+m}{s}^2 \,\M(mn) \right)$, which is
$\mathcal{O}(L_B^2 \,\M(mn))$.
Similarly, in the triangular case,
\[L_T = {{s+\ell}\choose {s}} \quad\text{and}\quad d_T = \mathcal{O}(\ell n),\]
and the cost for constructing the lattice matrix is
$\mathcal{O}(L_T^2 \,\M(\ell n)) $.

Under our assumption $\hyp{1}:  m \le \ell$, we always have $L_B \geqslant L_T$
and $d_B \le d_T$; when $s=1$, we get $L_B=L_T = \ell+1$.  

To bound the cost of reducing these two polynomial lattice bases, 
recall that the algorithm of~\cite{GiJeVi03} works as follows.
Given a basis of a lattice of dimension $L$ and degree $d$,
if $x_0\in\K$ is given such that the determinant of the lattice 
does not vanish at $X=x_0$, then the basis will be reduced 
deterministically using $\mathcal{O}(L^\omega \M(d) \log(Ld))$
operations in $\K$.
Otherwise, such an $x_0$ is picked at random in $\K$ or, if 
the cardinality $|\K|$ is too small to ensure success
with probability at least $1/2$, in a field extension $\extK$ of $\K$.
In general, $\extK$ should be taken of degree $\mathcal{O}(\log(Ld))$
over $\K$; however, here degree 2 will suffice.
Indeed, following~\cite[p.~206]{Bernstein11} 
we note that for the two lattice constructions above
the determinants have the special form $G(X)^{i_1}X^{i_2}$ 
for some $i_1,i_2\in\Znn$.
Since $G(X) = (X-x_1)\cdots (X-x_n)$ with $x_1,\ldots,x_n \in \K$ pairwise distinct, 
$x_0$ can be found deterministically in time $\mathcal{O}(\M(n)\log(n))$ as soon as $|\K| > n+1$,
by evaluating $G$ at $n+1$ arbitrary elements of $\K$;
else, $|\K|$ is either $n$ or $n+1$, and $x_0$ can be found in an extension $\extK$ of $\K$ of degree $2$.
Such an extension can be computed with probability of success at least $1/2$ in time 
$\mathcal{O}(\log(n))$ (see for example~\cite[\S14.9]{vzGathen03}).
Then, with the algorithm of~\cite{GiJeVi03} we obtain a reduced
basis over $\extK[X]$ using $\mathcal{O}(L^\omega \M(d) \log(Ld))$ 
operations in $\extK$; since the degree of $\extK$ over $\K$ is $\mathcal{O}(1)$, this is $\mathcal{O}(L^\omega \M(d) \log(Ld))$ operations in $\K$.
Eventually, one can use~\cite[Theorems 13 and 20]{SarSto11} to transform
this basis into a reduced basis over $\K[X]$ without impacting 
the cost bound; or more directly, since here we are only looking for a
sufficiently short vector in the lattice, this vector can be extracted
from a shortest vector in the reduced basis over $\extK[X]$.
Therefore, by applying the algorithm of~\cite{GiJeVi03} to 
reduce the banded basis and triangular basis shown above, 
we will always obtain a polynomial $Q$ solution to 
Problem~\ref{pbm:multivariate_interpolation} (assuming one exists)
in expected time
\[ \mathcal{O}(L_B^\omega \M(mn) \log(L_B m n)) 
\quad\text{and}\quad 
\mathcal{O}(L_T^\omega \M(\ell n) \log(L_T \ell n)),
\] 
respectively.
For $s=1$, 
thanks to the assumption $\hyp{1}$, 
these costs become $\mathcal{O}(\ell^\omega \M(mn) \log(\ell n))$
and $\mathcal{O}(\ell^\omega \M(\ell n) \log(\ell n))$, respectively,
and are those reported in~\cite{CohHen11a,Bernstein11}.
For $s > 1$, the costs reported in~\cite{PB08,KB10} are worse, 
but only because the short vector algorithms used in those 
references are inferior to the ones we refer to; no cost bound is 
explicitly given in~\cite{CohHen12}.
The result in Theorem~\ref{thm:complexity} is an improvement
over those of both~\cite{PB08} and \cite{KB10}.
To see this, remark that the cost in our theorem is quasi-linear 
in ${{s+\ell}\choose {s}}^{\omega-1} {{s+m}\choose {s+1}} n$, 
whereas the costs in~\cite{PB08,KB10} are at least 
${{s+\ell}\choose {s}}^\omega m n$; a simplification proves our claim.

\bigskip \noindent{\bf Acknowledgments.} 
Muhammad F. I. Chowdhury and {\'E}ric Schost were supported
by NSERC and by the Canada Research Chairs program. 
Vincent Neiger is supported by the international mobility 
grant Explo'ra Doc from R\'egion Rh\^one-Alpes.
We thank the three reviewers for their helpful comments on 
a preliminary version of this work~\cite{CJNSV12},
and especially the second one
for suggesting a shorter proof of
Lemma~\ref{lem:fast_generator_computation}.

\bibliographystyle{plain} 
\bibliography{biblio}

\end{document}